\tikzstyle{basic}=[fill=white, draw=black, shape=circle]
\tikzstyle{square}=[fill=white, draw=black, shape=rectangle]
\tikzstyle{big dashed}=[fill=white, draw=black, shape=circle, minimum width=1cm, dashed]
\tikzstyle{vertical ellipse dashed}=[fill=none, draw=blue, minimum width=0.75cm, minimum height=3cm, ellipse, dashed, tikzit shape=rectangle, tikzit draw=blue, tikzit fill=white]
\tikzstyle{small vertical ellipse dashed}=[fill=none, draw=blue, shape=circle, tikzit fill=white, tikzit draw=blue, dashed, minimum width=0.75cm, minimum height=1.5cm, tikzit shape=rectangle, ellipse]
\tikzstyle{tiny vertical ellipse dashed}=[fill=none, draw=blue, shape=circle, tikzit fill=white, ellipse, dashed, minimum width=0.75cm, minimum height=1cm, tikzit shape=rectangle]
\tikzstyle{red}=[fill=red, draw=black, shape=circle]
\tikzstyle{green}=[fill={rgb,255: red,0; green,128; blue,128}, draw=black, shape=circle]
\tikzstyle{blue}=[fill=blue, draw=black, shape=circle]
\tikzstyle{new_data}=[fill={rgb,255: red,216; green,48; blue,52}, draw={rgb,255: red,216; green,48; blue,52}, shape=circle, minimum width=0.75cm]
\tikzstyle{pale green}=[fill={rgb,255: red,173; green,231; blue,0}, draw=black, shape=circle, minimum width=1cm]
\tikzstyle{horizontal ellipse dashed}=[fill=white, draw=black, tikzit draw=magenta, tikzit shape=rectangle, minimum width=3cm, minimum height=0.75cm, ellipse, dashed]
\tikzstyle{minsize}=[fill={rgb,255: red,189; green,189; blue,189}, draw=none, shape=circle, minimum size=20 pt]
\tikzstyle{horizontal ellipse green}=[fill={rgb,255: red,191; green,255; blue,0}, draw=black, tikzit draw={rgb,255: red,191; green,255; blue,0}, tikzit shape=rectangle, minimum width=3cm, minimum height=0.75cm, ellipse, dashed]
\tikzstyle{horizontal ellipse blue}=[fill={rgb,255: red,107; green,203; blue,255}, draw=black, tikzit draw=blue, tikzit shape=rectangle, minimum width=3cm, minimum height=0.75cm, ellipse, dashed]
\tikzstyle{smallblack}=[fill=black, draw=black, shape=circle, inner sep=0 pt, minimum size=5 pt]
\tikzstyle{smallSquare}=[fill=white, draw=black, shape=rectangle, inner sep=0 pt, minimum size=6 pt]
\tikzstyle{smallCircle}=[fill={rgb,255: red,158; green,202; blue,225}, draw={rgb,255: red,49; green,130; blue,189}, shape=circle, inner sep=0 pt, minimum size=17 pt]
\tikzstyle{smallCircle_white}=[fill={rgb,255: red,49; green,130; blue,189}, draw={rgb,255: red,49; green,130; blue,189}, shape=circle, inner sep=0 pt, minimum size=17 pt]
\tikzstyle{big vertical circle dashed}=[fill=none, draw=red, shape=circle, tikzit shape=circle, dashed, minimum width=1.0cm, line width=1.5pt]
\tikzstyle{z_add}=[fill={rgb,255: red,252; green,146; blue,114}, draw={rgb,255: red,222; green,45; blue,38}, shape=circle, inner sep=0 pt, minimum size=17 pt]
\tikzstyle{z_add_white}=[fill={rgb,255: red,222; green,45; blue,38}, draw={rgb,255: red,222; green,45; blue,38}, shape=circle, inner sep=0 pt, minimum size=17 pt]
\tikzstyle{q_recovered}=[fill={rgb,255: red,255; green,115; blue,182}, draw={rgb,255: red,255; green,115; blue,182}, shape=circle, inner sep=0pt, minimum size=14pt]
\tikzstyle{q_recovered_white}=[fill={rgb,255: red,255; green,115; blue,182}, draw={rgb,255: red,255; green,115; blue,182}, shape=circle, inner sep=0pt, minimum size=14pt]
\tikzstyle{small green}=[fill={rgb,255: red,161; green,217; blue,155}, draw={rgb,255: red,49; green,163; blue,84}, shape=circle, inner sep=0pt, minimum size=17pt, opacity=1]
\tikzstyle{small green_white}=[fill={rgb,255: red,49; green,163; blue,84}, draw={rgb,255: red,49; green,163; blue,84}, shape=circle, inner sep=0pt, minimum size=17pt, opacity=1]
\tikzstyle{med red}=[fill=red, draw=red, shape=circle, inner sep=0pt, minimum size=5pt]
\tikzstyle{med blue}=[fill=blue, draw=blue, shape=circle, inner sep=0pt, minimum size=5pt]
\tikzstyle{med green}=[fill={rgb,255: red,229; green,245; blue,224}, draw={rgb,255: red,229; green,245; blue,224}, shape=circle, inner sep=0pt, minimum size=5pt]
\tikzstyle{med green_white}=[fill={rgb,255: red,229; green,245; blue,224}, draw={rgb,255: red,229; green,245; blue,224}, shape=circle, inner sep=0pt, minimum size=5pt]
\tikzstyle{P}=[fill={rgb,255: red,255; green,123; blue,123}, draw=black, shape=circle, minimum width=3cm]
\tikzstyle{largeblack}=[fill=black, draw=black, shape=circle, minimum size=10pt]
\tikzstyle{new style 0}=[fill=none, draw=red, shape=circle]
\tikzstyle{q_active}=[fill={rgb,255: red,158; green,202; blue,225}, draw={rgb,255: red,49; green,130; blue,189}, shape=circle, minimum size=17pt, inner sep=0 pt, dashed, line width=1pt]
\tikzstyle{q_active_white}=[fill={rgb,255: red,117; green,107; blue,177}, draw={rgb,255: red,117; green,107; blue,177}, shape=circle, minimum size=17pt, inner sep=0 pt, line width=1pt]
\tikzstyle{red_mu_change}=[fill={rgb,255: red,222; green,45; blue,38}, draw={rgb,255: red,222; green,45; blue,38}, shape=circle, minimum size=18pt]
\tikzstyle{leaf node}=[fill=none, draw=none, shape=circle, minimum size=26 pt]
\tikzstyle{leaf node z}=[fill={rgb,255: red,252; green,146; blue,114}, draw={rgb,255: red,222; green,45; blue,38}, shape=circle, minimum size=26 pt]
\tikzstyle{directed}=[->, line width=1pt]
\tikzstyle{undirected}=[-, line width=1pt]
\tikzstyle{directed red}=[draw=red, ->, line width=3pt]
\tikzstyle{directed green}=[draw={rgb,255: red,44; green,162; blue,95}, ->, line width=3.5pt, fill={rgb,255: red,44; green,162; blue,95}]
\tikzstyle{directed blue}=[draw={rgb,255: red,49; green,130; blue,189}, ->, line width=3.5pt, fill={rgb,255: red,49; green,130; blue,189}]
\tikzstyle{directed purple}=[draw={rgb,255: red,128; green,0; blue,128}, ->, line width=1pt]
\tikzstyle{undirected red}=[-, draw=red, line width=1pt]
\tikzstyle{undirected green}=[-, draw={rgb,255: red,0; green,107; blue,61}, line width=1pt]
\tikzstyle{undirected blue}=[-, draw=blue, line width=1pt]
\tikzstyle{undirected purple}=[-, draw={rgb,255: red,128; green,0; blue,128}, line width=1pt]
\tikzstyle{undirected dashed}=[-, line width=1pt, dashed]
\tikzstyle{orange dashed}=[-, draw={rgb,255: red,255; green,128; blue,0}, dashed, line width=1.5pt]
\tikzstyle{directed dash}=[->, dashed, line width=1pt]
\tikzstyle{blue dashed}=[draw=blue, dashed, line width=3pt, ->]
\tikzstyle{green dashed}=[draw={rgb,255: red,49; green,163; blue,84}, dashed, line width=3pt, ->]
\tikzstyle{blue filled}=[-, fill={rgb,255: red,222; green,235; blue,247}, draw=none, line width=1pt, opacity=0.5, tikzit fill=white]
\tikzstyle{red filled}=[-, fill={red!20}, line width=1pt, draw=red, opacity=0.5, tikzit fill=white]
\tikzstyle{green filled}=[-, line width=1pt, draw=none, opacity=0.5, tikzit fill=white, fill={rgb,255: red,229; green,245; blue,224}]
\tikzstyle{orange filled}=[-, fill={orange!20}, draw=orange, line width=1pt, opacity=0.5, tikzit fill=white]
\tikzstyle{undirected dashed}=[-, draw=black, dashed, line width=1pt]
\tikzstyle{thick}=[-, line width=3pt]
\tikzstyle{red dashed}=[line width=2pt, dashed, draw={rgb,255: red,222; green,45; blue,38}, ->]
\tikzstyle{white filled}=[-, draw=white, fill=white]
\tikzstyle{red shade}=[-, fill={rgb,255: red,254; green,224; blue,210}, draw={rgb,255: red,254; green,224; blue,210}]
\newcommand{\alglinelabel}{%
  \addtocounter{ALC@line}{-1}% Reduce line counter by 1
  \refstepcounter{ALC@line}% Increment line counter with reference capability
  \label% Regular \label
}
\numberwithin{equation}{section}
\definecolor{newblue}{rgb}{0.2,0.2,0.6} 
\newcommand{\KDE}{\textsf{KDE}}
\newcommand{\tree}{\mathcal{T}}
\newcommand{\graphk}{\mathsf{K}}
\newcommand{\twopartdefow}[3]
{
	\left\{
		\begin{array}{ll}
			#1 & \mbox{if } #2 \\
			#3 & \mbox{otherwise.}
		\end{array}
	\right.
}
\newcommand{\steinar}[1]{\PackageWarning{}{Comment: #1}{\color{orange} [{Steinar:} #1]}}
\newcommand{\lp}{\left (}
\newcommand{\rp}{\right )}
\newtheorem{theorem}{Theorem}[section]
\newtheorem{lemma}[theorem]{Lemma}
\newtheorem*{lemma*}{Lemma}
\newtheorem*{theorem*}{Theorem}
\newtheorem{definition}[theorem]{Definition}
\newtheorem{remark}{Remark}
\newtheorem{assumption}{Assumption}[section]
\newcommand{\vol}{\mathrm{vol}}
\newcommand{\Proo}[1]{\mathbb{P}[\,#1\,]}
\newcommand{\R}{\mathbb{R}}
\renewcommand{\deg}{\mathrm{deg}}
\renewcommand{\leq}{\leqslant}
\renewcommand{\geq}{\geqslant}
\renewcommand{\le}{\leqslant}
\renewcommand{\ge}{\geqslant}
\renewcommand{\tilde}{\widetilde}
\renewcommand{\epsilon}{\varepsilon}
\newcommand{\wt}[1]{\widetilde{#1}}
\definecolor{newred}{rgb}{1.0, 0.0, 0.22}
\renewcommand{\theALC@line}{\arabic{ALC@line}}
\renewcommand{\theHALC@line}{\thealgorithm.\arabic{ALC@line}}
\icmltitlerunning{Dynamic Similarity Graph Construction with Kernel Density Estimation}
\begin{document}

\twocolumn[
\icmltitle{Dynamic Similarity Graph Construction \\ with Kernel Density Estimation}

% It is OKAY to include author information, even for blind
% submissions: the style file will automatically remove it for you
% unless you've provided the [accepted] option to the icml2025
% package.

% List of affiliations: The first argument should be a (short)
% identifier you will use later to specify author affiliations
% Academic affiliations should list Department, University, City, Region, Country
% Industry affiliations should list Company, City, Region, Country

% You can specify symbols, otherwise they are numbered in order.
% Ideally, you should not use this facility. Affiliations will be numbered
% in order of appearance and this is the preferred way.
%\icmlsetsymbol{equal}{*}

\begin{icmlauthorlist}
\icmlauthor{Steinar Laenen}{ed}
\icmlauthor{Peter Macgregor}{st}
\icmlauthor{He Sun}{ed}
\end{icmlauthorlist}

 \icmlaffiliation{ed}{University of Edinburgh, United Kingdom}
 
\icmlaffiliation{st}{University of St Andrews, United Kingdom}

\icmlcorrespondingauthor{He Sun}{h.sun@ed.ac.uk}

\icmlkeywords{Machine Learning, ICML}

\vskip 0.3in
]

% this must go after the closing bracket ] following \twocolumn[ ...

% This command actually creates the footnote in the first column
% listing the affiliations and the copyright notice.
% The command takes one argument, which is text to display at the start of the footnote.
% The \icmlEqualContribution command is standard text for equal contribution.
% Remove it (just {}) if you do not need this facility.

\printAffiliationsAndNotice{}  % leave blank if no need to mention equal contribution
%\printAffiliationsAndNotice{\icmlEqualContribution} % otherwise use the standard text.

\begin{abstract}
  In the kernel density estimation (\textsc{\textsf{KDE}}) problem, we are given a set  $X$ of data points in $\mathbb{R}^d$, a kernel function $k: \mathbb{R}^d \times \mathbb{R}^d \rightarrow \mathbb{R}$, and a query point $\mathbf{q} \in \mathbb{R}^d$, and the objective is to quickly output an estimate of $\sum_{\mathbf{x} \in X} k(\mathbf{q}, \mathbf{x})$.
In this paper, we consider $\textsf{KDE}$ in the dynamic setting, and introduce a data structure that efficiently maintains the \textit{estimates} for a set of query points as data points are added to $X$ over time.
Based on this, we design a dynamic data structure that maintains a sparse approximation of the fully connected similarity graph on 
$X$, and develop a fast dynamic spectral clustering algorithm.
We further evaluate the effectiveness of our algorithms on both synthetic and real-world datasets.
\end{abstract}

\section{Introduction}\label{sec:introduction}

 Given a  set $X=\{\mathbf{x}_1,\ldots,\mathbf{x}_n\}$ of data points, a set $Q=\{\mathbf{q}_1,\ldots,\mathbf{q}_m\}$ of query points, and a kernel function $k: \mathbb{R}^d \times \mathbb{R}^d \rightarrow \mathbb{R}_{\geq 0}$, the \textsf{KDE} problem is to quickly approximate  
$
 	\mu_{\mathbf{q}} \triangleq \sum_{\mathbf{x}_i\in X}k(\mathbf{q},\mathbf{x}_i)$
for every $\mathbf{q}\in Q$.
As a basic question in computer science, this problem has been actively studied since the 1990s~\citep{LJ} and has comprehensive applications in machine learning and statistics~\citep{bakshi2023subquadratic, genovese2014nonparametric, scholkopf2018learning, schubert2014generalized}.  

In this paper we first study the \textsf{KDE} problem in the dynamic setting, where both the sets of data and query points change over time.  The objective is to dynamically update all the $\textsf{KDE}$ estimates $\mu_{\mathbf{q}}$ for every $\mathbf{q} \in Q$ as data points are added to $X$. 
Building on the framework for static \textsf{KDE} developed by  \citet{charikarKDE},
our algorithm processes: (i) insertions and deletions of query points, and (ii) insertions of data points in $\epsilon^{-2}\cdot n^{0.25 + o(1)}$ time for the Gaussian kernel\footnote{Our algorithm generalises to arbitrary kernel functions, with different powers of $n$ in the update time.}.
In particular, our algorithm maintains $(1 \pm \epsilon)$-approximate estimates of the kernel densities for \emph{every} query point $\mathbf{q} \in Q$ throughout the sequence of data point insertions; see
Theorem~\ref{thm:incremental_dynamic_kde} for the formal statement.  Although it is known that \textsc{KDE} \emph{estimators} can be maintained dynamically~\cite{liang2022dynamic},  to the best of our knowledge, this represents the first dynamic algorithm for the \textsf{KDE} problem that efficiently maintains query \emph{estimates} under data point insertions.

Among its many applications, an efficient algorithm for the   \textsf{KDE} problem on  $X=Q\subset \mathbb{R}^d$
can be used to speed up the construction of a similarity graph for $X$, which is  a key component  used in many graph-based clustering algorithms~(e.g., spectral clustering). These clustering algorithms usually have superior performance over traditional geometric clustering techniques~(e.g., $k$-means)~\citep{nips/NgJW01,sac/Luxburg07}, but in general lack a dynamic implementation.
Our second contribution addresses this challenge, and designs a dynamic algorithm that maintains a similarity graph for the dataset $X$ with expected amortised update time $n^{0.25 + o(1)}$ when new data points are added; see Theorem~\ref{thm:dynamic_cps} for the formal statement.  Our algorithm guarantees that, when the set $X_t$ of data points  at any time $t$ has a cluster structure, our dynamically maintained graph will have the same cluster structure as the fully connected graph on $X_t$; hence a downstream graph clustering algorithm will perform well.

Our  algorithms are experimentally compared against several baseline algorithms on 8 datasets, and these   experiments   confirm the sub-linear update time proven in theory. These experiments further demonstrate that
\begin{itemize}\itemsep 0.3pt
\item our dynamic \textsf{KDE} algorithm scales better to large datasets than several baselines, including the fast static \textsf{KDE} algorithm in \cite{charikarKDE}, and  
\item our dynamic similarity graph construction algorithm runs faster   than the fully-connected and $k$-nearest neighbour similarity graph baselines, and produces comparable clustering results when applying spectral clustering.
\end{itemize}

\paragraph{Related Work.}
Efficient algorithms for the kernel density estimation problem in low dimensions have been known for over two decades~\citep{gray2003nonparametric, LJ, ifgt}.
For the regime of $d = \Omega(\log n )$, there has been some recent progress to develop sub-linear query time algorithms~\citep{charikarKDE, charikar2017hashing, charikar2019multi} based on locality-sensitive hashing~\citep{andoni2008near, datar2004locality} and importance sampling using algorithms for computing approximate nearest neighbours~\citep{backurs2018efficient, karppa2022deann}.
Many sampling based methods incur a factor of $\varepsilon^{-2}$ in their complexity due to concentration bounds like Chebychev's inequality, which is costly for high accuracy (small $\varepsilon$). Techniques based on discrepancy theory and coresets have been developed to mitigate this, achieving $\varepsilon^{-1}$ dependence~\cite{phillips2020near, charikar2024quasi}.
There has also been recent work studying the approximation of kernel similarity graphs in the static setting~\citep{macgregor2024fastkde, quanrud2023spectral}.

Dynamic kernel density estimation has been studied in some restricted settings.
\citet{huang2024dynamic} give a dynamic variant of the fast Gauss transform~\citep{LJ} for low-dimensional data.
Given an initial dataset $X$, \citet{liang2022dynamic} give an efficient algorithm for maintaining a \textsf{KDE} estimator in which some data point $\mathbf{x}_i$ is replaced with a new point $\mathbf{z}$.
In the same setting, \citet{deng2022dynamic} present a dynamic data structure that maintains a spectral sparsifier of the kernel similarity graph for smooth kernels.

Our work also relates to a number of works on incremental spectral clustering~\citep{DHANJAL2014440, klopotek2024eigenvalue, LS24, MartinLV18,Ning, sun2020lifelong, zhou2019incremental}. However, these works usually assume a fixed vertex set~\citep{DHANJAL2014440, MartinLV18, Ning}, are limited to only handling  single edge updates~\citep{LS24}, or do not have  theoretical guarantees on their algorithm performance~\cite{klopotek2024eigenvalue, sun2020lifelong, zhou2019incremental}.

\section{Preliminaries}\label{sec:preliminaries}

 This section lists several facts we   use in the analysis,  and is organised as follows: Section~\ref{sec:lsh} gives a brief introduction to locality sensitive hashing, which we apply  in Section~\ref{sec:KDE} to discuss fast algorithms for Kernel Density Estimation. We informally define an approximate similarity graph in Section~\ref{sec:asg}.

\subsection{Locality Sensitive Hashing} \label{sec:lsh}
Given data $\mathbf{x}_1, \ldots, \mathbf{x}_n \in \R^d$, 
the goal of Euclidean locality sensitive hashing~(LSH) is to preprocess the data in a way such that, given a query point $\mathbf{y} \in \R^d$, we are able to quickly recover the data points close to $\mathbf{y}$. Informally speaking, a family $\mathcal{H}$ of hash functions $H: \R^d \rightarrow \mathbb{Z}$ is \emph{locality sensitive} if
there are values $r \in \R$, $c > 1$, and $p_1 > p_2$,  such that it holds
for $H$ drawn at random from $\mathcal{H}$ that
$
    \Proo{H(\mathbf{u}) = H(\mathbf{v})} \geq p_1$
when $\|\mathbf{u} - \mathbf{v}\| \leq r$, and
$
    \Proo{H(\mathbf{u}) = H(\mathbf{v})} \leq p_2$
when $\|\mathbf{u} - \mathbf{v}\| \geq c \cdot r$.
That is, the collision probability of close points is higher than that of far points.
\citet{datar2004locality} propose a locality sensitive hash family based on random projections, and their technique is further analysed by~\citet{andoni2008near}:

  \begin{lemma}[\cite{andoni2008near}]\label{lem:andoni-indyk}
 	Let $\mathbf{p}$ and $\mathbf{q}$ be any pair of points in $\mathbb{R}^d$. Then, for any fixed $r>0$, there exists a hash family $\mathcal{H}$ such that, if \[p_{\mathrm{near}}\triangleq p_1(r)\triangleq \mathbb{P}_{H\sim\mathcal{H}}[H(\mathbf{p})=H(\mathbf{q}) \mid ||\mathbf{p}-\mathbf{q}||\le r]\] and \[p_{\mathrm{far}}\triangleq p_2(r,c)\triangleq \mathbb{P}_{H\sim\mathcal{H}}[H(\mathbf{p})=H(\mathbf{q}) \mid ||\mathbf{p}-\mathbf{q}||\ge cr]\] for any $c\ge1$, then $$\rho\triangleq \frac{\log 1/p_{\mathrm{near}}}{\log 1/p_{\mathrm{far}}}\le \frac{1}{c^2}+O\left(\frac{\log t}{t^{1/2}}\right),$$ for some $t$, where $p_{\mathrm{near}}\ge \mathrm{e}^{-O(\sqrt{t})}$ and each evaluation takes $d t^{O(t)}$ time.
 \end{lemma}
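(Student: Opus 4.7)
The plan is to reproduce the two-stage construction of \citet{andoni2008near}: reduce to a low-dimensional instance via random Gaussian projection, and then hash via a sequence of random balls in the reduced space. First, I would project the input points from $\mathbb{R}^d$ to $\mathbb{R}^t$ using a matrix drawn from a standard Gaussian ensemble, where $t$ is the parameter appearing in the statement. A Johnson--Lindenstrauss style concentration argument shows that all pairwise distances are preserved up to a factor of $1 \pm O(1/\sqrt{t})$ with high probability, reducing the task to constructing an LSH family on $\mathbb{R}^t$. This projection accounts for the factor $d$ in the evaluation time.

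In the second stage, I would fix a ``window'' radius $w$ (to be optimised later) and sample an i.i.d.\ sequence of centres $\mathbf{c}_1,\mathbf{c}_2,\ldots$ from an appropriate Gaussian distribution on $\mathbb{R}^t$, defining $H(\mathbf{x})$ to be the smallest index $i$ for which $\mathbf{x}$ lies in the ball of radius $w$ around $\mathbf{c}_i$. Using efficient ball-location data structures in $\mathbb{R}^t$ (based on Leech-lattice decoding ideas), each hash can be evaluated in $t^{O(t)}$ steps, giving the claimed $d t^{O(t)}$ total. For any pair of points $\mathbf{p},\mathbf{q}$ at distance $s$, one can compute the collision probability as the ratio of the Gaussian measure of centres whose ball contains both points to that of centres whose ball contains at least one of them; this yields closed-form expressions for $p_1(r)$ and $p_2(cr)$ parameterised by $w$ and $t$. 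Substituting into $\rho = \log(1/p_1)/\log(1/p_2)$ and Taylor-expanding, the leading Gaussian exponents scale as $r^2$ and $(cr)^2$, so $\rho \to 1/c^2$ as $t \to \infty$.

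The main obstacle is the sharp remainder analysis needed to pin down the $O(\log t/t^{1/2})$ rate. The window $w$ must be tuned to balance two opposing effects: a larger $w$ increases both collision probabilities and pushes $\rho$ toward $1$, while a smaller $w$ widens the gap between $p_1$ and $p_2$ but drives $p_{\mathrm{near}}$ down and inflates downstream running times. The correct tradeoff arises at $w$ of order $\sqrt{t}$, after which the sub-leading corrections from the Gaussian tail asymptotics and from the projection distortion both contribute error terms of order $O(\log t / t^{1/2})$. The bound $p_{\mathrm{near}} \geq \mathrm{e}^{-O(\sqrt{t})}$ then follows by lower-bounding the Gaussian measure of a ball of radius $\Theta(\sqrt{t})$, which is the most delicate estimate and ultimately determines the exponent.
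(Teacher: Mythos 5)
The paper states this lemma as a citation to \citet{andoni2008near} and does not reproduce a proof, so there is no paper-internal argument to compare against; what follows evaluates your reconstruction against the original construction.

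Your sketch captures the two-stage shape of the Andoni--Indyk argument: random Gaussian projection from $\mathbb{R}^d$ down to $\mathbb{R}^t$ (accounting for the $d$ factor and the $1\pm O(1/\sqrt t)$ distortion), followed by a ball-carving hash in the reduced space with the ball radius tuned to $\Theta(\sqrt t)$, and with $\rho \to 1/c^2$ as $t \to \infty$. The collision-probability analysis via ratios of Gaussian measures of ball intersections is also the right heuristic. However, one step as written would not go through. You propose drawing the ball centres $\mathbf{c}_1, \mathbf{c}_2, \ldots$ i.i.d.\ from a Gaussian on $\mathbb{R}^t$ and hashing a point to the first ball that contains it, but Andoni and Indyk instead place the centres on a sequence of $U = 2^{O(t \log t)}$ independently and uniformly shifted copies of a scaled integer grid in $\mathbb{R}^t$. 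The grid structure is not cosmetic: it is what guarantees that every point in the region of interest is covered after inspecting at most $U$ grids, which is precisely where the $t^{O(t)}$ term in the evaluation time comes from. With i.i.d.\ Gaussian centres, the probability that a single ball covers a given point decays with that point's distance from the origin, so the expected number of balls to scan is not uniformly bounded and the evaluation-time claim does not follow from your construction as stated. A smaller imprecision is the Leech-lattice attribution: in Andoni--Indyk the Leech lattice is an optimisation for a particular fixed dimension ($t = 24$) and is not the source of the general $t^{O(t)}$ bound, which again comes from counting the grid shifts required for coverage.
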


 	 We follow  \citet{charikarKDE} and use $t=\log^{2/3}n$, which results in $n^{o(1)}$ evaluation time and $\rho=\frac{1}{c^2}+o(1)$. In this case, if $c=O\left(\log^{1/7}n\right)$, then 
 	$\rho^{-1} =c^2(1-\beta)$, for $\beta = o(1)$. 

\begin{definition}[bucket] 
    For any hash function $H: \R^d \rightarrow \mathbb{Z}$ and $\mathbf{x}\in \R^d$, let $B_H(\mathbf{x})$ be the set defined by
    $
    B_H(\mathbf{x}) \triangleq \{ \mathbf{x}' ~|~ H(\mathbf{x}) =H(\mathbf{x}')   \}$;
    we call $B_H(\mathbf{x})$ a bucket. 
\end{definition}

\subsection{Kernel Density Estimation} \label{sec:KDE}

 Given a set $X=\{\mathbf{x}_1,\ldots,\mathbf{x}_n\}$ of data points, a  set $Q=\{\mathbf{q}_1,\ldots,\mathbf{q}_m\}$ of query points, and a kernel function $k: \mathbb{R}^d \times \mathbb{R}^d \rightarrow \mathbb{R}_{\geq 0}$, the \textsf{KDE} problem is to compute 
\(
 	\mu_{\mathbf{q}} \triangleq k(\mathbf{q},X)\triangleq \sum_{\mathbf{x}_i\in X}k(\mathbf{q},\mathbf{x}_i)
\)
for every $\mathbf{q}\in Q$. We assume that\footnote{We make this assumption simply for the ease of our presentation, and setting $\mu_\mathbf{q} \geq \zeta$ for any constant $\zeta$ instead will not influence the asymptotic results of our work.}  $1 \leq \mu_\mathbf{q} \leq n$. While a direct computation of the $m$ values for every $\mathbf{q}\in Q$ requires $m n d$ operations, 
there are a number of works that develop faster algorithms for approximating these $m$ quantities.

  Our designed algorithms are based on the work of Charikar, Kapralov, Nouri, and Siminelakis~\citep{charikarKDE}. We refer to their algorithm as CKNS, and provide a brief overview. At a high level, the  CKNS algorithm is  based on importance sampling and,  for any query point $\mathbf{q}$, their objective is to sample a data point $\mathbf{x}_i \in X$ with probability approximately proportional to  $k(\mathbf{q}, \mathbf{x}_i)$. To achieve this, they introduce the notion of  \emph{geometric weight levels}   $\{L_j^{\mathbf{q}}\}_j$  defined as follows:
  
\begin{definition}[\cite{charikarKDE}]
\label{def:geometric_weight_levels}
For any query point $\mathbf{q}$, let $J_{\mu_\mathbf{q}} \triangleq \left\lceil\log \frac{2 n}{\mu_\mathbf{q}}\right\rceil$, and for $j \in [J_{\mu_\mathbf{q}}]$, let
 $L^{\mathbf{q}}_j\triangleq \left\{\mathbf{x}_i \in X: k(\mathbf{q}, \mathbf{x}_i) \in \left(2^{-j},2^{-j+1}\right] \right\}$.
  We define the corresponding distance levels as
  \[
    r_j = \max_{ \substack{ \mathbf{x},\mathbf{x}': \\ k(\mathbf{x},\mathbf{x}')\in \left(2^{-j},2^{-j+1}\right]} }  \|\mathbf{x}-\mathbf{x}'\|
  \]
  for any $j\in [J_{\mu_\mathbf{q}}]$,
 and define $$L^{\mathbf{q}}_{J_{\mu_\mathbf{q}}+1}\triangleq  X \setminus \left(\bigcup_{j\in[J_{\mu_\mathbf{q}}]}L^{\mathbf{q}}_j\right).$$
 \end{definition}

These $L_j^{\mathbf{q}}$'s for any query point $\mathbf{q}$ partition the data points into groups based on the kernel distances $k(\mathbf{q}, \mathbf{x}_i)$, progressing geometrically away from $\mathbf{q}$. Their key insight is that the number of data points in each level $L_j^{\mathbf{q}}$ is bounded:  
\begin{lemma}[\cite{charikarKDE}]\label{lem:sizeL}
 	It holds for any query point $\mathbf{q}$ and $j\in [J_{\mu_\mathbf{q}}]$    that $\left |L^{\mathbf{q}}_j\right |\le 2^{j}\cdot \mu_\mathbf{q}.$
 \end{lemma}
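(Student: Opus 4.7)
The plan is a one-line averaging argument that follows directly from the definitions, with no real machinery required. First, I would invoke Definition~\ref{def:geometric_weight_levels} to extract the key quantitative fact that every $\mathbf{x}_i \in L^{\mathbf{q}}_j$ satisfies the strict lower bound $k(\mathbf{q}, \mathbf{x}_i) > 2^{-j}$, since membership in $L^{\mathbf{q}}_j$ requires $k(\mathbf{q}, \mathbf{x}_i)$ to lie in the half-open interval $(2^{-j}, 2^{-j+1}]$.

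Second, since the kernel is nonnegative (we have $k: \mathbb{R}^d \times \mathbb{R}^d \to \mathbb{R}_{\geq 0}$), restricting the sum defining $\mu_\mathbf{q}$ to the subset $L^{\mathbf{q}}_j \subseteq X$ can only decrease it, so
\[
  \mu_\mathbf{q} \;=\; \sum_{\mathbf{x}_i \in X} k(\mathbf{q}, \mathbf{x}_i) \;\geq\; \sum_{\mathbf{x}_i \in L^{\mathbf{q}}_j} k(\mathbf{q}, \mathbf{x}_i) \;>\; |L^{\mathbf{q}}_j| \cdot 2^{-j}.
\]
Rearranging gives $|L^{\mathbf{q}}_j| < 2^j \cdot \mu_\mathbf{q}$, which is slightly stronger than what is claimed.

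There is no genuine obstacle to flag: the lemma is essentially a pigeonhole observation, stating that if $|L^{\mathbf{q}}_j|$ data points each contribute more than $2^{-j}$ to the total mass $\mu_\mathbf{q}$, then their count cannot exceed $2^j \mu_\mathbf{q}$. The only ingredients used are nonnegativity of $k$ and the lower endpoint of the weight interval, so no properties of the specific kernel (e.g.\ the Gaussian) or of the distances $r_j$ are needed here.
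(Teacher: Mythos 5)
Your proof is correct, and it is essentially the same averaging argument as in the cited source (\citealt{charikarKDE}); this paper itself states the lemma without proof, since by Definition~\ref{def:geometric_weight_levels} each point of $L^{\mathbf{q}}_j$ contributes more than $2^{-j}$ to the nonnegative sum $\mu_\mathbf{q}$, which is exactly your pigeonhole bound. Nothing further is needed.
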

Hence,  one can sub-sample the  data points  with probability \(1 / (2^j \cdot \mu_\mathbf{q})\) for every $j\in [J_{\mu_\mathbf{q}}]$, and the sampled data points are stored in hash buckets using LSH. This data structure will allow for fast and good estimation of $\mu_\mathbf{q}$ for any query point $\mathbf{q}$.
We further follow \citet{charikarKDE}, and introduce the cost of a kernel $k$.

 \begin{lemma}[\cite{charikarKDE}]\label{lem:collprob}
 	Assume that kernel  $k$ induces weight level sets $L^{\mathbf{q}}_j$'s and corresponding distance levels $r_j$'s. Also, for any query $\mathbf{q}$,   integer  $i\in[J_{\mu_\mathbf{q}}+1]$, and $j \in [J_{\mu_\mathbf{q}}]$ satisfying $i>j$, let $\mathbf{p}\in L^{\mathbf{q}}_j$ and $\mathbf{p}'\in L^{\mathbf{q}}_i$.  Assuming  that $\mathcal{H}$ is an Andoni-Indyk LSH family designed for near distance $r_j$ (see Lemma~\ref{lem:andoni-indyk}), the following holds for  any integer $k\ge 1$: 
 	\begin{enumerate}
 		\item $\mathbb{P}_{H^*\sim\mathcal{H}^k}\left[ H^*(\mathbf{p})=H^*(\mathbf{q})\right] \ge p_{\mathrm{near},j}^k$,
 		\item $\mathbb{P}_{H^*\sim\mathcal{H}^k}\left[ H^*(\mathbf{p}')=H^*(\mathbf{q})\right] \le p_{\mathrm{near},j}^{kc^2(1-\beta)}$,
 	\end{enumerate}
 where $c\triangleq c_{i,j}\triangleq \min\left\{\frac{r_{i-1}}{r_j},\log^{1/7}n\right\}$, $p_{\mathrm{near},j} \triangleq p_1(r_j)$, and $\beta = o(1)$ from Lemma~\ref{lem:andoni-indyk}.  
 \end{lemma}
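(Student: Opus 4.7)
The plan is to handle the two parts separately, with the first being a direct application of the Andoni–Indyk bound and the second requiring a monotonicity argument to translate a kernel-value gap into a distance gap, followed by use of the $\rho$-exponent relation.

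For part (1), I would first observe that since $\mathbf{p}\in L^{\mathbf{q}}_j$, the pair $(\mathbf{p},\mathbf{q})$ satisfies $k(\mathbf{p},\mathbf{q})\in (2^{-j},2^{-j+1}]$, so by the definition of the distance level $r_j$ we have $\|\mathbf{p}-\mathbf{q}\|\le r_j$. Because $\mathcal{H}$ is tuned for near distance $r_j$, Lemma~\ref{lem:andoni-indyk} gives
\[
\mathbb{P}_{H\sim\mathcal{H}}[H(\mathbf{p})=H(\mathbf{q})]\ge p_{\mathrm{near},j}.
\]
Since $H^*\sim\mathcal{H}^k$ is a concatenation of $k$ independent hashes, the joint collision probability factorises and yields at least $p_{\mathrm{near},j}^k$.

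For part (2), the key step is to show $\|\mathbf{p}'-\mathbf{q}\|\ge c\cdot r_j$ for $c=c_{i,j}=\min\{r_{i-1}/r_j,\log^{1/7}n\}$. I would split into two cases. If $c=r_{i-1}/r_j$, then $c\cdot r_j=r_{i-1}$, and I would argue via monotonicity of the kernel in distance that $\mathbf{p}'\in L^{\mathbf{q}}_i$ with $i>j$ forces $k(\mathbf{q},\mathbf{p}')\le 2^{-i+1}$, which is strictly below every kernel value attained on level $i-1$; hence $\|\mathbf{p}'-\mathbf{q}\|$ exceeds every distance realised on level $i-1$, and in particular $\|\mathbf{p}'-\mathbf{q}\|\ge r_{i-1}$. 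If instead $c=\log^{1/7}n<r_{i-1}/r_j$, then $c\cdot r_j<r_{i-1}\le\|\mathbf{p}'-\mathbf{q}\|$, so the same inequality holds. Lemma~\ref{lem:andoni-indyk} then gives
\[
\mathbb{P}_{H\sim\mathcal{H}}[H(\mathbf{p}')=H(\mathbf{q})]\le p_{\mathrm{far}}\triangleq p_2(r_j,c).
\]

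Finally I would convert this far-collision bound into an exponent of $p_{\mathrm{near},j}$ using the $\rho$-estimate recorded just after Lemma~\ref{lem:andoni-indyk}. With the choice $t=\log^{2/3}n$ and $c\le\log^{1/7}n$ from the setup, the text already states $\rho^{-1}=c^2(1-\beta)$ with $\beta=o(1)$. Unpacking the definition of $\rho$ gives $\log(1/p_{\mathrm{far}})\ge c^2(1-\beta)\log(1/p_{\mathrm{near},j})$, equivalently
\[
p_{\mathrm{far}}\le p_{\mathrm{near},j}^{c^2(1-\beta)}.
\]
Concatenating $k$ independent hashes then yields $\mathbb{P}[H^*(\mathbf{p}')=H^*(\mathbf{q})]\le p_{\mathrm{far}}^k\le p_{\mathrm{near},j}^{kc^2(1-\beta)}$, as desired.

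The step I expect to require the most care is the monotonicity argument converting the kernel-value level containment $\mathbf{p}'\in L^{\mathbf{q}}_i$ into the Euclidean distance lower bound $\|\mathbf{p}'-\mathbf{q}\|\ge r_{i-1}$; the other steps are either direct invocations of Lemma~\ref{lem:andoni-indyk} or algebraic manipulation of the $\rho$ bound already recorded in the preceding paragraph.
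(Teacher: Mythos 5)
The paper does not prove this lemma itself---it is stated as imported from \citet{charikarKDE}---and your argument correctly reconstructs the standard proof from that source: $\|\mathbf{p}-\mathbf{q}\|\le r_j$ for the near case, the monotonicity-based lower bound $\|\mathbf{p}'-\mathbf{q}\|\ge c_{i,j}\cdot r_j$ for the far case, the conversion $p_{\mathrm{far}}\le p_{\mathrm{near},j}^{c^2(1-\beta)}$ via the $\rho$ bound recorded after Lemma~\ref{lem:andoni-indyk}, and independence of the $k$ concatenated hashes. The only points worth making explicit are that the distance lower bound uses the kernel being a non-increasing function of Euclidean distance (implicit in Definition~\ref{def:geometric_weight_levels} and standard in the CKNS framework) and that $i>j$ guarantees $c_{i,j}\ge 1$, so the hypothesis of Lemma~\ref{lem:andoni-indyk} applies.
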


 \begin{definition}[Cost of a Kernel]\label{def:cost_kernel}
	 Suppose that a kernel $k$ induces distance levels $r_j$'s based on the kernel value $\mu_\mathbf{q}$ (see Definition~\ref{def:geometric_weight_levels}). For any $j\in[J_{\mu_\mathbf{q}}]$ we define the \emph{cost} of kernel $k$ for weight level $L^{\mathbf{q}}_j$ as 
		\begin{align*}
	\mathrm{cost}_{\mu_{\mathbf{q}}}(k,j)&\triangleq  \exp_2\left(\max_{i=j+1,\ldots,J_{\mu_\mathbf{q}}+1}  \left\lceil  \frac{i-j}{c_{i,j}^2(1-\beta)}     \right\rceil\right),
	\end{align*}
	where $c_{i,j}\triangleq \min\left\{\frac{r_{i-1}}{r_j},\log^{1/7}n\right\}$ and $\beta=o(1)$ from Lemma~\ref{lem:andoni-indyk}. We define the general \emph{cost} of a kernel $k$ as
	\(
	\mathrm{cost}(k)\triangleq \max_{\mu_\mathbf{q}, j\in[J_{\mu_\mathbf{q}}]} \mathrm{cost}_{\mu_{\mathbf{q}}}(k,j).
	\)
        For any $j \in [J_{\mu_\mathbf{q}}]$ we further define
        \begin{equation}\label{eq:kj} 
        k_j\triangleq - \frac{1}{ \log p_{\mathrm{near},j}} \cdot\max_{i=j+1,\ldots,J_{\mu_\mathbf{q}}+1}  \left\lceil  \frac{i-j}{c_{i,j}^2(1-\beta)}     \right\rceil.
        \end{equation}
\end{definition}

By the assumption that $1 \leq \mu_\mathbf{q} \leq n$, the cost of some popular kernels such as the Gaussian kernel $k_{\mathrm{g}}$, the $t$-student kernel $k_{t}$, and the exponential kernel $k_e$ are $\mathrm{cost(k_g)} = n^{(1+o(1)) \frac{1}{4}}$, $\mathrm{cost(k_t)} = n^{o(1)}$, and $\mathrm{cost(k_e)} = n^{(1+o(1))\frac{4}{27}}$,  respectively~\citep{charikarKDE}.

\subsection{Approximate Similarity Graphs}\label{sec:asg}

 Constructing a similarity graph from a set of data points   
is the first step of
most modern clustering algorithms.  For   any set $X = \{\mathbf{x}_1, \ldots, \mathbf{x}_n\}$ of data points in $\mathbb{R}^d$ and kernel function $k : \mathbb{R}^{d} \times \mathbb{R}^{d} \rightarrow \mathbb{R}_{\geq 0}$,  a similarity graph $F=(V,E,w)$ from $X$ can be   constructed as follows:  
  each $\mathbf{x}_i \in X$ is a vertex in $F$, and every  pair of vertices $\mathbf{x}_i$ and $\mathbf{x}_j$ is connected by an edge with weight $
w(\mathbf{x}_i, \mathbf{x}_j) = k(\mathbf{x}_i, \mathbf{x}_j)$. 
While this graph $F$ has $\Theta(n^2)$ edges by definition, we  can construct in $\widetilde{O}(n)$ time a sparse graph $G$ with $\widetilde{O}(n)$ edges  such that (i) every cluster in $F$ has  low conductance in $G$,  and (ii) the eigenvalue gaps of the normalised Laplacian matrices of $F$ and $G$ are approximately the same~\citep{macgregor2024fastkde}; 
these two conditions  ensure that a typical   clustering algorithm on $F$ and $G$   returns approximately the same result. We call such a  sparse graph $G$ an \emph{approximate similarity graph}, and refer the reader to Section~\ref{sec:background} in the appendix for its formal definition.

\subsection{Convention \& Assumption}

For ease of presentation, for any set $X\subset \mathbb{R}^d$ and $\mathbf{z}\in \mathbb{R}^d$, we always use $X\cup \mathbf{z}$ and $X\setminus \mathbf{z}$ to represent $X\cup \{\mathbf{z}\}$ and $X\setminus \{\mathbf{z}\}$. For a similarity graph $F$ constructed for any set $X=\{\mathbf{x}_1,\ldots, \mathbf{x}_n\}\subset \mathbb{R}^d$, 
we   use $\mathbf{x}_i$ to represent both the point in $\mathbb{R}^d$ and the corresponding vertex in $F$, as long as the underlying meaning of $\mathbf{x}_i$ is clear from context. We use $(\mathbf{x}_i, \mathbf{x}_j)$ to represent an edge with $\mathbf{x}_i$ and $\mathbf{x}_j$ as the endpoints, and we only consider undirected graphs. We use $\widetilde{O}(n)$ to represent $O(n\cdot\log^c n)$ for some constant $c$. 
The  $\log$ operator takes the base $2$.

\begin{assumption} \label{remark:upper_bound_points_added}
 Let $n_1 = |X|$ denote the number of data points at initialisation. We assume that, if $X_t\subset \mathbb{R}^d$ represents the set of data points  after $t$ updates, then $|X_t|  \leq n_1^\gamma$ for constant $\gamma > 0$. Moreover, based on  the  JL Lemma~\citep{johnson1984extensions}, we  always assume that   $d=O(\log |X_t|)$, and hence  our work ignores the dependency on $d$ in the algorithms' runtime.  
\end{assumption}

\section{Dynamic Kernel Density Estimation\label{sec:dynamic_kde}}

In this section we design a data structure to   dynamically maintain \textsf{KDE} estimates as new data 
and query points are added or removed over time. Our data structure supports $\textsc{Initialise}(X, Q, \varepsilon)$, which creates a hash data structure for the \textsf{KDE} estimates  based on $X$, and  supports operations for dynamically maintaining the data and query point sets as well as the corresponding estimates.  Our key technical contribution is the design of a \emph{query hash} data structure, in which we carefully store relevant hash values for query points, enabling efficient updates to query estimates as data points are added. The main components used in updating the data structure and their performance are  as follows:  

\begin{figure*}[h]
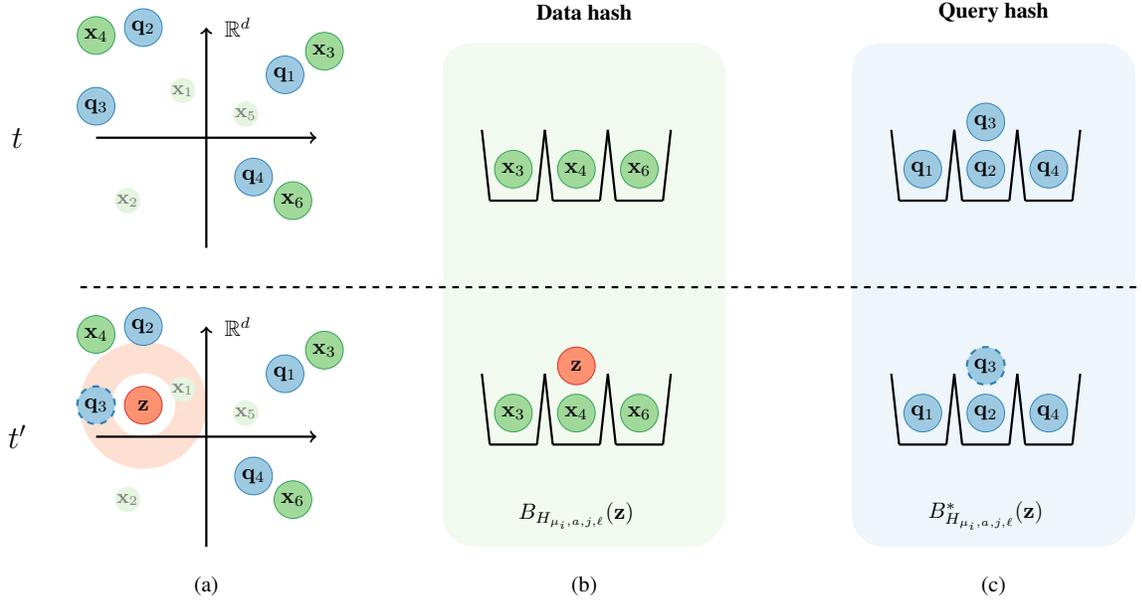

\vskip 0.2in
    \begin{center}
    \centerline{\resizebox{0.9\linewidth}{!}{%
    \tikzfig{figures/update_buckets}
    }}
\caption{Illustration of   $\textsc{AddDataPoint}(\mathbf{z})$   for a single iteration $\mu_i \in M$, $a\in K_1$, $j \in [J_{\mu_i}]$, and $\ell \in [K_{2}]$. The first row illustrates  (a) the subsampled data points $Z \triangleq \{\mathbf{x}_3, \mathbf{x}_4, \mathbf{x}_6\}$ and query points $Q_{\mu_i} \triangleq \{\mathbf{q}_i\}_{i=1}^{4}$, (b) the bucketing of $Z$ by the hash function $H_{\mu_i, a, j, \ell}$,  as well as (c) the bucketing of $Q$ by the same hash function. The second row illustrates  (a) the relative location of a new arriving  data point $\mathbf{z}\in\mathbb{R}^d$, with shaded red region indicating $L_j^{\mathbf{z}}$,  (b) $\mathbf{z}$'s inclusion in the bucket $B_{H_{\mu_i, a, j, \ell}}(\mathbf{z})$, as well as   (c) the recovery of $\mathbf{q}_3 \in B^*_{H_{\mu_i, a, j, \ell}}(\mathbf{z})$ because $\mathbf{z} \in L_j^{\mathbf{q}_3}$.}\label{fig:update_kde}
 \end{center}
\vskip -0.2in
\end{figure*}

\begin{theorem}[\textbf{Main Result~1}]\label{thm:incremental_dynamic_kde}
 Let $k$ be a kernel function with $\mathrm{cost}(k)$ as defined in Definition~\ref{def:cost_kernel},  and   $X \subset \mathbb{R}^d$   a set of $n$ data points updated through data point insertions. Assuming   $Q = \emptyset$ initially\footnote{When $Q \neq \emptyset$ with $|Q| \triangleq m_1$, we have an  additional additive factor of $m_1 \cdot \epsilon^{-2} \cdot n_1^{o(1)} \cdot \mathrm{cost}(k)$ and $m_1 \cdot \epsilon^{-2} \cdot n^{o(1)} \cdot \mathrm{cost}(k)$ in the running time of the initialisation and data point update steps  respectively.}, the performance of 
the procedures  in Algorithm~\ref{alg:dynamic_kde} is as follows: 
\begin{itemize}
\item \underline{Initialisation}:  $\textsc{Initialise}(X, \emptyset, \varepsilon)$  creates a hash data structure for the \textsf{KDE}, and  runs in time  $\epsilon^{-2} \cdot n_1^{1+o(1)} \cdot \mathrm{cost}(k)$, where $n_1$ is the number of data points at initialisation.
\item \underline{Query Point Updates}: For every query point insertion $Q\leftarrow Q\cup \mathbf{q}$  and deletion   $Q\leftarrow Q\setminus \mathbf{q}$,   $\textsc{AddQueryPoint}(\mathbf{q})$ and $\textsc{DeleteQueryPoint}(\mathbf{q})$ update the corresponding sets and data structures.
Moreover, $\textsc{AddQueryPoint}(\mathbf{q})$   returns $\hat{\mu}_{\mathbf{q}}$ that achieves  a $(1\pm \varepsilon)$-multiplicative factor approximation of $\mu_{\mathbf{q}}$ with high probability.
\item \underline{Data Point Updates}: For every data point insertion $X\leftarrow X \cup \mathbf{z}$, $\textsc{AddDataPoint}(\mathbf{z})$ updates the corresponding sets and data structures, and returns the updated estimates $\hat{\mu}_\mathbf{q}$ that achieve $(1\pm \epsilon)$-multiplicative factor approximations of $\mu_\mathbf{q}$  for every maintained query point $\mathbf{q} \in Q$.
\end{itemize}
With high probability, the amortised running time for each update procedure is $\epsilon^{-2} \cdot n^{o(1)} \cdot \mathrm{cost}(k)$, where $n = |X|$ is the current number of data points. 
\end{theorem}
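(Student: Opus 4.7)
The plan is to reduce the correctness claim to the static CKNS guarantees of \citet{charikarKDE} by showing that Algorithm~\ref{alg:dynamic_kde} maintains, at every moment, exactly the same estimator that a fresh static CKNS instance on the current dataset would produce; the new work is an incremental recomputation of that estimator powered by an auxiliary \emph{query hash} side structure. Fixing the random bits (hash functions and sub-sampling coins) at initialisation will make the maintained estimator pathwise identical to a rebuilt one, so the $(1\pm\epsilon)$-approximation follows from the static analysis, and all that remains is the amortised runtime bound.

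\textbf{Data structure and operations.} For every guess $\mu_i$ in a geometric grid $M\subseteq[1,n]$, every repetition $a\in[K_1]$, every weight level $j\in[J_{\mu_i}]$, and every sub-hash copy $\ell\in[K_2]$, the algorithm draws an Andoni--Indyk hash function $H_{\mu_i,a,j,\ell}$ tuned for near radius $r_j$ (\lemref{andoni-indyk}) and stores (i) a data bucket $B_{H_{\mu_i,a,j,\ell}}$ containing the points of $X$ retained under independent sub-sampling with probability $1/(2^j\mu_i)$, and (ii) a mirror \emph{query bucket} $B^*_{H_{\mu_i,a,j,\ell}}$ storing the queries of $Q$ under the same hash. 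For each $\mathbf{q}\in Q$ the algorithm additionally keeps per-level partial sums whose combination at the right guess $\mu_i$ is the CKNS estimator $\widehat{\mu}_{\mathbf{q}}$. Then \textsc{Initialise} simply runs static CKNS on $X$ with an empty query hash, inheriting the runtime $\epsilon^{-2}\cdot n_1^{1+o(1)}\cdot \mathrm{cost}(k)$ from \cite{charikarKDE}; \textsc{AddQueryPoint}$(\mathbf{q})$ hashes $\mathbf{q}$ into every $B^*_{H_{\mu_i,a,j,\ell}}$ and assembles its estimator by enumerating $B_{H_{\mu_i,a,j,\ell}}(\mathbf{q})$ via \lemrefs{sizeL}{collprob}; \textsc{DeleteQueryPoint} symmetrically removes $\mathbf{q}$ and its partial sums. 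The crucial new operation, pictured in Figure~\ref{fig:update_kde}, is \textsc{AddDataPoint}$(\mathbf{z})$: for each index tuple the sub-sampling is flipped, and if $\mathbf{z}$ is retained the algorithm inserts it into $B_{H_{\mu_i,a,j,\ell}}$, enumerates $B^*_{H_{\mu_i,a,j,\ell}}(\mathbf{z})$, tests membership $\mathbf{z}\in L_j^{\mathbf{q}}$ for each surviving query, and additively updates the affected partial sum.

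\textbf{Correctness.} Because every partial sum is updated by exactly the same increment it would receive were static CKNS rebuilt on $X\cup\mathbf{z}$ with the same random bits, the maintained $\widehat{\mu}_{\mathbf{q}}$ coincides with the static CKNS estimator on $X_t$ at time $t$. The $(1\pm\epsilon)$-multiplicative guarantee, conditioned on the ``good'' event that the guess $\mu_i$ lies within a constant factor of $\mu_{\mathbf{q}}$ and that the LSH parameters $K_1$, $K_2=k_j$ from \eq{kj} yield enough concentration (Definition~\ref{def:cost_kernel} and \lemref{collprob}), carries over verbatim from \cite{charikarKDE}. A union bound over the polynomially many updates permitted by Assumption~\ref{remark:upper_bound_points_added} preserves the high-probability guarantee throughout the dynamic sequence, provided $K_1,K_2$ are inflated by a $\gamma$ factor at initialisation.

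\textbf{Running time --- the main obstacle.} The technical heart of the proof is bounding the expected cost of \textsc{AddDataPoint}, where the work on tuple $(\mu_i,a,j,\ell)$ is $1+|B^*_{H_{\mu_i,a,j,\ell}}(\mathbf{z})|$ whenever sub-sampling keeps $\mathbf{z}$. I would separate each collision $\mathbf{q}\in B^*_{H_{\mu_i,a,j,\ell}}(\mathbf{z})$ into two types. A \emph{genuine} collision, in which $\mathbf{z}\in L_j^{\mathbf{q}}$, is precisely a CKNS update and can be charged to the static runtime $\epsilon^{-2}\cdot n^{o(1)}\cdot\mathrm{cost}(k)$ per query via \lemrefs{sizeL}{collprob}. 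A \emph{spurious} collision, where $\mathbf{z}\in L_i^{\mathbf{q}}$ with $i>j$, is controlled using item~2 of \lemref{collprob}: its probability is at most $p_{\mathrm{near},j}^{K_2\cdot c_{i,j}^2(1-\beta)}$, and the choice $K_2=k_j$ in \eq{kj} makes the expected number of far queries colliding with $\mathbf{z}$ at level $j$ telescope, across $i>j$ and across the $K_1$ repetitions, to $\mathrm{cost}_{\mu_{\mathbf{q}}}(k,j)\cdot n^{o(1)}$; the $1/(2^j\mu_i)$ sub-sampling factor and the level-size bound $|L_i^{\mathbf{q}}|\le 2^i\mu_{\mathbf{q}}$ together prevent any single level from dominating. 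Summing over $(\mu_i,j)$ collapses the expectation to $\epsilon^{-2}\cdot n^{o(1)}\cdot\mathrm{cost}(k)$ per insertion, and a Chernoff/union-bound argument (again using Assumption~\ref{remark:upper_bound_points_added} to cap the sequence length) upgrades the expectation to the claimed with-high-probability amortised bound. The subtlest point will be keeping $K_1,K_2$ uniformly valid as $n$ grows from $n_1$ to $n_1^{\gamma}$, which I would handle by selecting the parameters at initialisation so that \lemref{collprob} holds against the worst-case $n_1^{\gamma}$, ensuring the $n^{o(1)}$ overhead absorbs the $\gamma$ factor.
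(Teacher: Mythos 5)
Your correctness argument (pathwise identity with a freshly rebuilt static CKNS instance sharing the same random bits) is the right idea and matches the paper's Lemma~\ref{lemma:correctness_update_incremental}, though it glosses over the re-estimation step that occurs when $\hat{\mu}_\mathbf{q}$ crosses a $\mu_i$ threshold (Lines~\ref{alg:addupdate:line:overestimate}--\ref{alg:addupdate:re-estimate_query} of Algorithm~\ref{alg:addpointandupdatequeries}) and the doubling-triggered reconstruction. The running-time part, however, has a genuine gap: you attempt a direct \emph{per-insertion expectation} bound, and that bound is false. Consider $m$ queries co-located near a single point; inserting one data point $\mathbf{z}$ at that location causes all $m$ estimates to be updated in that one step, so the cost of that step is $\Omega(m)$, not $n^{o(1)}\cdot\mathrm{cost}(k)$. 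No decomposition into genuine/spurious collisions can rescue a per-step bound. Relatedly, the level-size inequality $|L_i^{\mathbf{q}}|\le 2^i\mu_{\mathbf{q}}$ (Lemma~\ref{lem:sizeL}) bounds the number of \emph{data points} in a level around a \emph{query}; it does not bound the number of \emph{queries} in a level around a \emph{data point} $\mathbf{z}$, so it cannot control $|B^*_{H_{\mu_i,a,j,\ell}}(\mathbf{z})|$.

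The argument the paper actually runs, and the ingredient you are missing, is an amortisation across the whole insertion sequence via Lemma~\ref{lemma:bound_number_updates}. Fix a query $\mathbf{q}$ and track the crossing times $T^{\mathbf{q}}_{\mu_{i'}}$ at which its true KDE value passes each geometric threshold $\mu_{i'}$ (Definition~\ref{def:last_mu_upperbound}). While $t\in(T^{\mathbf{q}}_{\mu_{i'-1}},T^{\mathbf{q}}_{\mu_{i'}}]$, the probability that estimator $Z_{\mathbf{q},a}$ is touched at step $t$ is $\Theta\bigl(k(\mathbf{q},\mathbf{z}_t)/\mu_{i'}\bigr)$ (this is where the $1/(2^{j+1}\mu_{i'})$ sub-sampling probability and $\mathbf{q}\in L^{\mathbf{z}_t}_{j'}$ combine); since $\sum_{t} k(\mathbf{q},\mathbf{z}_t)\le\mu_{i'}$ between consecutive crossings and there are only $O(\log n)$ crossings, each query is updated $\widetilde{O}(1)$ times in total over the whole sequence, with high probability by Chernoff. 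Each update costs $\widetilde{O}(\epsilon^{-2}\cdot\mathrm{cost}(k))$ (dominated by the occasional \textsc{QueryPoint} re-estimation). The total query-update work is therefore $\widetilde{O}(m\cdot\epsilon^{-2}\cdot\mathrm{cost}(k))$, and this is divided by $T\ge m$ insertions --- an inequality that uses $Q=\emptyset$ initially, a hypothesis your proof never invokes. Without this cumulative bookkeeping the amortised bound does not follow; your proposal should be restructured around this per-query potential argument rather than a per-step expectation.
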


To examine the significance of Theorem~\ref{thm:incremental_dynamic_kde}, notice that the amortised update time $\epsilon^{-2} \cdot n^{o(1)} \cdot \mathrm{cost}(k) $ for the data point insertions is \emph{independent} of the number of query points $|Q|$. This makes our algorithm significantly more efficient than re-estimating the query points after every update. While previous work~\citep{liang2022dynamic} has shown that the CKNS \textsf{KDE} \emph{estimator} can be extended to the dynamic setting, our result shows that the \emph{estimates} of a set of query points can be efficiently updated.

\subsection{Analysis for the Initialisation}
 
The initialisation step prepares all the   data structures used for   subsequent data and query point updates. The main component used in  $\textsc{Initialise}(X, Q, \varepsilon)$ is the $\textsc{Preprocess}(X,\epsilon)$   procedure,
 which preprocesses  the data points in $X$ to ensure that the value $\mu_\mathbf{q}$ for any query point $\mathbf{q}$ can be fast approximated. To achieve this, $\textsc{Preprocess}(X,\epsilon)$ defines \[M \triangleq \left\{2^{k} \mid k \in \mathbb{Z}, 0 \leq k \leq \log\left( 2n_1 \right)\right\},
 \]
and indexes every $\mu_i \in M$ such that $\mu_0 \leq \ldots \leq \mu_{\log(2n_1)}$; note that $\mu_i = 2^{i}$. Then for $\mu_i \in M$ and $j \in [ \log (2\cdot n_1 / \mu_i)]$ it samples every data point in $X$ with probability 
$\min\left\{1 / (2^{j+1}\mu_i),1\right\}$, and employs a hash function $H_{\mu_i, a, j, \ell}$ chosen uniformly at random from  $\mathcal{H}^{k_j}$ with  $k_j=\tilde{O}(1)$~(cf.~Lemma~\ref{lem:collprob}) to add every sampled $\mathbf{x}\in X$ to the buckets $\{B_{H_{\mu_i, a, j, \ell}}(\mathbf{x}) \}_{\mu_i, a,j,\ell}$ indexed by all the possible $a\in[K_1]$ with $K_1 = O(\log n_1 \cdot \varepsilon^{-2})$, and $\ell \in [K_{2}]$  with $K_{2} = O(\log(n_1)\cdot \mathrm{cost}(k))$. In addition,  $\textsc{Preprocess}(X,\epsilon)$ samples every data point in $X$ with probability $1/(2n_1)$ for all the possible values of $\mu_i$ and $a$, and adds the sampled points to set $\{\wt{X}_{\mu_i, a}\}_{\mu_i, a}$.  We remark that our described $\textsc{Preprocess}(X,\epsilon)$ is almost the same as the   one presented in   \citet{charikarKDE} and,  although this data structure is sufficient to quickly output $\textsf{KDE}$ estimates, we need to store additional \emph{query-hash} buckets to update estimates when new data points arrive.

\subsection{Analysis for Updates}\label{sec:kde_incremental}

 When a new query point $\mathbf{q}$ arrives, $\textsc{AddQueryPoint}(\mathbf{q})$ performs the following operations: 
\begin{enumerate}
\item It computes  the \textsf{KDE} estimate $\hat{\mu}_\mathbf{q}$ of $\mu_{\mathbf{q}}$ using the hash-based data structure from   $\textsc{Initialise}(X, Q, \varepsilon)$.
\item It adds $\mathbf{q}$ to the sets $Q_{\mu_i}$ for every $\mu_i\in M$ that satisfies $\hat{\mu}_\mathbf{q} \leq \mu_i$, and adds $\mathbf{q}$   to the buckets $B^*_{H_{\mu_i,a,j,\ell}}(\mathbf{q})$ for $a \in [K_1]$, $j \in [J_{\mu_i}]$ and $\ell \in [K_{2}]$, which we call the \emph{query-hash}. 
\end{enumerate}
When a new data point $\mathbf{z}$ arrives, we invoke the $\textsc{AddDataPoint}(\mathbf{z})$ procedure. This procedure is our main technical contribution to enable dynamic updates of query estimates within the framework of~\citet{charikarKDE}, and works as follows: $\textsc{AddDataPoint}(\mathbf{z})$ checks whether the number of data points has doubled since the last construction~(or reconstruction) of the data structure, and re-initialises the data structure if it is the case. Otherwise, $\textsc{AddDataPoint}(\mathbf{z})$ performs the following operations:
\begin{enumerate}
\item It samples $\mathbf{z}$ with probability $\min\left\{1 / (2^{j+1}\mu_i),1\right\}$ for all possible $\mu_i\in M$, and adds the sampled $\mathbf{z}$ to the buckets  $B_{H_{\mu_i, a, j, \ell}}(\mathbf{z})$ for all $a\in K_1$, $j \in [J_{\mu_i}]$, and $\ell \in [K_{2}]$; it also
samples $\mathbf{z}$
with probability $1/(2n_1)$ for all the possible values of $\mu_i$ and $a$, and adds the sampled point to the set $\{\wt{X}_{\mu_i, a}\}_{\mu_i, a}$.
 Notice that the way that $\mathbf{z}$ is added in the buckets is exactly the same as the one when executing $\textsc{Initialise}(X \cup \mathbf{z}, Q, \varepsilon)$, and hence $\textsc{AddDataPoint}(\mathbf{z})$ correctly updates all the buckets. This bucket-updating procedure is similar to \citet{liang2022dynamic}, though in their dynamic setting a newly arriving point replaces an existing one; here, we add the new point instead.

 \item If $\mathbf{z}$ is sampled, $\textsc{AddDataPoint}(\mathbf{z})$ recovers all the  points $\mathbf{q} \in B^*_{H_{\mu_i, a, j, \ell}}(\mathbf{z})$ in the query hash that satisfies $\mathbf{q} \in Q_{\mu_i} \setminus \left(\bigcup_{j' < i} Q_{\mu_{j'}}\right)$ and $\mathbf{z} \in L_j^{\mathbf{q}}$. Every such  $\mathbf{q}$  is  exactly the point whose $\textsf{KDE}$ estimate $\hat{\mu}_{\mathbf{q}}$ would have included $\mathbf{z}$ if $\textsc{AddQueryPoint}(\mathbf{q})$ would have been called after running $\textsc{Initialise}(X \cup \mathbf{z}, Q, \varepsilon)$. Hence, the \textsf{KDE} estimates $\hat{\mu}_{\mathbf{q}}$ for the recovered $\mathbf{q}$ are updated appropriately.
\end{enumerate}
See Figure~\ref{fig:update_kde} for illustration. The correctness and running time analysis of $\textsc{AddDataPoint}(\mathbf{z})$ and 
 $\textsc{AddQueryPoint}(\mathbf{q})$ can be found in Section~\ref{sec:dynamic_kde_appendix} of the appendix. 
 
 Finally, the $\textsc{DeleteQueryPoint}(\mathbf{q})$ procedure simply removes any stored information about the query point $\mathbf{q}$ throughout all the maintained data, and its running time follows from the running time of $\textsc{AddQueryPoint}(\mathbf{q})$.

\section{Dynamic Similarity Graph Construction}\label{sec:dynamic_cps}

 In this section we design an approximate dynamic algorithm that constructs a similarity graph under a sequence of data point insertions, and analyse its performance. Given a set $X$ of $n_1$ points in $\mathbb{R}^d$ with  $d=O(\log n_1)$,  and a sequence of points $\{\mathbf{z}\}$ that will be added to $X$ over time, our designed algorithm consists of the
 \textsc{ConstructGraph} and \textsc{UpdateGraph} procedures, whose performance is   as follows:

\begin{theorem}[\textbf{Main Result~2}]\label{thm:dynamic_cps}
Let $k$ be a kernel function with $\mathrm{cost}(k)$ as defined in Definition~\ref{def:cost_kernel}, and  $X \subset \mathbb{R}^d$ a set of data points  updated through point insertions. Then, the following statements hold:

\begin{enumerate}
\item  \underline{The Initialisation Step}:  with probability at least $9/10$,  the $\textsc{ConstructGraph}$ procedure constructs an approximate similarity graph $G = (X, E, w_G)$ with $|E| = \widetilde{O}(n_1)$ edges, where $n_1 = |X|$ is the number of data points at initialisation. The running time of the initialisation step is $n_1^{1+o(1)} \cdot \mathrm{cost}(k)$. 
\item \underline{The Dynamic Update Step:} for every new arriving data point $ \mathbf{z}$, the  $\textsc{UpdateGraph}$ procedure updates the approximate similarity graph $G$, and with probability at least $9/10$ $G$ is an approximate similarity graph for $X \cup \mathbf{z}$. The expected amortised update time is $ n^{o(1)} \cdot \mathrm{cost}(k) $, where $n$ is the number of currently considered data points. 
\end{enumerate}
\end{theorem}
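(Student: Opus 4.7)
The plan is to reduce the construction and maintenance of the approximate similarity graph to the dynamic \textsf{KDE} data structure of Theorem~\ref{thm:incremental_dynamic_kde}, by taking $Q=X$ and leveraging the hash buckets to perform importance sampling of edges. Throughout, each vertex $\mathbf{x}_i$ will be both a data point and a query point, so that we continually have access to a $(1\pm\varepsilon)$-approximation $\hat{\mu}_{\mathbf{x}_i}$ of $\mu_{\mathbf{x}_i}=k(\mathbf{x}_i,X)$, which plays the role of an (inverse) leverage score for edges incident to $\mathbf{x}_i$. I will follow the static construction of~\citet{macgregor2024fastkde}, in which an edge $(\mathbf{x}_i,\mathbf{x}_j)$ is sampled with probability
\(
p_{ij}\propto \frac{k(\mathbf{x}_i,\mathbf{x}_j)}{\hat{\mu}_{\mathbf{x}_i}}+\frac{k(\mathbf{x}_i,\mathbf{x}_j)}{\hat{\mu}_{\mathbf{x}_j}},
\)
rescaled by a $\polylog(n)$ factor; such a sampled edge is reweighted by $1/p_{ij}$. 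The key observation is that the CKNS buckets already sample each data point in the weight level $L_j^{\mathbf{q}}$ with probability $\approx 1/(2^j\mu_{\mathbf{q}})\propto k(\mathbf{q},\mathbf{x})/\mu_{\mathbf{q}}$, so the bucket mechanism itself produces the desired neighbour samples for free.

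For the initialisation step, I would run $\textsc{Initialise}(X,X,\varepsilon)$ to build the CKNS hash tables and the query-hash in time $\varepsilon^{-2}\cdot n_1^{1+o(1)}\cdot\mathrm{cost}(k)$, then, for each $\mathbf{x}_i\in X$, scan its buckets $B_{H_{\mu_i,a,j,\ell}}(\mathbf{x}_i)$ across the levels $j\in[J_{\hat{\mu}_{\mathbf{x}_i}}]$ and a single choice of $a,\ell$, add each encountered point $\mathbf{x}_j$ as a candidate edge, and rescale by $1/p_{ij}$. Standard concentration (as in~\citet{macgregor2024fastkde}) then shows that (i) the sparsified graph has $\widetilde{O}(n_1)$ edges, and (ii) with probability $\ge 9/10$ the Laplacian is spectrally close to that of the fully-connected similarity graph in the sense needed by \defref{asg}, giving the approximate similarity graph property. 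The total running time is dominated by that of $\textsc{Initialise}$, namely $n_1^{1+o(1)}\cdot\mathrm{cost}(k)$.

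For the dynamic update step, when a new data point $\mathbf{z}$ arrives I would: (a) call $\textsc{AddDataPoint}(\mathbf{z})$ and $\textsc{AddQueryPoint}(\mathbf{z})$ in Algorithm~\ref{alg:dynamic_kde} to obtain $\hat{\mu}_{\mathbf{z}}$ and to update all existing $\hat{\mu}_{\mathbf{x}_i}$ that change because of $\mathbf{z}$; (b) sample new edges \emph{from} $\mathbf{z}$ to existing points by walking through $\mathbf{z}$'s buckets at every weight level, exactly as in the initialisation step; (c) update incident edges of those existing query points $\mathbf{x}_i$ whose estimate $\hat{\mu}_{\mathbf{x}_i}$ has crossed a power-of-two threshold, which is precisely the set returned by the query-hash recovery inside $\textsc{AddDataPoint}$. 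Step (b) contributes an expected $\polylog(n)$ new edges because the total bucket sizes seen by $\mathbf{z}$ are bounded by \lemref{sizeL} combined with the cost-of-kernel analysis; the running time of (a)+(b)+(c) is $n^{o(1)}\cdot\mathrm{cost}(k)$ in expectation, matching the amortised bound of Theorem~\ref{thm:incremental_dynamic_kde} and absorbing the doubling-based reconstruction cost.

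The main obstacle I anticipate is step (c): keeping the spectral approximation valid when the effective sampling probabilities $p_{ij}$ of \emph{old} edges shift as $\hat{\mu}_{\mathbf{x}_i}$ grows. My proposed handling is to exploit the geometric weight level structure, i.e.\ an edge incident to $\mathbf{x}_i$ only needs to be resampled when $\hat{\mu}_{\mathbf{x}_i}$ doubles, and to charge the associated work to the query-hash recovery cost already bounded in \secref{kde_incremental}; together with the periodic full rebuild when $|X|$ doubles, this keeps the amortised update time at $n^{o(1)}\cdot\mathrm{cost}(k)$ while preserving, by a union bound over at most $O(\log n)$ such doublings per vertex, the $9/10$ success probability of the approximate similarity graph guarantee.
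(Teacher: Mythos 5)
Your proposal diverges from the paper's construction in a way that creates a real gap in the dynamic update step.

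\textbf{Structural difference.} The paper does not sample edges by scanning the CKNS buckets directly. Instead, \textsc{ConstructGraph} builds a complete binary tree $\tree$ over $X$, stores a dynamic \textsf{KDE} data structure at every internal node, and samples each of the $L$ neighbours of $\mathbf{x}_i$ by a root-to-leaf walk in which, at every internal node, the walk descends left or right with probability proportional to $\tree'.\textsf{kde}.\hat{\mu}_{\mathbf{x}_i}$. Each walk is recorded as a sample path $\mathcal{P}_{\mathbf{x}_i,\ell}$ stored at the internal nodes it visits. This tree-and-paths machinery is not cosmetic: it is precisely what makes localized resampling possible (and cheap), which is where your argument breaks down. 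Your flat CKNS bucket scan is a plausible alternative for the static construction (modulo the slip that scanning "a single choice of $a,\ell$" misses most subsampled points, since a single hash function recovers a sampled near point only with probability $p_{\mathrm{near},j}^{k_j}$, far from $1$; you would need all $K_{2,j}$ repetitions), but it lacks the structure needed for the update step.

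\textbf{The gap in the update step.} You propose to resample $\mathbf{x}_i$'s edges only when $\hat{\mu}_{\mathbf{x}_i}$ crosses a power-of-two threshold, and to charge this to the query-hash recovery bound. Two problems. First, the set returned by query-hash recovery in $\textsc{AddDataPoint}$ is the set of $\mathbf{q}$ that \emph{collide} with $\mathbf{z}$ in an LSH bucket and whose estimator $Z_{\mathbf{q},a}$ is therefore incremented; this is much more fine-grained than "estimate crossed a power of two," so the two sets are not the same. Second, and more fundamentally, while $\hat{\mu}_{\mathbf{x}_i}$ stays between two consecutive powers of two, $\mathbf{x}_i$'s side contributes probability $0$ to sampling any edge $(\mathbf{x}_i,\mathbf{z})$ for every newly arrived $\mathbf{z}$. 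You claim this is absorbed by sampling from $\mathbf{z}$'s side, but that only gives probability $\propto k(\mathbf{x}_i,\mathbf{z})/\mu_{\mathbf{z}}$, whereas the sparsifier requires $\propto k(\mathbf{x}_i,\mathbf{z})\bigl(1/\mu_{\mathbf{x}_i}+1/\mu_{\mathbf{z}}\bigr)$. The ratio $\mu_{\mathbf{x}_i}/(\mu_{\mathbf{x}_i}+\mu_{\mathbf{z}})$ can be arbitrarily small (e.g., $\mathbf{x}_i$ in a sparse region, $\mathbf{z}$ in a dense one), so there is no constant slack to hide behind, and the cluster-preserving guarantee of Definition~\ref{def:asg} is not maintained.

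\textbf{What the paper does instead.} The paper resamples a sample path $\mathcal{P}_{\mathbf{x}_i,\ell}$ whenever the \textsf{KDE} estimate of $\mathbf{x}_i$ changes at \emph{any} internal node $\tree'$ on the path from the new leaf $\textsc{Leaf}(\mathbf{z})$ to the root — i.e., whenever a collision makes the stored estimate stale at some node — and it resamples only from $\tree'$ downward (Algorithm~\ref{alg:update_CPS_tree}). This keeps the sampling distribution exactly what $\textsc{ConstructGraph}(X\cup\mathbf{z})$ would have produced (Lemma~\ref{lemma:correctnes_update_CPS}). The cost is controlled not by a coarse doubling argument but by Lemmas~\ref{lem:prob_go_to_subtree} and~\ref{lem:expected_collisions_subtree}: because a query point $\mathbf{q}\in Q_{\mu_{i'}\to\mu_i}(\tree')$ only reaches $\tree'$ with probability $\wt{O}(\mu_i/\mu_{i'})$, the expected number of colliding query points at each affected internal node is $\wt{O}(\mu_i\cdot 2^{j+1})$ before subsampling, which after the $1/(2^{j+1}\mu_i)$ subsampling yields $\mathbb{E}[|\mathcal{A}|]=\wt{O}(1)$ resampled paths per insertion (Lemma~\ref{lem:bound_number_A}). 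This is the ingredient your proposal is missing.
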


 On the significance of Theorem~\ref{thm:dynamic_cps},   first notice that the algorithm achieves an update time of $n^{o(1)}\cdot\mathrm{cost}(k) $. For the Gaussian kernel, this corresponds to an update time of $n^{(1+o(1))\frac{1}{4}}$, which is much faster than the $\wt{O}(n)$ time needed to update the fully connected similarity graph. Secondly, our result demonstrates that, as long as the dynamically changing set $X\subset \mathbb{R}^d$ of points presents a clear cluster structure, an approximate similarity graph  $G$ for $X$ can be dynamically maintained, and the conductance of every cluster in $G$ can be theoretically analysed, due to the formal definition of an approximate similarity graph~(Definition~\ref{def:asg}). This is another difference between our work and many heuristic clustering algorithms that lack a theoretical guarantee on their performance.

\subsection{Analysis for the Initialisation\label{sec:sg_init}}

 The main component of the initialisation step is our designed      $\textsc{ConstructGraph}(X)$ procedure, which   builds a (complete) binary tree $\tree$ for $X$, such that each leaf corresponds to a data point $\mathbf{x}_i \in X$, and  each internal node of $\tree$ corresponds to  a dynamic \textsf{KDE} data structure  (described in Section~\ref{sec:dynamic_kde}) on the descendant leaves/data points.
At a high level,   $\textsc{ConstructGraph}(X)$ applies  the  $\textsc{Sample}(X, \tree, \ell)$ procedure to 
     recursively traverse $\tree$ and sample $L=O(\log|X|)$ neighbours for every vertex   based on the \textsf{KDE}s maintained by the internal nodes. It also stores the  paths $\mathcal{P}_{\mathbf{x}_i, \ell}~(\mathbf{x}_i\in X, \ell\leq L)$, each of which   records the internal nodes that $\mathbf{x}_i$ visits when sampling its $\ell$th neighbour; with these stored paths the algorithm  can adaptively resample the tree as new data points arrive. 
In addition, the query points whose \textsf{KDE} estimates are dynamically maintained at any internal node $\tree'$ are the data points $\mathbf{x}_i$ whose sample path $\mathcal{P}_{\mathbf{x}_i, \ell}$ visit $\tree'$.

Our initialisation procedures and corresponding proofs follow   \citet{macgregor2024fastkde} at a high level, however there are several  crucial differences between the two algorithms. First of all, our algorithm explicitly tracks the sample paths $\mathcal{P}_{\mathbf{x}_i, \ell}$ to ensure we can adaptively resample edges of the similarity graph. Secondly, a subtle but key difference between our analysis and theirs is that the weight of every edge $(\mathbf{x}_i, \mathbf{x}_j)$ added by our algorithm is set to be $k(\mathbf{x}_i, \mathbf{x}_j) / \hat{w}(i,j)$. Here, $\hat{w}(i,j)$ depends on   $ \min \{ \tree.\textsf{kde}.\hat{\mu}_{\mathbf{x}_i}, \tree.\textsf{kde}.\hat{\mu}_{\mathbf{x}_j} \}$, where $\tree.\textsf{kde}$ is the dynamic \textsf{KDE} data structure maintained at the root of $\tree$. In particular, every sampled edge $(\mathbf{x}_i, \mathbf{x}_j)$  is added with this weight independent of the edge being sampled from  $\mathbf{x}_i$ or $\mathbf{x}_j$. This modification  allows for correct reweighting and resampling after a sequence of data point insertions.

\subsection{Analysis for Dynamic Updates\label{sec:sg_update}}

 The main technical contribution of our dynamic approximate similarity graph result is the $\textsc{UpdateGraph}(G, \mathcal{T}, \mathbf{z})$  procedure, which dynamically updates our  constructed graph such that the updated graph is an approximate similarity graph for $X\cup\mathbf{z}$. At a high level, $\textsc{UpdateGraph}(G, \mathcal{T}, \mathbf{z})$  works as follows:
\begin{enumerate}
    \item   for every new data point $\textbf{z}$,    $\textsc{UpdateGraph}(G, \mathcal{T}, \mathbf{z})$  
 creates a new leaf node for $\mathbf{z}$, and places it appropriately in   $\tree$ ensuring   that the new tree 
is  a complete binary tree; 
\item $\textsc{UpdateGraph}(G, \mathcal{T}, \mathbf{z})$  inspects the internal nodes from the new leaf $\mathbf{z}$ to the root of the tree, and for every such internal node it adds   $\mathbf{z}$ as a new data point in the corresponding dynamic \textsf{KDE} estimators;
\item 
$\textsc{UpdateGraph}(G, \mathcal{T}, \mathbf{z})$    further checks in every internal node along the sample path $\mathcal{P}_{\mathbf{x}_i, \ell}$ whether the $\textsf{KDE}$ estimate of any $\mathbf{x}_i$ has changed due to the insertion of $\mathbf{z}$. If so, $\mathcal{P}_{\mathbf{x}_i, \ell}$ is added to the set $\mathcal{A}$ of   paths that need to be updated. 
 For every $\mathcal{P}_{\mathbf{x}_i, \ell} \in \mathcal{A}$, $\textsc{UpdateGraph}(G, \mathcal{T}, \mathbf{z})$  finds the highest internal node $\tree'$ where the \textsf{KDE} estimate of $\mathbf{x}_i$ has changed,     removes the path from all nodes below $\tree'$, and resamples the corresponding edges; this is achieved through    $\textsc{Resample}(\tree, \mathcal{P}_{\mathbf{x}_i, \ell})$. Additionally, it employs   $\textsc{Sample}(\{\mathbf{z}\},\tree,\ell )$   to  sample $L$ new edges adjacent to  $\mathbf{z}$.
 \end{enumerate}
 See Figure~\ref{fig:resample_path} for illustration. It is easy to see that the total time complexity of $\textsc{UpdateGraph}(G, \mathcal{T}, \mathbf{z})$  is dominated by (i)  the time complexity of $\textsc{Sample}( \{\mathbf{z}\}, \tree, \ell )$   and $\textsc{Resample}(\tree, \mathcal{P}_{\mathbf{x}, \ell})$, and (ii) the total number of paths $\mathcal{A}$ that need to be reconstructed. We study the time complexity of these two procedures, and our result is as follows:

\begin{figure*}[t]
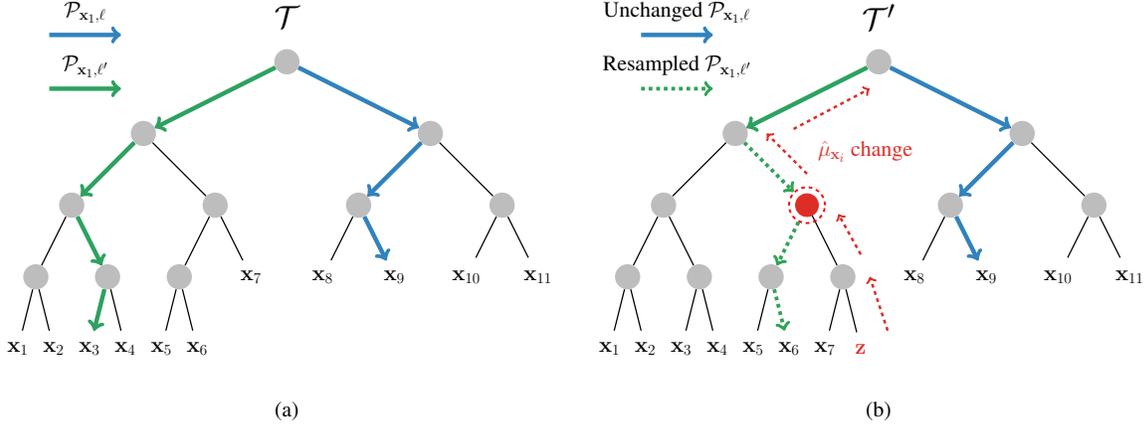

\vskip 0.2in
    \begin{center}
    \centerline{\resizebox{0.9\textwidth}{!}{%
    \tikzfig{figures/sample_path_fixed}
    }}
\caption{Illustration of updating $\tree$ after performing $\textsc{UpdateGraph}(\mathbf{z})$. In (a),     $\mathcal{P}_{\mathbf{x}_1, \ell}$ and $\mathcal{P}_{\mathbf{x}_1, \ell'}$ are  generated by $\textsc{Sample}(\{\mathbf{x}_1\}, \tree, \ell)$ and $\textsc{Sample}(\{\mathbf{x}_1\}, \tree, \ell')$,  and  correspond  to edges $(\mathbf{x}_1, \mathbf{x}_9)$ and $(\mathbf{x}_1, \mathbf{x}_3)$. (b) illustrates that, after adding $\mathbf{z}$, part of $\mathcal{P}_{\mathbf{x}_1, \ell'}$ is updated due to  $\textsc{Resample}(\tree', \mathcal{P}_{\mathbf{x}_1, \ell'})$, and $(\mathbf{x}_1, \mathbf{x}_3)$ is replaced by $(\mathbf{x}_1, \mathbf{x}_6)$; however, the update on $\mathbf{z}$ doesn't change $\mathcal{P}_{\mathbf{x}_1,\ell}$.}
 \label{fig:resample_path}
 \end{center}
\vskip -0.2in
\end{figure*}

% Hence, we study the time complexity of the two sampling procedures, and the expected number of such paths $\mathcal{A}$. Our results are as follows:

\begin{lemma}\label{lem:running_time_sample_procedures}
    For any   $\mathbf{x} \in \mathbb{R}^d$ and   $\ell \in \mathbb{N}$, the running time of    $\textsc{Sample}( \{\mathbf{x}\}, \tree, \ell )$ (Algorithm~\ref{alg:initialise_CPS}) and $\textsc{Resample}(\tree, \mathcal{P}_{\mathbf{x}, \ell})$ (Algorithm~\ref{alg:update_CPS_tree}) is $   n^{o(1)}\cdot \mathrm{
    cost}(k)$. 
    \end{lemma}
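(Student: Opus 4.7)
The plan is to bound the running time of each procedure level by level down the tree $\tree$, exploiting the fact that $\tree$ is a complete binary tree of depth $O(\log n)$ and that each internal node stores a dynamic \textsf{KDE} data structure whose query time is controlled by Theorem~\ref{thm:incremental_dynamic_kde}. In both procedures the work decomposes naturally into a per-level cost (dominated by a constant number of \textsf{KDE} queries) times the height of the tree.

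For $\textsc{Sample}(\{\mathbf{x}\}, \tree, \ell)$ I would argue as follows. The procedure starts at the root and at each internal node $\tree'$ performs a randomised descent into one of its two children. To make this decision in an importance-weighted manner consistent with \citet{macgregor2024fastkde}, it needs the \textsf{KDE} value of $\mathbf{x}$ against the data points contained in each child subtree. This is obtained by invoking $\textsc{AddQueryPoint}(\mathbf{x})$ on the dynamic \textsf{KDE} data structure stored at each child, which by Theorem~\ref{thm:incremental_dynamic_kde} runs in amortised time $\epsilon^{-2} \cdot n^{o(1)} \cdot \mathrm{cost}(k)$. Since the depth of $\tree$ is $O(\log n)$ and only $O(1)$ such queries occur per level (plus bookkeeping to extend $\mathcal{P}_{\mathbf{x},\ell}$), the total running time is at most $O(\log n) \cdot \epsilon^{-2} \cdot n^{o(1)} \cdot \mathrm{cost}(k) = n^{o(1)} \cdot \mathrm{cost}(k)$, after absorbing $\polylog(n)$ and $\epsilon^{-O(1)}$ factors into $n^{o(1)}$.

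For $\textsc{Resample}(\tree, \mathcal{P}_{\mathbf{x},\ell})$ the argument is nearly identical. The stored path $\mathcal{P}_{\mathbf{x},\ell}$ locates the highest internal node $\tree'$ at which the previous branching decision becomes invalid because the \textsf{KDE} estimate of $\mathbf{x}$ there has changed; from $\tree'$ the procedure resamples a fresh path to a leaf, which again involves $O(\log n)$ descent steps and $O(1)$ \textsf{KDE} queries per step via $\textsc{AddQueryPoint}(\mathbf{x})$. The procedure additionally calls $\textsc{DeleteQueryPoint}(\mathbf{x})$ on the subtrees that are abandoned, in order to maintain the invariant that the query set at every internal node equals the set of data points whose current sample path passes through it; by Theorem~\ref{thm:incremental_dynamic_kde} each such call runs in $\epsilon^{-2} \cdot n^{o(1)} \cdot \mathrm{cost}(k)$ time. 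Summing the $O(\log n)$ levels yields the same $n^{o(1)} \cdot \mathrm{cost}(k)$ bound.

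The main obstacle I anticipate is purely bookkeeping: ensuring that each call to $\textsc{AddQueryPoint}$ or $\textsc{DeleteQueryPoint}$ at an internal node $\tree'$ genuinely enjoys the amortised bound of Theorem~\ref{thm:incremental_dynamic_kde}, since the relevant $n$ there is the size of the subtree rooted at $\tree'$ rather than $|X|$. This is fine because the bound is monotone in the number of data points and each subtree has size at most $n$, so the same $n^{o(1)}\cdot\mathrm{cost}(k)$ upper bound applies uniformly. One also needs to verify that the $O(\log n)$ factor from the tree depth and the $\epsilon^{-2}$ factor from the \textsf{KDE} queries are both absorbable into $n^{o(1)}$; this is immediate as long as $\epsilon$ is treated as a parameter polynomial in $\polylog(n)$, which matches the usage convention of the paper.
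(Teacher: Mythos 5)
Your proof is correct and follows essentially the same argument as the paper: both bound the running time by $O(\log n)$ recursive levels times the per-level cost of $\textsc{AddQueryPoint}$, which is $\epsilon^{-2}\cdot n^{o(1)}\cdot\mathrm{cost}(k)$ by Theorem~\ref{thm:incremental_dynamic_kde}. One small inaccuracy: the calls to $\textsc{DeleteQueryPoint}$ on abandoned subtrees are made by $\textsc{UpdateGraph}$ (Lines~\ref{alg:update_CPS:start_remove_edges}--\ref{alg:update_CPS:end_remove_edges}), not inside $\textsc{Resample}$ itself, but this does not affect the bound.
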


Given that \textsc{Sample}$({\mathbf{z}}, \mathcal{T}, \ell)$ is called $L = \widetilde{O}(1)$ times by \textsc{UpdateGraph}$(G, \mathcal{T}, \mathbf{z})$, and \textsc{Resample}$(\mathcal{T}, \mathcal{P}_{\mathbf{x}, \ell})$ is called $|\mathcal{A}|$ times, Lemma~\ref{lem:running_time_sample_procedures}  implies that the running time of \textsc{UpdateGraph} depends on the number of re-sampled paths $|\mathcal{A}|$. Therefore, to prove the time complexity of \textsc{UpdateGraph}, it remains to show that $\mathbb{E}[|\mathcal{A}|]$ is sufficiently small.

Bounding the expected number of re-sampled paths corresponds to bounding the number of query points whose \textsf{KDE} estimates are updated at each affected internal node $\mathcal{T}'$. However, applying Theorem~\ref{thm:incremental_dynamic_kde} directly is \emph{insufficient} because, in our approximate similarity graph, the dynamic \textsf{KDE} data structures start with $Q = X$ rather than $Q = \emptyset$. As such, more careful analysis is needed, and the following notation for the query points $Q$ will be used.

 \begin{definition}
    Let $\tree$ be the \textsf{KDE} tree constructed from $\textsc{ConstructGraph}(X)$, and   $\tree'$   an internal node of $\tree$. Then, for $0 \leq j \leq i \leq \lceil\log(2\cdot \tree.\textsf{kde}.n) \rceil$, we define the set
    \begin{multline}
    Q_{\mu_i \rightarrow \mu_j}(\tree') \triangleq \{ \mathbf{q} \in \tree.\textsf{kde}.Q \mid  \mu_i \leq k(\mathbf{q}, \tree.\textsf{kde}.X) \\
     < 2\mu_i \text{ and } k(\mathbf{q}, \tree'.\textsf{kde}.X) \leq \mu_j\},
    \end{multline}
    where $\tree'.\textsf{kde}$ is the dynamic \textsf{KDE} data structure maintained at $\tree'$.
\end{definition}

The set $Q_{\mu_i \rightarrow \mu_j}(\tree')$ represents the set of query points $\mathbf{q} \in X$ whose \textsf{KDE} estimates are bounded by $\mu_i$ when computed with respect to the data points $X$ at the root of the tree, and bounded by $\mu_j$ for  $j \leq i$ when computed with respect to the data points $X'$ represented at the internal node $\tree'$. Intuitively, this set captures the query points whose \textsf{KDE} estimates decrease when moving from the root of the tree to the internal node $\tree'$. These sets exhibit the following useful property.

\begin{lemma}\label{lem:prob_go_to_subtree}
 It holds for any $\mathbf{q} \in Q_{\mu_i \rightarrow \mu_j}(\tree')$  that
    \[
    \mathbb{P}[\mathbf{q} \in \tree'.\textsf{kde}.Q] = \wt{O}\lp\frac{\mu_j}{\mu_i}\rp. \]
\end{lemma}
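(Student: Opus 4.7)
The plan is to show that the event $\mathbf{q} \in \tree'.\textsf{kde}.Q$ can be reduced to the event that at least one of the $L=\wt{O}(1)$ sample paths $\mathcal{P}_{\mathbf{q},\ell}$ generated by $\textsc{Sample}(\{\mathbf{q}\},\tree,\ell)$ visits the internal node $\tree'$. By a union bound over the $L$ independent samples, it suffices to bound the probability that a single sample path from $\mathbf{q}$ reaches $\tree'$ by $O(\mu_j/\mu_i)$ up to polylogarithmic factors, which the $\wt{O}(\cdot)$ absorbs.

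The core of the argument will be a telescoping computation along the unique root-to-$\tree'$ path in the binary tree $\tree$. At every internal node $\tree''$ on this path, the $\textsc{Sample}$ procedure descends into one of the two children with probability proportional to the approximate \textsf{KDE} value of $\mathbf{q}$ restricted to the data points represented by that child (this is the importance-sampling step that makes the overall sampling distribution proportional to the kernel similarity). Denoting by $\hat\mu^{(\tree'')}_{\mathbf{q}}$ the approximate \textsf{KDE} maintained at node $\tree''$, the conditional probability of moving toward $\tree'$ at $\tree''$ is approximately $\hat\mu^{(\mathrm{child})}_{\mathbf{q}}/\hat\mu^{(\tree'')}_{\mathbf{q}}$, so multiplying along the path collapses to
\[
\mathbb{P}\!\left[\mathcal{P}_{\mathbf{q},\ell}\text{ visits }\tree'\right] \;\approx\; \frac{\hat\mu^{(\tree')}_{\mathbf{q}}}{\hat\mu^{(\tree)}_{\mathbf{q}}}.
\]
Applying the $(1\pm\epsilon)$ guarantees from Theorem~\ref{thm:incremental_dynamic_kde} at both endpoints, combined with the definition of $Q_{\mu_i\to\mu_j}(\tree')$ which gives $k(\mathbf{q},\tree'.\textsf{kde}.X)\le \mu_j$ and $k(\mathbf{q},\tree.\textsf{kde}.X)\ge \mu_i$, yields the desired ratio $\wt{O}(\mu_j/\mu_i)$.

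The main technical obstacle will be controlling the compounded error from using \emph{approximate} \textsf{KDE} values rather than exact ones at each of the $O(\log n)$ levels of $\tree$: naively each level contributes a $(1\pm\epsilon)$ multiplicative slack, so the na\"{\i}ve product gives a $(1+\epsilon)^{O(\log n)}$ blow-up. The way to handle this cleanly is to observe that the per-step ratios are taken with respect to \emph{the same} estimate at the parent (the one used to normalise the two children), so most of the $(1\pm\epsilon)$ factors at intermediate levels cancel rather than compound; only the errors at the root and at $\tree'$ genuinely persist, leaving a single $(1\pm\epsilon)^{O(1)}$ factor that is $O(1)$ and hidden inside $\wt{O}(\cdot)$. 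A second subtlety is that when a child of $\tree''$ does not cover $\tree'$, the quantity $\hat\mu^{(\mathrm{child})}_{\mathbf{q}}$ may not tightly approximate the restricted kernel sum (since we have no guarantee the \textsf{KDE} estimate is accurate on that branch); this is resolved by noting that the $\textsc{Sample}$ routine only requires the \emph{sum} of the two children's estimates at each step to approximate the parent's estimate, which is automatic from the recursive construction of $\tree$.

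Finally, I would combine the single-path bound with the union bound over the $L=O(\log|X|)$ samples to conclude $\mathbb{P}[\mathbf{q}\in\tree'.\textsf{kde}.Q]=\wt{O}(\mu_j/\mu_i)$. The careful bookkeeping of the error cancellations in the telescoping product is the only non-trivial part; the rest is a direct application of the sampling rule and the \textsf{KDE} approximation guarantee inherited from Theorem~\ref{thm:incremental_dynamic_kde}.
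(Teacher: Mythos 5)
Your overall structure matches the paper's: reduce to the event that some sample path visits $\tree'$, apply a union bound over the $L=\wt{O}(1)$ samples, and bound the probability that a single path reaches $\tree'$ by a telescoping product along the root-to-$\tree'$ branch. The paper packages this per-path bound as Lemma~\ref{lem:nodeprob} and simply invokes it; you re-derive it inline, but the route is the same.

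However, your proposed resolution of the compounded-error problem contains a genuine flaw. You claim that ``the per-step ratios are taken with respect to the same estimate at the parent, so most of the $(1\pm\epsilon)$ factors at intermediate levels cancel.'' This is not what happens. In the algorithm (Line~\ref{alg:initialise_cps:sample_prob} of Algorithm~\ref{alg:initialise_CPS}), the step probability at an internal node $\tree''$ with children $\tree_L,\tree_R$ is $\tree_L.\hat\mu_{\mathbf{q}}/(\tree_L.\hat\mu_{\mathbf{q}}+\tree_R.\hat\mu_{\mathbf{q}})$, not $\tree_L.\hat\mu_{\mathbf{q}}/\tree''.\hat\mu_{\mathbf{q}}$. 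Along the path $\tree = T_0 \to T_1 \to \cdots \to T_r = \tree'$, the numerator at step $i$ is $\hat\mu^{T_{i+1}}$ while the denominator at step $i+1$ is the sum $\hat\mu^{T_{i+1,L}}+\hat\mu^{T_{i+1,R}}$; both approximate $k(\mathbf{q},T_{i+1}.\textsf{data})$, but they are distinct estimators computed from different \textsf{KDE} structures, so they differ by up to a $(1\pm\epsilon)^2$ multiplicative factor. These discrepancies genuinely compound across levels; there is no algebraic cancellation. What telescopes is the product of the \emph{true} kernel ratios $k(\mathbf{q},T_{i+1}.\textsf{data})/k(\mathbf{q},T_i.\textsf{data})$, and each level contributes a residual $\bigl(\frac{1+\epsilon}{1-\epsilon}\bigr)^{\pm 1}$ slack. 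The paper's proof of Lemma~\ref{lem:nodeprob} faces exactly this $\bigl(\frac{1+\epsilon}{1-\epsilon}\bigr)^{O(\log n)}$ blow-up (which you dismiss as the ``na\"ive product'') and controls it not by cancellation but by the parameter choice $\epsilon = 1/\log^3 n$ set in $\textsc{ConstructGraph}$, under which $\bigl(\frac{1+\epsilon}{1-\epsilon}\bigr)^{\lceil\log n\rceil} \le 2$. If you replace the cancellation argument with this explicit bound, the rest of your proof goes through and coincides with the paper's.
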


To bound the number of maintained query points whose estimate is updated, we look at the expected number of collisions caused by hashing $\mathbf{z}$ in the dynamic KDE data structure $\tree'.\textsf{kde}$ at every affected internal node $\tree'$. Crucially, by separately analysing the contributions from query points in $Q_{\mu_{i'} \rightarrow \mu_i}(\mathcal{T}')$ for $i' \geq i$ and applying Lemma~\ref{lem:prob_go_to_subtree}, we are able to  bound the expected number of colliding points in the buckets $\mathcal{T}'.\textsf{kde}.B_{H_{\mu_i, a, j, \ell}}(\mathbf{z})$ sufficiently tightly at each affected internal node $\mathcal{T}'$.

\begin{lemma}[Informal version of Lemma~\ref{lem:expected_collisions_subtree}]\label{lem:expected_collisions_subtree_informal}
     Let $\mathbf{z}$ be the data point that is added to   $\tree$ through our designed update procedures, and   $\tree'$ be any internal node that lies on the path from the new leaf $\textsc{Leaf}(\mathbf{z})$ to the root of $\tree$. Then it holds for any $i,a,j$, and $\ell$ that 
    \begin{multline}
        \mathbb{E}_{H_{\mu_i, a, j, \ell}}[|\{\mathbf{q} \in \tree'.\textsf{kde}.Q_{\mu_i} \mid \tree'.\textsf{kde}.H_{\mu_i, a, j, \ell}(\mathbf{z}) \\
         \qquad = \tree'.\textsf{kde}.H_{\mu_i, a, j, \ell}(\mathbf{q}) \} |] = \wt{O}\lp \mu_{i} \cdot 2^{j+1} \rp.
    \end{multline}
\end{lemma}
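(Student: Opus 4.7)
The plan is to bound the expected number of queries in $\tree'.\textsf{kde}.Q_{\mu_i}$ colliding with $\mathbf{z}$ under $H \triangleq \tree'.\textsf{kde}.H_{\mu_i, a, j, \ell}$ by expanding the expectation, partitioning the queries by their root-level bucket index, and combining Lemma~\ref{lem:prob_go_to_subtree} with a per-bucket CKNS-style collision bound. I first write
\begin{equation*}
\mathbb{E}\bigl[|\{\mathbf{q}\in\tree'.\textsf{kde}.Q_{\mu_i}:H(\mathbf{q})=H(\mathbf{z})\}|\bigr] = \sum_{\mathbf{q}}\mathbb{P}\!\left[\mathbf{q}\in\tree'.\textsf{kde}.Q_{\mu_i}\right]\cdot \mathbb{P}_H[H(\mathbf{q})=H(\mathbf{z})],
\end{equation*}
where the first probability averages over the randomness that determined which queries reached $\tree'$ during construction and subsequent updates. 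For each query I define $i^{\ast}(\mathbf{q})$ by $k(\mathbf{q},\tree.\textsf{kde}.X)\in[\mu_{i^{\ast}},\mu_{i^{\ast}+1})$. For $i^{\ast}(\mathbf{q}) < i$, the root \textsf{KDE} is already below $\mu_i$, so the subtree \textsf{KDE} is as well, and $\mathbb{P}[\mathbf{q}\in\tree'.\textsf{kde}.Q_{\mu_i}]\leq 1$ trivially. For $i^{\ast}(\mathbf{q}) = i' \geq i$, the query lies in $Q_{\mu_{i'}\rightarrow\mu_i}(\tree')$ by definition, so Lemma~\ref{lem:prob_go_to_subtree} yields $\mathbb{P}[\mathbf{q}\in\tree'.\textsf{kde}.Q_{\mu_i}]\leq \wt{O}(\mu_i/\mu_{i'})$.

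Substituting, the expectation splits into a cumulative bucket-$i$ collision sum plus $O(\log n)$ terms, each of the form $\wt{O}(\mu_i/\mu_{i'})$ times a strict-bucket-$i'$ collision sum:
\begin{equation*}
\sum_{\mathbf{q}\,:\, i^{\ast}(\mathbf{q})<i}\mathbb{P}_H[H(\mathbf{q})=H(\mathbf{z})] + \sum_{i'\geq i}\wt{O}\!\left(\frac{\mu_i}{\mu_{i'}}\right)\sum_{\mathbf{q}\,:\, i^{\ast}(\mathbf{q})=i'}\mathbb{P}_H[H(\mathbf{q})=H(\mathbf{z})].
\end{equation*}
For each bucket $i'$ I would establish a per-bucket collision bound $\sum_{\mathbf{q}\in\tree.\textsf{kde}.Q_{\mu_{i'}}}\mathbb{P}_H[H(\mathbf{q})=H(\mathbf{z})]=\wt{O}(\mu_{i'}\cdot 2^{j+1})$ via the weight-level decomposition: for $\mathbf{z}\in L_{j''}^\mathbf{q}$, Lemma~\ref{lem:collprob} combined with the choice of $k_j$ in Definition~\ref{def:cost_kernel} gives collision probability at most $\min(1,2^{-(j''-j)})$, and Lemma~\ref{lem:sizeL} applied at each $\mathbf{q}\in\tree.\textsf{kde}.Q_{\mu_{i'}}$ bounds the relevant weight-level sizes by $2^{j''}\mu_{i'}$. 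The resulting geometric series in $j''$ telescopes to the claimed $\wt{O}(\mu_{i'}\cdot 2^{j+1})$.

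Plugging back, the first contribution is $\wt{O}(\mu_i\cdot 2^{j+1})$; each bucket $i'\geq i$ contributes $\wt{O}(\mu_i/\mu_{i'})\cdot\wt{O}(\mu_{i'}\cdot 2^{j+1})=\wt{O}(\mu_i\cdot 2^{j+1})$; and summing the $O(\log n)$ buckets absorbs only a polylogarithmic factor into $\wt{O}$, yielding the stated bound.

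The main obstacle is the per-bucket collision sum, since a naïve symmetric decomposition from $\mathbf{z}$'s perspective would invoke $|L_{j''}^{\mathbf{z}}|\leq 2^{j''}\mu_\mathbf{z}$ and introduce a factor of $\mu_\mathbf{z}$ not controlled by $\mu_{i'}$. I would instead decompose from each query's perspective using $|L_{j''}^\mathbf{q}|\leq 2^{j''}\mu_\mathbf{q}\leq 2^{j''}\mu_{i'}$, relying on kernel symmetry to translate ``queries $\mathbf{q}$ at level $j''$ from $\mathbf{z}$'' into ``data points at level $j''$ from $\mathbf{q}$.'' Care is also needed to track the polylogarithmic contributions from the Andoni--Indyk parameter $k_j=\wt{O}(1)$, the geometric sum over $j''$, and the outer sum over buckets $i'$, so that all log factors collapse into the single $\wt{O}$ in the final bound.
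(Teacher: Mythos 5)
Your proof takes the same high-level route as the paper's proof of Lemma~\ref{lem:expected_collisions_subtree}: decompose the queries in $\tree'.\textsf{kde}.Q_{\mu_i}$ by their root-level KDE bucket $i'$, apply Lemma~\ref{lem:prob_go_to_subtree} to bound the probability that a query with root bucket $i' \ge i$ reaches $\tree'$, combine with a per-bucket collision bound, and sum over the $O(\log n)$ root buckets, which only contributes polylogarithmic overhead. You additionally handle queries whose root KDE is already below $\mu_i$ explicitly (bounding their inclusion probability trivially by one); the paper sums only over $i' \ge i$ and leaves this case implicit, so your addition is a small but correct refinement.

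The genuine gap is in your derivation of the per-bucket collision bound $\sum_{\mathbf{q}\in\tree.\textsf{kde}.Q_{\mu_{i'}}}\mathbb{P}_H[H(\mathbf{q})=H(\mathbf{z})]=\wt{O}(\mu_{i'}\cdot 2^{j+1})$. You correctly observe that the naive decomposition from $\mathbf{z}$'s perspective would invoke $|L_{j''}^{\mathbf{z}}| \le 2^{j''}\mu_{\mathbf{z}}$ with $\mu_{\mathbf{z}}$ uncontrolled, but your proposed fix --- decomposing ``from each query's perspective'' via $|L_{j''}^{\mathbf{q}}| \le 2^{j''}\mu_{\mathbf{q}} \le 2^{j''}\mu_{i'}$ and invoking kernel symmetry --- does not close it. Grouping the sum by the level $j''$ with $\mathbf{z} \in L_{j''}^{\mathbf{q}}$ still requires bounding the \emph{count} $|\{\mathbf{q}\in Q_{\mu_{i'}} : \mathbf{z}\in L_{j''}^{\mathbf{q}}\}|$; Lemma~\ref{lem:sizeL} applied at each individual $\mathbf{q}$ controls the number of \emph{data points} at level $j''$ from that $\mathbf{q}$, which is a different set. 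Kernel symmetry converts the membership condition $\mathbf{z}\in L_{j''}^{\mathbf{q}}$ into $\mathbf{q}\in L_{j''}^{\mathbf{z}}$ but does not convert the cardinality, so after symmetrising you are back to $|L_{j''}^{\mathbf{z}}|$ and the $\mu_{\mathbf{z}}$ factor you wanted to avoid. The paper does not re-derive this bound from scratch: it invokes Lemma~\ref{lemma:number_in_query_bucket} as a black box (recorded as \eqref{eq:bound_full_hash_all_data} inside the proof of Lemma~\ref{lem:expected_collisions_subtree}) to obtain the per-bucket collision bound directly. You should do the same; the weight-level argument as you have sketched it does not establish the count you need.
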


Combining Lemma~\ref{lem:expected_collisions_subtree_informal} with the fact that $\mathbf{z}$ is sampled with probability $\min\left\{1 / (2^{j+1}\mu_i),1\right\}$ for all possible $i \in \left[\lceil\log(2\cdot \tree.\textsf{kde}.n) \rceil\right]$ and $j \in [J_{\mu_i}]$ along every affected internal node $\tree'$, and noting that there are $\wt{O}(1)$ such nodes, we obtain the following result.

\begin{lemma}\label{lem:bound_number_A}
 For every added $\mathbf{z}$, the  expected number of paths $\mathcal{A}$ that need to be resampled by $\textsc{UpdateGraph}(G, \mathcal{T}, \mathbf{z})$  satisfies   $\mathbb{E}[|\mathcal{A}|]=\widetilde{O}(1)$.  
 \end{lemma}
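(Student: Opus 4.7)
The plan is to bound $\mathbb{E}[|\mathcal{A}|]$ by decomposing the count across the internal nodes of $\tree$ whose dynamic KDE data structures are actually modified by inserting $\mathbf{z}$, and then appealing to Lemma~\ref{lem:expected_collisions_subtree_informal} at each such node. First, I would observe that a stored path $\mathcal{P}_{\mathbf{x}_i, \ell}$ can enter $\mathcal{A}$ only if there exists an internal node $\tree'$ lying on the spine from $\textsc{Leaf}(\mathbf{z})$ to the root of $\tree$ at which the \textsf{KDE} estimate $\hat{\mu}_{\mathbf{x}_i}$ maintained inside $\tree'.\textsf{kde}$ changes as a result of $\textsc{AddDataPoint}(\mathbf{z})$ being invoked on $\tree'.\textsf{kde}$; because $\tree$ is a complete binary tree, this spine has length $O(\log n)$, so only $O(\log n)$ internal nodes can contribute paths to $\mathcal{A}$.

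Second, I would fix one such spine node $\tree'$ and bound the expected number of maintained query points $\mathbf{q}\in\tree'.\textsf{kde}.Q$ whose estimate changes. By the description of $\textsc{AddDataPoint}$ in Section~\ref{sec:kde_incremental}, the estimate for $\mathbf{q}$ at $\tree'$ can move only when, for some tuple $(\mu_i, a, j, \ell)$, (i) $\mathbf{z}$ is subsampled at level $(\mu_i, j)$ with probability $\min\{1/(2^{j+1}\mu_i),\,1\}$, and (ii) the hash $\tree'.\textsf{kde}.H_{\mu_i, a, j, \ell}$ sends $\mathbf{z}$ into the same bucket as $\mathbf{q}$. Multiplying the subsampling probability with the expected collision bound $\wt{O}(\mu_i\cdot 2^{j+1})$ from Lemma~\ref{lem:expected_collisions_subtree_informal} yields $\wt{O}(1)$ affected (query, tuple) pairs per parameter tuple. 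Summing by linearity over the $(\mu_i, a, j, \ell)$ ranges and applying a union bound to pass from (query, tuple) pairs to \emph{distinct} affected queries then leaves the expected number of affected queries at $\tree'$ at $\wt{O}(1)$.

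Finally, I would convert this bound on affected queries into a bound on resampled paths. Every query $\mathbf{x}_i$ owns at most $L = \wt{O}(1)$ stored sample paths $\mathcal{P}_{\mathbf{x}_i, \ell}$, so multiplying by $L$ preserves the $\wt{O}(1)$ bound at each spine node. Summing over the $O(\log n)$ spine nodes, and invoking the $\textsc{UpdateGraph}$ rule that each path in $\mathcal{A}$ is charged to the \emph{highest} spine node at which its owner's estimate changes (so that no path is counted more than once), gives the claimed $\mathbb{E}[|\mathcal{A}|] = \wt{O}(1)$.

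The main obstacle I anticipate is the bookkeeping around overlapping sets: each query $\mathbf{q}$ belongs to $Q_{\mu_i}$ for every $\mu_i \geq \hat{\mu}_{\mathbf{q}}$, and its estimate at $\tree'$ may potentially be perturbed across many tuples $(\mu_i, a, j, \ell)$ simultaneously. I would therefore need to verify that the highest-node charging scheme really does prevent double-counting across the spine, and also that the polylogarithmic factors from the ranges of $a\in[K_1]$, $\ell\in[K_2]$, and from the union bound converting tuple-collisions into distinct affected queries, are indeed absorbed by the $\wt{O}(\cdot)$ notation (recalling that $\mathrm{cost}(k)=n^{o(1)}$ and the algorithm's amortised update time in Theorem~\ref{thm:dynamic_cps} already carries a factor of $\mathrm{cost}(k)$).
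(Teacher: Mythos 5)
Your overall decomposition matches the paper's: the paper also charges resampled paths to the $O(\log n)$ internal nodes on the spine from $\textsc{Leaf}(\mathbf{z})$ to the root, bounds at each such node $\tree'$ the expected number of query points whose estimate changes by pairing the subsampling probability $\min\{1/(2^{j+1}\mu_i),1\}$ with the collision bound $\wt{O}(2^{j+1}\mu_i)$ of Lemma~\ref{lem:expected_collisions_subtree_informal}, and then absorbs the at most $L=\wt{O}(1)$ stored paths per affected query (this is the second half of the proof of Lemma~\ref{lem:dynamic_cps_tree_update_total_paths}).

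However, there is a genuine gap in your accounting over the index ranges. You sum ``affected (query, tuple) pairs'' over all tuples $(\mu_i,a,j,\ell)$ and then union-bound down to distinct affected queries, claiming the sum stays $\wt{O}(1)$. The $\ell$ range has size $K_{2,j}=O(\log n\cdot \mathrm{cost}(k))$, and $\mathrm{cost}(k)$ is \emph{not} polylogarithmic: for the Gaussian kernel it is $n^{(1+o(1))/4}$, and even for kernels with $\mathrm{cost}(k)=n^{o(1)}$ this is not absorbed by $\wt{O}(\cdot)$. So your route, as written, only yields $\mathbb{E}[|\mathcal{A}|]=\wt{O}(\mathrm{cost}(k))$ per spine node, which would destroy the lemma (and the update-time bound of Theorem~\ref{thm:dynamic_cps}, where the $\mathrm{cost}(k)$ factor must come from the per-path work, not from $|\mathcal{A}|$). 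Your closing remark that the factor is fine ``recalling that $\mathrm{cost}(k)=n^{o(1)}$'' is exactly where the argument breaks. The paper avoids this by never summing the collision bound over $\ell$ when counting \emph{affected queries}: a query's estimate changes only if it is actually recovered into the set $S$ of Algorithm~\ref{alg:addpointandupdatequeries}, i.e.\ it must lie in $L^{\mathbf{z}}_j$ and in the annulus $Q_{\mu_i}\setminus\bigcup_{j'<i}Q_{\mu_{j'}}$, and it is counted once no matter in how many of the $K_{2,j}$ hash repetitions it collides with $\mathbf{z}$; the expected number of such distinct recovered points per $(\mu_i,a)$ is bounded by pairing the sampling probability with the collision bound a single time, giving $\mathbb{E}[|S|]=\wt{O}(1)$. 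The per-$\ell$ multiplicity only enters the \emph{running-time} analysis (where the $\mathrm{cost}(k)$ factor legitimately appears), not the path count. To repair your proof you would need to replace the sum over $\ell$ with this ``distinct recovered queries'' argument rather than a union bound over tuple collisions.
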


Combining Lemmas~\ref{lem:running_time_sample_procedures} and~\ref{lem:bound_number_A} with the running time analysis of other involved procedures proves the time complexity in the second part of Theorem~\ref{thm:dynamic_cps}. To show that our dynamically maintained $G$ is an approximate similarity graph, we prove in Lemma~\ref{lemma:correctnes_update_CPS} that running        $\textsc{ConstructGraph}(X)$     followed by $\textsc{UpdateGraph}(G, \mathcal{T}, \mathbf{z})$      is equivalent to running $\textsc{ConstructGraph}(X\cup\mathbf{z})$; hence the correctness of our constructed $G$ follows from the one for $\textsc{ConstructGraph}(X)$.

\section{Experiments}\label{sec:experiments}
\newcommand{\dynamicexact}{\textsc{Exact}}
\newcommand{\dynamicrs}{\textsc{DynamicRS}}
\newcommand{\naiveCKNS}{\textsc{CKNS}}
\newcommand{\ourDynamicKDE}{\textsc{Our Algorithm}}
\newcommand{\fc}{\textsc{FullyConnected}}
\newcommand{\knn}{\textsc{kNN}}
\newcommand{\ours}{\textsc{Our Algorithm}}

\newcommand{\kdecaption}{The experimental results for the dynamic \textsf{KDE} algorithms. Shaded cells correspond to the algorithm with the lowest running time.
The running times of the exact algorithm are 164.5, 2715.6, 2179.9, 5251.7, and 16279.9. }
\begin{table*}[t]
  \caption{\kdecaption \newline \label{tab:dynamic_kde}}
  \vspace{-0.3cm}
  \centering
  \resizebox{0.8\textwidth}{!}{
  \begin{tabular}{lcccccc}
    \toprule
 & \multicolumn{2}{c}{\textsc{CKNS}} & \multicolumn{2}{c}{\textsc{DynamicRS}} & \multicolumn{2}{c}{\textsc{Our Algorithm}}\\ 
    \cmidrule(lr){2-3} \cmidrule(lr){4-5} \cmidrule(lr){6-7}
    dataset & Time (s) & Err & Time (s) & Err & Time (s) & Err\\ 
    \midrule
    aloi & $ 619.0{\scriptstyle \pm  10.7}$ & $ 0.050{\scriptstyle \pm  0.006}$ & \cellcolor{gray!25}$ 19.7{\scriptstyle \pm  0.3}$ & \cellcolor{gray!25}$ 0.010{\scriptstyle \pm  0.003}$ & $ 46.9{\scriptstyle \pm  0.7}$ & $ 0.060{\scriptstyle \pm  0.021}$ \\
    msd & $ 14,360.0{\scriptstyle \pm  0.0}$ & $ 0.385{\scriptstyle \pm  0.000}$ & $ 1,887.8{\scriptstyle \pm  0.0}$ & $ 5.430{\scriptstyle \pm  0.000}$ & \cellcolor{gray!25}$ 306.4{\scriptstyle \pm  0.0}$ & \cellcolor{gray!25}$ 0.388{\scriptstyle \pm  0.000}$ \\
    covtype & $ 5,650.3{\scriptstyle \pm  109.0}$ & $ 0.159{\scriptstyle \pm  0.002}$ & $ 309.2{\scriptstyle \pm  2.4}$ & $ 0.018{\scriptstyle \pm  0.003}$ & \cellcolor{gray!25}$ 151.7{\scriptstyle \pm  4.5}$ & \cellcolor{gray!25}$ 0.196{\scriptstyle \pm  0.017}$ \\
    glove & $ 2,640.8{\scriptstyle \pm  1677.7}$ & $ 0.221{\scriptstyle \pm  0.229}$ & $ 1,038.6{\scriptstyle \pm  26.5}$ & $ 0.004{\scriptstyle \pm  0.005}$ & \cellcolor{gray!25}$ 445.6{\scriptstyle \pm  214.6}$ & \cellcolor{gray!25}$ 0.296{\scriptstyle \pm  0.469}$ \\
    census & $ 10,471.5{\scriptstyle \pm  160.6}$ & $ 0.080{\scriptstyle \pm  0.000}$ & $ 3,424.8{\scriptstyle \pm  5.2}$ & $ 0.005{\scriptstyle \pm  0.001}$ & \cellcolor{gray!25}$ 836.5{\scriptstyle \pm  44.6}$ & \cellcolor{gray!25}$ 0.102{\scriptstyle \pm  0.021}$ \\
    \bottomrule
  \end{tabular}}
\end{table*}

\newcommand{\simcaption}{Running time and NMI results for the dynamic similarity graph algorithms. For each dataset, the shaded cells correspond to the algorithm with the lowest running time.}
\begin{table*}[h]
  \caption{\simcaption \newline \label{tab:dynamic_sg}}
  \vspace{-0.3cm}
  \centering
  \resizebox{0.8\textwidth}{!}{
  \begin{tabular}{lcccccc}
    \toprule
    & \multicolumn{2}{c}{\textsc{FullyConnected}} & \multicolumn{2}{c}{\textsc{kNN}} & \multicolumn{2}{c}{\textsc{Our Algorithm}}\\ 
    \cmidrule(lr){2-3} \cmidrule(lr){4-5} \cmidrule(lr){6-7}
    dataset & Time (s) & NMI & Time (s) & NMI & Time (s) & NMI\\ 
    \midrule
    blobs & $ 72.8{\scriptstyle \pm  2.2}$ & $ 1.000{\scriptstyle \pm  0.000}$ & $ 383.6{\scriptstyle \pm  3.9}$ & $ 0.933{\scriptstyle \pm  0.095}$ & \cellcolor{gray!25}$ 21.2{\scriptstyle \pm  0.8}$ & \cellcolor{gray!25}$ 1.000{\scriptstyle \pm  0.000}$ \\
    cifar10 & $ 19,158.2{\scriptstyle \pm  231.6}$ & $ 0.001{\scriptstyle \pm  0.000}$ & $ 3,503.0{\scriptstyle \pm  490.6}$ & $ 0.227{\scriptstyle \pm  0.002}$ & \cellcolor{gray!25}$ 1,403.5{\scriptstyle \pm  152.4}$ & \cellcolor{gray!25}$ 0.339{\scriptstyle \pm  0.021}$ \\
    mnist & \cellcolor{gray!25}$ 1,328.3{\scriptstyle \pm  159.5}$ & \cellcolor{gray!25}$ 0.460{\scriptstyle \pm  0.000}$ & $ 5,796.3{\scriptstyle \pm  234.3}$ & $ 0.812{\scriptstyle \pm  0.003}$ & $ 1,470.3{\scriptstyle \pm  77.9}$ & $ 0.523{\scriptstyle \pm  0.011}$ \\
    \bottomrule
  \end{tabular}}
\end{table*}

In this section, we experimentally evaluate our proposed dynamic algorithms for \textsf{KDE} and approximate similarity graph construction on the Gaussian kernel.
All experiments are performed on a computer server with 64 AMD EPYC 7302 16-Core Processors and 500 Gb of RAM.
We report the 2-sigma errors for all numerical results based on $3$ repetitions of each experiment, and Section~\ref{app:experiments} gives additional experimental details and results.
Our code can be downloaded from \href{https://github.com/SteinarLaenen/Dynamic-Similarity-Graph-Construction-with-Kernel-Density-Estimation}{https://github.com/SteinarLaenen/Dynamic-Similarity-Graph-Construction-with-Kernel-Density-Estimation}.

 We evaluate the algorithms on a variety of real-world and synthetic data, and we summarise their properties in Table~\ref{tab:full-dataset-info} in Section~\ref{app:experiments}.
The datasets cover a variety of domains, including synthetic data (blobs~\citep{scikit-learn}), images (mnist~\citep{lecun_mnist_1998}, aloi~\citep{aloi_dataset}), image embeddings (cifar10~\citep{resnet, cifar}), word embeddings (glove~\citep{glove_dataset}),  mixed numerical datasets (msd~\citep{msd-main-ref}, covtype~\citep{covtype_dataset}, and census~\citep{us_census_data}).

\subsection{Dynamic \textsf{KDE}}

To the best of our knowledge, our proposed algorithm is the first which solves the dynamic kernel density estimation problem.
For this reason, we compare our algorithm against the following baseline approaches:
\begin{enumerate}
    \item \dynamicexact: the exact kernel density estimate, computed incrementally as data points are added;
    \item \dynamicrs: a dynamic \textsf{KDE} estimator based on uniform random sampling of the data. For all experiments, we uniformly subsample the data with sampling probability $0.1$; 
    \item \naiveCKNS: we use the fast kernel density estimation algorithm proposed by \citet{charikarKDE}, and fully re-compute the estimates every time the data is updated.
\end{enumerate}
For each dataset, we set the parameter $\sigma$ of the Gaussian kernel such that the average kernel density $\mu_\mathbf{q} \cdot n^{-1} \approx 0.01$~\citep{karppa2022deann}.
We split the data points into chunks of size 1,000 for aloi, msd, and covtype, and size 10,000 for glove and census.
Then, we add one chunk at a time to the set of data points $X$ and the set of query points $Q$. 
At each iteration, we evaluate the kernel density estimates $\hat{\mu}_\mathbf{q}$ produced by each algorithm with the relative error~\citep{karppa2022deann}
 \[
    \mathrm{err} = \frac{1}{|Q|} \sum_{\mathbf{q} \in Q} \left| (\hat{\mu}_\mathbf{q} - \mu_\mathbf{q}) / \mu_\mathbf{q} \right|.
\]
Table~\ref{tab:dynamic_kde} gives the total running time and final relative error for each algorithm, and Figure~\ref{fig:kde_update_times} shows the time taken to update the data structure for the census dataset at each iteration.
From these results, we observe that our algorithm scales better to large datasets than the baseline algorithms, while maintaining low relative errors.
 Figure~\ref{fig:kde_update_times} further shows that the update time of our algorithm is sub-linear in the number of data points, as shown theoretically in  Theorem~\ref{thm:incremental_dynamic_kde}.
The update time of the other algorithms is linear in $n$ and their total running time is quadratic.

\begin{figure}[h]
    \centering
    \includegraphics[width=0.45\textwidth]{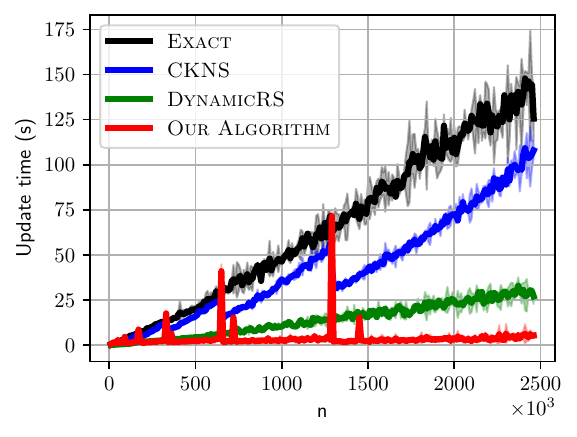}
    \vspace{-0.4cm}
    \caption{Dynamic \textsf{KDE} update time on the census dataset \label{fig:kde_update_times}}
\end{figure}

\subsection{Dynamic Clustering}
For the dynamic similarity graph algorithm, we compare against the two baseline algorithms:
\begin{enumerate}
    \item \fc: the fully-connected similarity graph with the Gaussian kernel; 
    \item \knn: the $k$-nearest neighbour graph, for $k = 20$.
We split the datasets into chunks of $1,000$ and add each chunk to the dynamically constructed similarity graph, adding one complete ground-truth cluster at a time.
\end{enumerate}
At each iteration, we apply the spectral clustering algorithm to the constructed similarity graph and report the normalised mutual information (NMI)~\citep{nmi} with respect to the ground truth clusters.
Table~\ref{tab:dynamic_sg} shows the total running time and final NMI values for each algorithm on each dataset.
From these results, we see that our algorithm achieves a competitive NMI value with faster running time than the baseline algorithms.

\section{Conclusion}
This paper develops dynamic algorithms for \KDE\ and approximate similarity graph constructions. Compared with many heuristic methods such as the \dynamicrs~algorithm and dynamic $k$-NN graphs, our algorithms have   theoretically proven approximation guarantees. Our empirical evaluation further demonstrates their  competitive performance, with our dynamic \KDE\ algorithm showing better scalability on large datasets and our similarity graph construction achieving faster running time while maintaining comparable clustering quality.

In addition to their theoretical advances, our algorithms could have several other applications. For example,  our dynamic \KDE\ tree is able to  maintain randomly sampled neighbours of a given vertex by edge weight of a  similarity graph, which is a common primitive in many algorithms for similarity graphs~\citep{bakshi2023subquadratic}, and our \KDE\ tree could facilitate the design of dynamic variants of these algorithms

\section*{Impact Statement}

This paper presents work whose goal is to advance the field of Machine Learning. There are many potential societal consequences of our work, none  of which we feel must be specifically highlighted here.

\section*{Acknowledgements}

The  third author of the paper is supported by an EPSRC Early Career Fellowship (EP/T00729X/1).

\bibliography{reference}
\bibliographystyle{icml2025}

%%%%%%%%%%%%%%%%%%%%%%%%%%%%%%%%%%%%%%%%%%%%%%%%%%%%%%%%%%%%%%%%%%%%%%%%%%%%%%%
%%%%%%%%%%%%%%%%%%%%%%%%%%%%%%%%%%%%%%%%%%%%%%%%%%%%%%%%%%%%%%%%%%%%%%%%%%%%%%%
% APPENDIX
%%%%%%%%%%%%%%%%%%%%%%%%%%%%%%%%%%%%%%%%%%%%%%%%%%%%%%%%%%%%%%%%%%%%%%%%%%%%%%%
%%%%%%%%%%%%%%%%%%%%%%%%%%%%%%%%%%%%%%%%%%%%%%%%%%%%%%%%%%%%%%%%%%%%%%%%%%%%%%%
\newpage
\appendix
\onecolumn

\section{Additional Background Knowledge} \label{sec:background}

 This section presents additional background knowledge used in our analysis, and is organised as follows: 
Section~\ref{sec:notation} lists further notation for graphs and useful facts in spectral graph theory; Section~\ref{sec:CPS_appendix} formally defines  the concept of an approximate similarity graph.

\subsection{Notation\label{sec:notation}}

Let  $G=(V,E,w)$ be an undirected graph of $n$ vertices and weight function $w: V\times V\rightarrow \mathbb{R}_{\geq 0}$. 
For any edge $e = (u, v) \in E$, we write $w_G(u,v)$ or $w_G(e)$ to express the   weight of $e$. For a vertex $u \in V$, we denote its  \emph{degree}  by $\deg_G(u) \triangleq \sum_{v \in V} w_G(u,v)$, and the volume for any $S \subseteq V$ is defined as $\vol_G(S)\triangleq \sum_{u\in S} \deg_{G}(u)$. For any two subsets $S, T \subset V$, we define the \emph{cut value} between $S$ and $T$ by  $w_G(S, T) \triangleq \sum_{e \in E_G(S, T)} w_e$, where $E_G(S, T)$ is the set of edges between $S$ and $T$. Moreover, for any $S\subset V$,  the conductance of $S$ is defined as 
\[
\Phi_G(S) \triangleq \frac{w_G(S, V\setminus S) }{\min\{\vol_G(S), \vol_{G}(V \setminus S)\}}
\]
if $S \neq \emptyset$, and $\Phi_G(S) = 1$ if $S = \emptyset$.  For any integer $k\geq 2$, we call 
 subsets of vertices $A_1,\ldots, A_k$ a $k$-way partition of $G$ if $\bigcup_{i=1}^k A_i = V$ and 
 $A_i\cap A_j=\emptyset$ for different $i$ and $j$. We   define the \emph{$k$-way expansion} of $G$ by
\begin{equation*}
    \rho_G(k) \triangleq \min_{\mathrm{partitions \:} A_1, \dots, A_k} \max_{1\leq i \leq k} \Phi_G(A_i).
\end{equation*}

Part of our analysis is based on algebraic properties of graphs, hence we define graph Laplacian matrices.  For a graph $G = (V, E, w)$,  let $D_G\in\mathbb{R}^{n\times n}$ be the diagonal matrix defined by $D_G(u,u) = \deg_G(u)$ for all $u \in V$. We  denote by  $A_G\in\mathbb{R}^{n\times n}$  the  \emph{adjacency matrix}  of $G$, where $A_G(u,v) = w_G(u,v)$ for all $u, v \in V$. The \emph{normalised Laplacian matrix} of $G$ is defined as \[
\mathcal{L}_G \triangleq I - D_G^{-1/2} A_G D_G^{-1/2},\] where $I$ is the $n \times n$ identity matrix.  The normalised Laplacian $\mathcal{L}_G$ is symmetric and real-valued, and   has $n$ real eigenvalues which we   write as  $\lambda_1(\mathcal{L}_G) \leq \ldots \leq \lambda_n(\mathcal{L}_G)$. We sometimes refer to the $i$th eigenvalue of $\mathcal{L}_G$ as $\lambda_i$ if it is clear from the context. It is known that  $\lambda_1(\mathcal{L}_G)=0$ and $\lambda_n(\mathcal{L}_G)\leq 2$~\citep{chung1997spectral}.  The following result will be used in our analysis.

\begin{lemma}[higher-order Cheeger inequality, \cite{higherCheeg}]\label{lem:Higher Cheeger}
 It holds for any graph $G$ and $k \geq 2$ that 
\begin{equation}
    \frac{\lambda_k(\mathcal{L}_G)}{2} \leq \rho_G(k) = O \lp k^3 \rp \sqrt{\lambda_k(\mathcal{L}_G)}. \label{eq:Higher Cheeger}
\end{equation}
\end{lemma}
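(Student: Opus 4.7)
The statement splits into a lower and an upper bound on $\rho_G(k)$, of very unequal difficulty, and I would treat them separately.

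The lower bound $\lambda_k(\calL_G)/2 \le \rho_G(k)$ follows from the Courant--Fischer characterisation $\lambda_k(\calL_G) = \min_{\dim U = k} \max_{0\neq x\in U} (x^\top\calL_G x)/(x^\top x)$. Given an optimal $k$-way partition $A_1,\ldots,A_k$, I would take the $D_G$-normalised indicator vectors $g_i = D_G^{1/2}\mathbf{1}_{A_i}/\|D_G^{1/2}\mathbf{1}_{A_i}\|$; these are orthonormal (since the $A_i$ are disjoint), and a direct expansion gives $g_i^\top \calL_G g_i = w_G(A_i, V\setminus A_i)/\vol_G(A_i)$. Any vector in $U = \mathrm{span}\{g_1,\ldots,g_k\}$ then has Rayleigh quotient at most $2\max_i \Phi_G(A_i) = 2\rho_G(k)$, with the factor $2$ coming from the $\min$ in the denominator of $\Phi_G$. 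Plugging $U$ into the variational formula yields $\lambda_k(\calL_G)\le 2\rho_G(k)$.

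The upper bound $\rho_G(k) = O(k^3)\sqrt{\lambda_k(\calL_G)}$ is the substantive direction, and I would attack it via spectral embedding and geometric partitioning. Let $f_1,\ldots,f_k$ be the bottom eigenvectors of $\calL_G$ and define the vertex embedding $F:V\to \mathbb{R}^k$ by $F(v) = (f_1(v),\ldots,f_k(v))^\top/\sqrt{\deg_G(v)}$. The Rayleigh quotient bound for each $f_i$ gives the averaged Lipschitz estimate $\sum_{(u,v)\in E} w_G(u,v)\|F(u)-F(v)\|^2 \le k\,\lambda_k(\calL_G)\cdot\|F\|^2_{D_G}$. I would then (i) pass to a spherical embedding $\widehat F(v) = F(v)/\|F(v)\|$, (ii) extract $k$ disjoint subsets $S_1,\ldots,S_k\subseteq V$ such that each $S_i$ lies in a bounded-diameter region of the sphere and $\sum_i \vol_G(S_i) = \Omega(\vol_G(V))$, and (iii) run a Cheeger-style sweep cut on each $S_i$, using a test function such as the squared distance to the centre of $S_i$, to carve out the final disjoint sets $A_1,\ldots,A_k$.

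The hard part is step (ii): producing $k$ simultaneously disjoint, spatially localised subsets while charging their boundary cuts only once against the spectral energy bound. My plan is to use a random partitioning of the sphere into ``spectral caps'' --- a padded-decomposition argument adapted to $\ell_2^k$ geometry --- so that the probability an edge $(u,v)$ is cut by the partition is proportional to $\|\widehat F(u)-\widehat F(v)\|$ times $\mathrm{poly}(k)$. Averaging over the random partition and pigeonholing yields $k$ candidate regions whose internal Cheeger cut, after the classical sweep rounding in step (iii), is $O(k^3)\sqrt{\lambda_k(\calL_G)}$; the polynomial-in-$k$ overhead arises from the union bound over the $k$ regions and from the distortion in passing from $F$ to $\widehat F$. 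This geometric partitioning step is the technical heart of the result, and is where I expect the cubic dependence on $k$ to be unavoidable with this style of argument.
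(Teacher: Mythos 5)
The paper does not prove this lemma; it cites it directly from \citet{higherCheeg}, so there is no in-paper argument for your attempt to be checked against. What you have written is, in outline, a faithful summary of the original proof of Lee, Gharan, and Trevisan: the easy direction by plugging the $k$-dimensional span of degree-normalised indicator vectors into Courant--Fischer, and the hard direction by forming the $k$-dimensional spectral embedding, radially projecting to the sphere, performing a random geometric decomposition into ``spectral caps'' with $\mathrm{poly}(k)$ distortion, and running a Cheeger sweep on each region. Two remarks on the details. First, in the lower bound your attribution of the factor $2$ is off: the $\min$ in the denominator of $\Phi_G$ only makes each conductance \emph{larger}, so it works in your favour rather than costing anything; the $2$ actually arises because $\mathcal{L}_G$ does not diagonalise in the $\{g_i\}$ basis, and bounding $x^\top\mathcal{L}_G x$ for a general unit vector $x=\sum_i c_i g_i$ requires a cross-term estimate of the form $(a-b)^2 \le 2(a^2+b^2)$ applied to edges between distinct parts. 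Second, the paper's $\rho_G(k)$ ranges over $k$-way \emph{partitions} of $V$, whereas the sweep-cut construction naturally produces $k$ disjoint but non-covering sets, so a small extra argument (absorbing the uncovered vertices without blowing up the worst conductance) is needed to match the definition as stated. Beyond those fixable points, the padded decomposition step is doing essentially all the work in the upper bound, and as you yourself note it is only gestured at; your proposal is an accurate roadmap to the published proof rather than a self-contained one.
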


\subsection{Approximate Similarity Graph}\label{sec:CPS_appendix}

We first introduce the notion of cluster-preserving sparsifiers.

\begin{definition}[Cluster-preserving sparsifier, \cite{SZ19}]\label{def:CPS}
     Let $F=(V, E, w)$ be any graph with $k$ clusters, and $\{S_i\}^{k}_{i=1}$ a $k$-way partition of $F$ corresponding to $\rho_F(k)$. We call a re-weighted subgraph $G=(V, E' \subset E, w_G)$ a cluster-preserving sparsifier of $F$ if (i) $\Phi_{G}(S_i) = O(k \cdot \Phi_{F}(S_i))$ for $1 \leq i \leq k$, and (ii) $\lambda_{k+1}(\mathcal{L}_G) = \Omega(\lambda_{k+1}(\mathcal{L}_F))$.
\end{definition}

Notice that 
 graph $F=(V,E, w)$ has exactly $k$ clusters if (i) $F$ has $k$ disjoint subsets $S_1,\ldots, S_k$ of low conductance, and (ii) any $(k + 1)$-way partition of $F$ would include some  
$A \subset V$ of high conductance,
which would be implied by a lower bound on $\lambda_{k+1}(\mathcal{L}_F)$ due to \eqref{eq:Higher Cheeger}.
Together with the well-known eigen-gap heuristic~\citep{sac/Luxburg07} and theoretical analysis on spectral clustering~\citep{peng_partitioning_2017}, these two conditions  ensure that the $k$ optimal clusters in $F$ have low conductance in $G$ as well. Based on this, we define approximate similarity graphs as follows:

\begin{definition}[Approximate Similarity Graph] \label{def:asg}
For any set $X\subset \mathbb{R}^d$ of $n$ data points and the fully connected similarity graph $F$ on $X$, we call a  sparse graph $G$ with $\widetilde{O}(n)$ edges an approximate similarity graph on $X$ if $G$ is a cluster-preserving sparsifier of $F$.  
\end{definition}

We call $G$ an approximate similarity graph in the extended abstract if $G$ satisfies the properties of Definition~\ref{def:asg}.

\section{Omitted Detail from Section~\ref{sec:dynamic_kde}}\label{sec:dynamic_kde_appendix}
This section provides the detailed explanations omitted from Section~\ref{sec:dynamic_kde}, and is organised as follows: Section~\ref{sec:dynamic_kde_initialise_query} analyses the initialisation and querying procedures. 
Section~\ref{sec:dynamic_kde_add_data} analyses the dynamic update step for adding data points. Finally, Section~\ref{sec:dynamic_kde_proof_theorem} proves Theorem~\ref{thm:incremental_dynamic_kde}.

Algorithm~\ref{alg:dynamic_kde} describes all the used procedures and corresponding subprocedures, whose performance is summarised in Theorem~\ref{thm:incremental_dynamic_kde}.

\begin{algorithm}[H]
  \caption{$\textsc{DynamicKDE}(X, Q, \varepsilon)$}  
  \label{alg:dynamic_kde} 
  \begin{algorithmic}[1]

    \STATE \textbf{Members} 
    \STATE \hspace{1em} $\hat{\mu}_\mathbf{q}$ \algorithmiccomment{Query estimates for every point $\mathbf{q} \in Q$}
    \STATE \hspace{1em} $\varepsilon$ \COMMENT{Precision parameter for \textsf{KDE} estimate}
    \STATE \hspace{1em} For $\mu_i \in M$, create set $Q_{\mu_i}$ \COMMENT{Set of  data points with  query estimate  less than $\mu_i$} 
    \PROCEDURE{Initialise($X$, $Q$, $\varepsilon$)}
    \STATE $n', n \leftarrow |X|$, $\bar{m} \gets \left\lceil\frac{C}{\epsilon^2}\right\rceil$, $\bar{N} \gets \left\lceil \log(2n')\right\rceil$   \COMMENT{$C$ is a universal constant} \alglinelabel{alg:dynamic_kde_constant}
    % \State $\bar{m} \gets \left\lceil\frac{C}{\epsilon^2}\right\rceil$, $\bar{N} \gets \left\lceil \log(2n')\right\rceil$   \Comment{$C$ is a universal constant} \alglinelabel{alg:dynamic_kde_constant}
    \STATE $K_1 \gets \bar{m} \cdot \bar{N}$ \COMMENT{Number of independent estimators used} \alglinelabel{algdef:K1}
    \STATE $J_{\mu_i} \gets \left\lceil\log {\frac{2 n'}{\mu_i}}\right\rceil$ for $\mu_i \in M$ \COMMENT{See~Definition~\ref{def:geometric_weight_levels}}
    % \State \State $K_{2, j}\gets 200\log n' \cdot p_{\mathrm{near},j}^{-k_j}$ for $j \in [J_{\mu_i}]$\Comment{See Claim~\ref{claim:collprob} and \eqref{eq:kj} for def of $p_{\mathrm{near},j}$ and $k_j$}
    \STATE $\textsc{PreProcess}(X, \varepsilon)$ \COMMENT{Initialise the~\cite{charikarKDE} data structure (Algorithm~\ref{alg:preprocess_dataset})} 
    \STATE $\textsc{PreProcessQueryPoints}(X, Q, \varepsilon)$ \COMMENT{ (Algorithm~\ref{alg:full_hash_procedures})}
    \ENDPROCEDURE 

    \PROCEDURE{AddQueryPoint($\mathbf{q}$)}{}
    \STATE $\hat{\mu}_{\mathbf{q}} \gets \textsc{QueryPoint}(X, \mathbf{q}, \varepsilon)$ \COMMENT{(Algorithm~\ref{alg:querypoints})}
    \STATE Store $\hat{\mu}_{\mathbf{q}}$
    \STATE  Add $\mathbf{q}$ to all sets $Q_{\mu_i}$ such that $\hat{\mu}_\mathbf{q} \leq \mu_i$
    \STATE $\textsc{AddFullHash}(\mathbf{q})$ \COMMENT{(Algorithm~\ref{alg:full_hash_procedures})}
    \ENDPROCEDURE

    \PROCEDURE{AddDataPoint($\mathbf{z}$)}{}
    \STATE $n \gets n+1$
    \IF{$n - n' > n'$} \alglinelabel{line:dynamic_kde:recompute_start}
    \STATE {\textsc{Initialise}$(X \cup \mathbf{z}, Q, \varepsilon)$ \COMMENT{We reconstruct the data structure}}
    \ELSE{
    \STATE $\textsc{AddPointAndUpdateQueries}(\mathbf{z}, Q, \varepsilon)$ \COMMENT{(Algorithm~\ref{alg:addpointandupdatequeries})}
    }
    \ENDIF
    \ENDPROCEDURE

    \PROCEDURE{DeleteQueryPoint($\mathbf{q}$)}{}
    \STATE $\textsc{DeleteFromData}(\mathbf{z})$ \COMMENT{(Algorithm~\ref{alg:delete_querypoint})}
    \STATE Delete $\hat{\mu}_{\mathbf{q}}$
    \ENDPROCEDURE

  \end{algorithmic}
\end{algorithm}

\begin{algorithm}[H]
	\caption{DynamicKDE Preprocessing}  
	\label{alg:preprocess_dataset} 
	\begin{algorithmic}[1]
		
		\PROCEDURE{$\textsc{PreProcess}(X,\epsilon)$}{}
	
		\STATE \textbf{Input:} the set $X$ of data points, and the precision estimate $\epsilon$
  	 
        \FOR{$\mu_i \in M$}  
		\FOR[$K_1 = O(\log n'/\epsilon^2)$ independent repetitions]{$a=1,2,\dots,K_1$}
		\FOR[$J_{\mu_i}=\left\lceil\log \frac{2 n'}{\mu_i}\right\rceil$ geometric weight levels]{$j=1,2,\dots,J_{\mu_i}$} %\COMMENT{$J_{\mu_i}=\left\lceil\log \frac{2 n'}{\mu_i}\right\rceil$ geometric weight levels}
		\STATE $K_{2,j}\gets 200\log n' \cdot p_{\mathrm{near},j}^{-k_j}$ 
        \COMMENT{See Lemma~\ref{lem:collprob} and \eqref{eq:kj} for def. of $p_{\mathrm{near},j}$ and $k_j$} \alglinelabel{line:q-K2}
		\STATE $p_{\text{samp}} \gets \min \left\{\frac{1}{2^{j+1}\mu_i},1\right\}$ \alglinelabel{line:setpalg1} 
		\STATE Sample every $\mathbf{x} \in X$ w.p. $p_{\text{samp}}$, and let $Z_{\mu_i, j}$ be the set of sampled elements  \alglinelabel{line:sub-sample4} 
		\FOR {$\ell=1,2,\dots,K_{2,j}$}  
		\STATE Draw a hash function $H_{\mu_i,a,j,\ell}$ from hash family $\mathcal{H}^{k_j}$ (Lemma~\ref{lem:collprob}) 

        \FOR{$\mathbf{x} \in Z_{\mu_i, j}$}
        \STATE Store $\mathbf{x}$ in the bucket $B_{H_{\mu_i,a,j,\ell}}(\mathbf{x})$
        \ENDFOR

		\ENDFOR
		\ENDFOR
		\STATE Sample every $\mathbf{x} \in X$ w.p. $\frac{1}{2 n'}$, and let $\wt{X}_{\mu_i, a}$ be the set of sampled elements.\alglinelabel{line:beyondJsample}
		\STATE Store $\wt{X}_{\mu_i,a}$ \COMMENT{Set $\wt{X}_{\mu_i, a}$ will be used to recover points beyond $L_{J+1}$}
		\ENDFOR
    \ENDFOR
		\ENDPROCEDURE
	\end{algorithmic}
\end{algorithm}

\begin{algorithm}[H]
	\caption{DynamicKDE Query Procedures}
	\label{alg:querypoints} 
	\begin{algorithmic}[1]
		
		{\PROCEDURE{$\textsc{QueryMuEstimate}(X,\mathbf{q},\epsilon,\mu_i)$}{}
		\STATE {\textbf{Input:} set $X$ of data points, query point $\mathbf{q}$,  precision estimate $\epsilon$, and \textsf{KDE} estimate $\mu_i$.} 	% \STATE $\bar{m} \gets \left\lceil\frac{C}{\epsilon^2}\right\rceil$, $\bar{N} \gets \left\lceil \log(2n')\right\rceil$  \COMMENT{$C$ is a universal constant}
  %       \STATE $K_1 \gets \bar{m} \cdot \bar{N}$ 
		% \STATE $J_{\mu_i} \gets \left\lceil\log {\frac{2 n'}{\mu_i}}\right\rceil$ \COMMENT{We use geometric weight levels with base $2$, see~Definition~\ref{def:geometric_weight_levels}}
		\FOR[$O(\log n'/\epsilon^2)$ independent repetitions]{$a=1,2,\dots,K_1$}
		\FOR{$j=1,2,\dots,J_{\mu_i}$}
		\STATE $K_{2,j}\gets 200\log n' \cdot p_{\mathrm{near},j}^{-k_j}$ 
		\FOR {$\ell=1,2,\dots,K_{2,j}$}
		\STATE Recover points $\mathbf{x} \in B_{H_{\mu_i, a,j,\ell}}(\mathbf{q})$ such that $\mathbf{x} \in L^{\mathbf{q}}_j$ \alglinelabel{alg:querypoint:recovered_points_layer}
		\ENDFOR
		\ENDFOR

		\STATE Recover points $\mathbf{x} \in \wt{X}_{\mu_i, a}$ such that $\mathbf{x} \in L^{\mathbf{q}}_{J_{\mu_i}+1}$. \alglinelabel{alg:querypoint:recovered_points_outside_layer}
		\STATE $S \gets $ set of all recovered points in this iteration
		\FOR{$\mathbf{x}_{i'}\in S$} \alglinelabel{alg:querypoint:start_compute_estimate}
		\STATE $w_{i'} \gets k(\mathbf{x}_{i'},\mathbf{q})$
		\IF {$\mathbf{x}_{i'}\in L^{\mathbf{q}}_j$ for some $j\in[J_{\mu_i}]$}
		\STATE $p_{i'}\gets\min\left\{\frac{1}{2^{j+1}\mu_i},1\right\}$
		\ELSIF{$\mathbf{x}_i\in X\setminus \left(\bigcup_{j\in[J_{\mu_i}]}L^{\mathbf{q}}_j\right)$} 
		\STATE $p_{i'}\gets \frac{1}{2n'}$ 
		\ENDIF
		\ENDFOR
		\STATE $Z_{\mathbf{q}, a}\gets\sum_{\mathbf{x}_{i'}\in S} w_{i'} / p_{i'}$\alglinelabel{line:est4}
        \STATE Store $Z_{\mathbf{q}, a}$
		\ENDFOR
        \FOR[Get median of $\bar{N} = O(\log n)$ means of size $O(1/\varepsilon^{2})$]{$b=1,2,\dots,\bar{N}$} \alglinelabel{alg:query_point_start_median_compute}
        \STATE $\bar{Z}_{\mathbf{q},b} \gets \frac{1}{\bar{m}} \sum_{a=(b-1)\bar{m}+1}^{b\bar{m}} Z_{\mathbf{q},a}$ \alglinelabel{alg:querypoint:emperical_mean_Z}
        \ENDFOR
        \STATE \textbf{return} $\mathrm{Median}\left(\bar{Z}_{\mathbf{q}, 1}, \bar{Z}_{\mathbf{q}, 2}, \ldots, \bar{Z}_{\mathbf{q}, \bar{N}}\right)$ \alglinelabel{alg:querypoint:end_compute_estimate}
		\ENDPROCEDURE
    \STATE 
    \PROCEDURE{$\textsc{QueryPoint}(X,\mathbf{q},\epsilon)$}{}
    \STATE {\textbf{Input:}   set $X$ of data points, query point $\mathbf{q}$,   precision estimate $\epsilon$.}
    \FOR {$\mu_i \in [\mu_{\log(2n')}, \mu_{\log(2n') - 1}, \ldots, \mu_{1}, \mu_{0}] $} \alglinelabel{alg:querypoint:start_different_estimates}
    \IF {$\textsc{QueryMuEstimate}(X,\mathbf{q},\epsilon,\mu_i) > \mu_i$} \alglinelabel{alg:querypoint:estimate_check}
    \STATE \textbf{return} $\textsc{QueryMuEstimate}(X,\mathbf{q},\epsilon,\mu_{i+1})$ \COMMENT{If estimate is larger than $\mu_i$, return the previous estimate.} \alglinelabel{alg:querypoint:used_query_estimate}
    \ENDIF
    \ENDFOR
    \STATE \textbf{return} $0$ 
    \ENDPROCEDURE}
	\end{algorithmic}

\end{algorithm}

\begin{algorithm}[H]
  \caption{DynamicKDE Full Hash Procedures}
  \label{alg:full_hash_procedures}
  \begin{algorithmic}[1]

    \PROCEDURE{PreProcessQueryPoints($X, Q, \varepsilon$)}{}
    \STATE \textbf{Input:} the set $X$ of data points, the set of query points $Q$, and the precision estimate $\varepsilon$.
    \FOR{$\mathbf{q} \in Q$}
      \STATE $\hat{\mu}_\mathbf{q} \gets \textsc{QueryPoint}(X, \mathbf{q}, \varepsilon)$ 
             \alglinelabel{alg:full_hash_procedures:query_point}
      \STATE Store $\hat{\mu}_\mathbf{q}$
      \STATE Add $\mathbf{q}$ to all sets $Q_{\mu_i}$ such that $\hat{\mu}_\mathbf{q} \leq \mu_i$
    \ENDFOR

    \FOR{$\mu_i \in M$}
      \FOR{$a = 1, 2, \dots, K_1$}
        \FOR{$j = 1, 2, \dots, J_{\mu_i}$}
          \STATE $K_{2,j} \gets 200 \log n' \cdot p_{\mathrm{near},j}^{-k_j}$
          \FOR{$\ell = 1, 2, \dots, K_{2,j}$}
            \FOR{$\mathbf{q} \in Q_{\mu_i}$}
              \STATE Store $\mathbf{q}$ in full bucket
              $B^*_{H_{\mu_i,a,j,\ell}}(\mathbf{q})$ corresponding to hash value 
              $H_{\mu_i,a,j,\ell}(\mathbf{q})$ 
              \alglinelabel{alg:preprocess_dataset_copy_hash}
            \ENDFOR
          \ENDFOR
        \ENDFOR
      \ENDFOR
    \ENDFOR
    \ENDPROCEDURE
    \STATE
    \PROCEDURE{AddFullHash($\mathbf{q}$)}{}
    \STATE \textbf{Input:} new query point $\mathbf{q}$
    \FOR{$\mu_i \geq \hat{\mu}_\mathbf{q}$}
      \FOR{$a = 1, 2, \dots, K_1$}
        \FOR{$j = 1, 2, \dots, J_{\mu_i}$}
          \STATE $K_{2,j} \gets 200 \log n' \cdot p_{\mathrm{near},j}^{-k_j}$
          \FOR{$\ell = 1, 2, \dots, K_{2,j}$}
            \STATE Store $\mathbf{q}$ in full bucket 
            $B^*_{H_{\mu_i,a,j,\ell}}(\mathbf{q})$ corresponding to hash value 
            $H_{\mu_i,a,j,\ell}(\mathbf{q})$
          \ENDFOR
        \ENDFOR
      \ENDFOR
    \ENDFOR
    \ENDPROCEDURE

  \end{algorithmic}
\end{algorithm}

\begin{algorithm}[H]
  \caption{DynamicKDE Delete Procedures}
  \alglinelabel{alg:delete_querypoint}
  \begin{algorithmic}[1]

    \PROCEDURE{DeleteFromData($\mathbf{z}$)}{}
    \STATE \textbf{Input:} Query point $\mathbf{q}$ to remove
    \FOR{$\mu_i \in M$}
      \STATE Remove $\mathbf{q}$ from $Q_{\mu_i}$
      \FOR{$a = 1, 2, \dots, K_1$}
        \FOR{$j = 1, 2, \dots, J_{\mu_i}$}
          \STATE $K_{2,j} \gets 200 \log n' \cdot p_{\mathrm{near},j}^{-k_j}$
          \FOR{$\ell = 1, 2, \dots, K_{2,j}$}
            \STATE Remove $\mathbf{q}$ from full bucket $B^*_{H_{\mu_i,a,j,\ell}}(\mathbf{q})$
            \alglinelabel{alg:delete_querypoint_line}
          \ENDFOR
        \ENDFOR
      \ENDFOR
    \ENDFOR
    \ENDPROCEDURE

  \end{algorithmic}
\end{algorithm}

\subsection{Analysis of the $\textsc{Initialise}$ and $\textsc{AddQueryPoint}$ procedures}\label{sec:dynamic_kde_initialise_query}
We first analyse the $\textsc{Initialise}(X, Q, \varepsilon)$ and $\textsc{AddQueryPoint}(\mathbf{z})$ procedures in Algorithm~\ref{alg:dynamic_kde}, whose corresponding subprocedures are presented in Algorithms~\ref{alg:preprocess_dataset},~\ref{alg:querypoints}, and~\ref{alg:full_hash_procedures}. At a high level, the analysis begins by bounding the expected number of data points within each hash bucket (Lemma~\ref{lemma:number_in_query_bucket}). Subsequently, it establishes the query time complexity (Lemma~\ref{lem:query-time-gen}), shows  that the estimator produced by $\textsc{QueryMuEstimate}$ is nearly unbiased (Lemma~\ref{lem:unbias4}), and provides a good approximation to the true \KDE~value with high probability (Lemma~\ref{lemma:variance}).

We assume that $\mu_i$ is an estimate satisfying $\mu_\mathbf{q} \leq \mu_i$; we  will justify this assumption  in Remark~\ref{rem:estimate_mu}. We first analyse the expected number of data points to be sampled in each bucket $B_{H_{\mu_i,a,j,\ell}}(\mathbf{q})$.
\begin{lemma}\label{lemma:number_in_query_bucket}
    For  any $a$, $\mu_i$, $j$, $\ell$, it holds for  $\mathbf{q} \in Q_{\mu_{i}}$ that \[
     \mathbb{E}_{H_{\mu_i, a, j,\ell}} \left[ 
 \left|\left\{\mathbf{x} \in \wt{X}_{\mu_i, j} \mid H_{\mu_i,a,j,\ell}(\mathbf{q}) = H_{\mu_i,a,j,\ell}(\mathbf{x}) \right\} \right|\right] = \wt{O}(1),
    \]
     any for $\mathbf{q} \in Q_{\mu_{i'}}$ that
        \[
    \mathbb{E}_{H_{\mu_i, a, j,\ell}}  \left[ 
        \left|\left\{\mathbf{x} \in X \mid H_{\mu_i,a,j,\ell}(\mathbf{x}) = H_{\mu_i,a,j,\ell}(\mathbf{q})| \right\} \right|\right] = \widetilde{O}\left(2^{j+1} \mu_{i'}\right).
    \]
\end{lemma}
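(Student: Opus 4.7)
My plan is to bound both collision counts by partitioning the relevant data using the geometric weight levels $L_{j'}^{\mathbf{q}}$ associated with the query $\mathbf{q}$ (Definition~\ref{def:geometric_weight_levels}), and controlling each level via the size bound of Lemma~\ref{lem:sizeL} together with the collision probability bound of Lemma~\ref{lem:collprob}. By linearity of expectation, for any subset $S\subseteq X$,
$$\mathbb{E}\!\left[|\{\mathbf{x} \in S : H(\mathbf{x}) = H(\mathbf{q})\}|\right] \;=\; \sum_{j'} \sum_{\mathbf{x} \in L_{j'}^{\mathbf{q}} \cap S} \Pr_{H}[H(\mathbf{x}) = H(\mathbf{q})],$$
with $S$ being the sub-sampled set $\wt{X}_{\mu_i,j}$ in Part~1 and $S=X$ in Part~2.

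The main technical ingredient, which drives both parts, is the key estimate
$$p_{\mathrm{near},j}^{\,k_j\, c_{j',j}^2(1-\beta)} \;\leq\; 2^{-(j'-j)} \qquad \text{for every } j' > j.$$
This follows directly from the definition of $k_j$ in~\eqref{eq:kj}: since $k_j \log(1/p_{\mathrm{near},j}) \geq (j'-j)/(c_{j',j}^2(1-\beta))$ by the $\max$ defining $k_j$, multiplying the exponent by $c_{j',j}^2(1-\beta)$ yields at least $j'-j$ in base $2$. Combined with Lemma~\ref{lem:collprob}, this bounds the hash collision probability between $\mathbf{q}$ and any $\mathbf{x}\in L_{j'}^{\mathbf{q}}$ for $j'>j$ by $2^{-(j'-j)}$; for $j' \leq j$ I use the trivial bound $1$.

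For Part~1, since $\mathbf{q} \in Q_{\mu_i}$ implies $\mu_\mathbf{q} \leq \mu_i$ and each data point is sub-sampled with probability $1/(2^{j+1}\mu_i)$, splitting the sum at $j' = j$ yields $\sum_{j'\leq j} (2^{j'}\mu_\mathbf{q})/(2^{j+1}\mu_i)\cdot 1 = O(\mu_\mathbf{q}/\mu_i) = O(1)$ from the near levels, and $\sum_{j'>j} (2^{j'}\mu_\mathbf{q})/(2^{j+1}\mu_i)\cdot 2^{-(j'-j)} = O(J_{\mu_i}\,\mu_\mathbf{q}/\mu_i) = \widetilde{O}(1)$ from the far levels. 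For Part~2 there is no sampling factor and $\mu_\mathbf{q} \leq \mu_{i'}$, so the near levels contribute $\sum_{j' \leq j} 2^{j'}\mu_{i'} = O(2^{j+1}\mu_{i'})$ and the far levels contribute $\sum_{j' > j} 2^{j'}\mu_{i'}\cdot 2^{-(j'-j)} = \widetilde{O}(2^j \mu_{i'})$, summing to the claimed $\widetilde{O}(2^{j+1}\mu_{i'})$.

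\textbf{Main obstacle.} The delicate case is the residual level $L_{J_{\mu_\mathbf{q}}+1}^{\mathbf{q}}$, whose cardinality is not directly controlled by Lemma~\ref{lem:sizeL} (which only bounds $|L_{j'}^{\mathbf{q}}|$ for $j' \in [J_{\mu_\mathbf{q}}]$). I will absorb it by combining the trivial bound $|L_{J_{\mu_\mathbf{q}}+1}^{\mathbf{q}}| \leq n$ with the key estimate at $j' = J_{\mu_\mathbf{q}}+1$ and the identity $2^{J_{\mu_\mathbf{q}}} = \Theta(n/\mu_\mathbf{q})$; this damps the residual contribution to $O(\mu_\mathbf{q}/\mu_i)=O(1)$ in Part~1 and $O(2^j\mu_{i'})$ in Part~2, both of which are absorbed into the stated bound. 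A secondary consistency check is that $r_j$ depends only on the kernel and the level index $j$ (not on $\mu_i$ or $\mathbf{q}$), so the same hash family $\mathcal{H}^{k_j}$ provides valid collision bounds uniformly across all query scales used in both parts.
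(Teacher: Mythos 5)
Your proof is correct and takes essentially the same route as the paper's: the same decomposition of $X$ into geometric weight levels, the same use of Lemma~\ref{lem:sizeL} for $\left|L^{\mathbf{q}}_{j'}\right|$ and Lemma~\ref{lem:collprob} for the collision probability, the same case split at $j'=j$, and the same exploitation of the choice of $k_j$ (which the paper states as ``by the choice of $k$'' and you correctly unpack as $p_{\mathrm{near},j}^{\,k_j\,c_{j',j}^2(1-\beta)}\le 2^{-(j'-j)}$). You go slightly further than the paper by explicitly treating the residual level; one small point of care there is that the hash family's $k_j$ in Algorithm~\ref{alg:preprocess_dataset} is tied to the outer-loop scale $\mu_i$, so the max in \eqref{eq:kj} runs to $J_{\mu_i}+1$ rather than $J_{\mu_\mathbf{q}}+1$ (and since $\mathbf{q}\in Q_{\mu_i}$ only gives $\mu_\mathbf{q}\lesssim\mu_i$, one may have $J_{\mu_\mathbf{q}}>J_{\mu_i}$), but monotonicity of $c_{j',j}$ in $j'$ together with the bound at $j'=J_{\mu_i}+1$ and $2^{J_{\mu_i}}=\Theta(n/\mu_i)$ yields the same $O(1)$ and $O(2^{j}\mu_{i'})$ contributions, so your plan goes through unchanged.
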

\begin{proof}
      We compute the expected number of collisions in the bucket $B_{H_{\mu_i,a,j,\ell}}(\mathbf{q})$, and our analysis is by case distinction. 
    
\textit{Case~1:  $j' \leq j$.} It holds by  Lemma~\ref{lem:sizeL}  that 
\[
\left|L^{\mathbf{q}}_{j'}\right| \leq 2^{j'} \mu_\mathbf{q} \leq 2^{j'} \mu_i,\]
which upper bounds the number of points that could collide from these geometric weight levels.  Since 
every data point is sampled with probability $1/\left (2^{j+1}\mu_{i}\right)$ in this iteration,   the expected number of sampled data points is $O(1)$.

\textit{Case~2: $j < j' \leq J_{\mathrm{\mu}_i} +1$. } 
We analyse the effect of the  LSH. Note that in the $j$th iteration, we choose an LSH function whose corresponding distance level is $r_j$, and use
\begin{align*}
		k \triangleq k_j = - \frac{1}{ \log p_{\mathrm{near},j}} \cdot\max_{i=j+1,\ldots,J_{\mu_i}+1}  \left\lceil  \frac{i-j}{c_{i,j}^2(1-\beta)}     \right\rceil.
	\end{align*}
	as the number of concatenations. 
	Then, it holds  for $\mathbf{p}\in L^{\mathbf{q}}_{j'}$ that
	\begin{align*}
	\mathbb{P}_{H^*\in \mathcal{H}^{k}}\left[  H^*(\mathbf{p})=H^*(\mathbf{q})\right] \le p^{kc^2(1-\beta)},
	\end{align*}
	where $c \triangleq c_{i,j}= \min\left\{   \frac{r_{i-1}}{r_j},O \lp \log^{1/7}n' \rp\right\}$ and $p \triangleq p_{\mathrm{near},j}$. Hence,  the expected number of points from weight level $L^{\mathbf{q}}_{j'}$ in the query hash bucket is  
 $	O \lp 2^{j' - j} \rp \cdot p^{kc^2(1-\beta)}=\wt{O}(1)$,
 where the last line holds by the choice of $k$. 
 Combining the two cases proves the first statement.

  The second statement holds for the same analysis, but   we have instead that $|L^{\mathbf{q}}_{j'}| \leq 2^{j'} \mu_\mathbf{q} \leq 2^{j'} \mu_{i'}$ for $\mathbf{q} \in Q_{\mu_{i'}}$.
\end{proof}

The query time complexity for $\textsc{QueryMuEstimate}(X,\mathbf{q},\epsilon,\mu_i)$  follows from \cite{charikarKDE}.

\begin{lemma}[Query Time Complexity, \cite{charikarKDE}]\label{lem:query-time-gen}
	For any kernel $k$, the expected running time of   $\textsc{QueryMuEstimate}(X,\mathbf{q},\epsilon,\mu_i)$  (Algorithm~\ref{alg:querypoints}) is  $\epsilon^{-2}\cdot n_1^{o(1)}\cdot \mathrm{cost}(k)$.
\end{lemma}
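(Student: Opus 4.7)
The plan is to account separately for three sources of cost in \textsc{QueryMuEstimate}: (i) the number of (outer-most and inner) iterations, (ii) the per-iteration cost of evaluating the concatenated LSH function, and (iii) the expected number of points retrieved per bucket that must be filtered by the test $\mathbf{x}\in L^{\mathbf{q}}_j$. We then multiply these bounds and show that they telescope into $\epsilon^{-2}\cdot n_1^{o(1)}\cdot\mathrm{cost}(k)$.

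First I would observe that the outer loop contributes the factor $K_1=\bar m\cdot\bar N=O(\epsilon^{-2}\log n_1)$, the geometric-weight-level loop contributes $J_{\mu_i}=O(\log n_1)$, and the inner hash-repetition loop contributes $K_{2,j}=200\log n_1\cdot p_{\mathrm{near},j}^{-k_j}$. By the choice of $k_j$ in \eqref{eq:kj} we have
\[
p_{\mathrm{near},j}^{-k_j}=2^{-k_j\log p_{\mathrm{near},j}}=\exp_2\!\left(\max_{i=j+1,\ldots,J_{\mu_i}+1}\!\left\lceil\frac{i-j}{c_{i,j}^{2}(1-\beta)}\right\rceil\right)=\mathrm{cost}_{\mu_\mathbf{q}}(k,j)\le\mathrm{cost}(k),
\]
so $K_{2,j}=O\bigl(\log n_1\cdot\mathrm{cost}(k)\bigr)$ uniformly in $j$.

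Next I would bound the per-iteration work. Evaluating a single hash function $H_{\mu_i,a,j,\ell}\in\mathcal{H}^{k_j}$ on $\mathbf{q}$ requires $k_j$ evaluations of an Andoni--Indyk base hash; by Lemma~\ref{lem:andoni-indyk} with the choice $t=\log^{2/3}n_1$ this costs $k_j\cdot d\cdot t^{O(t)}=n_1^{o(1)}$ (recall $k_j$ is polylogarithmic and $d=O(\log n_1)$ by Assumption~\ref{remark:upper_bound_points_added}). Once the bucket $B_{H_{\mu_i,a,j,\ell}}(\mathbf{q})$ is located, we must iterate over its contents and keep only those $\mathbf{x}\in L^{\mathbf{q}}_j$. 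By the first statement of Lemma~\ref{lemma:number_in_query_bucket} the expected number of sub-sampled data points in that bucket is $\widetilde O(1)$, and verifying membership in $L^{\mathbf{q}}_j$ takes $n_1^{o(1)}$ time (a single kernel evaluation in $d=O(\log n_1)$ dimensions). Finally, retrieving points from $\wt{X}_{\mu_i,a}$ on line~\ref{alg:querypoint:recovered_points_outside_layer} involves a set of expected size $|X|/(2n_1)=O(1)$ and contributes only lower-order terms.

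Putting the pieces together, the expected running time is at most
\[
\underbrace{K_1}_{O(\epsilon^{-2}\log n_1)}\cdot\sum_{j=1}^{J_{\mu_i}}\underbrace{K_{2,j}}_{O(\log n_1\cdot\mathrm{cost}(k))}\cdot\underbrace{\bigl(n_1^{o(1)}+\widetilde O(1)\cdot n_1^{o(1)}\bigr)}_{\text{hash eval.\ + bucket scan}}=\epsilon^{-2}\cdot n_1^{o(1)}\cdot\mathrm{cost}(k),
\]
and the median-of-means aggregation on lines~\ref{alg:query_point_start_median_compute}--\ref{alg:querypoint:emperical_mean_Z} adds only $O(K_1)$ additional arithmetic operations, which is absorbed into the above. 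The main obstacle I foresee is verifying that the algebraic identity relating $p_{\mathrm{near},j}^{-k_j}$ to $\mathrm{cost}_{\mu_\mathbf{q}}(k,j)$ holds uniformly across all geometric weight levels, and checking that the hash-evaluation overhead remains $n_1^{o(1)}$ even after the $k_j$-fold concatenation; once these are pinned down, the rest is a straightforward product of the factors above, and the argument reduces essentially to that of \citet{charikarKDE}.
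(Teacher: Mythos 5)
Your proposal is correct and follows exactly the structure of the argument in \citet{charikarKDE}, which the paper cites without reproducing a proof: count the loop iterations ($K_1 \cdot \sum_j K_{2,j}$), observe that $p_{\mathrm{near},j}^{-k_j}=\mathrm{cost}_{\mu_i}(k,j)\le\mathrm{cost}(k)$ by the definition of $k_j$ in \eqref{eq:kj}, and bound the per-iteration cost by an $n_1^{o(1)}$ hash evaluation (Lemma~\ref{lem:andoni-indyk} with $t=\log^{2/3}n_1$) plus an $\wt{O}(1)$ expected bucket scan (Lemma~\ref{lemma:number_in_query_bucket}, which implicitly uses the standing assumption $\mu_\mathbf{q}\le\mu_i$). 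The algebraic identity you flag as a concern is indeed immediate: $-k_j\log p_{\mathrm{near},j}$ equals the ceiling-maximum appearing in $\mathrm{cost}_{\mu_i}(k,j)$, so $p_{\mathrm{near},j}^{-k_j}=\exp_2(-k_j\log p_{\mathrm{near},j})=\mathrm{cost}_{\mu_i}(k,j)$ uniformly in $j$, and this is bounded by $\mathrm{cost}(k)$ because the latter is defined as a maximum over all $\mu$ and $j$.
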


Next we show that our returned estimator gives a good approximation with high probability. 
 
\begin{lemma}\label{lem:unbias4}
	 For any $\mathbf{q}\in \R^d$,   $\mu_\mathbf{q}\in (0,2n_1]$,   $\mu_i \ge \mu_\mathbf{q}$,   $\epsilon\in \left(1 / n_1^5,1\right)$, the  estimator $Z_{\mathbf{q},a}$ for   $a\in [K_1]$ constructed in $\textsc{QueryMuEstimate}(X,\mathbf{q},\epsilon,\mu_i)$ (Algorithm~\ref{alg:querypoints}) satisfies that
	\[
	\left(1-n_1^{-9}\right)\cdot \mu_{\mathbf{q}}\leq \mathbb{E}[Z_{\mathbf{q},a}] \leq \mu_{\mathbf{q}}.
	\]
\end{lemma}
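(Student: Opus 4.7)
The plan is to write $Z_{\mathbf{q},a}$ as a sum of inverse-probability-weighted indicators and compute its expectation level by level. Expanding the definition in line~\ref{line:est4} of Algorithm~\ref{alg:querypoints},
\[
Z_{\mathbf{q},a} \;=\; \sum_{\mathbf{x}_{i'} \in X} \frac{k(\mathbf{x}_{i'},\mathbf{q})}{p_{i'}} \cdot \mathbf{1}[\mathbf{x}_{i'} \in S],
\]
where $p_{i'} = \min\{1/(2^{j+1}\mu_i),1\}$ if $\mathbf{x}_{i'} \in L^{\mathbf{q}}_j$ for some $j \in [J_{\mu_i}]$ and $p_{i'} = 1/(2n')$ otherwise. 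Since $\mu_\mathbf{q} \leq \mu_i$, Definition~\ref{def:geometric_weight_levels} partitions $X$ into the levels $L^{\mathbf{q}}_1, \ldots, L^{\mathbf{q}}_{J_{\mu_i}+1}$, so by linearity of expectation it suffices to compute $\Pr[\mathbf{x}_{i'} \in S]$ on each level.

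For $\mathbf{x}_{i'} \in L^{\mathbf{q}}_j$ with $j \in [J_{\mu_i}]$, the event $\mathbf{x}_{i'} \in S$ is the conjunction of (i) being drawn into $Z_{\mu_i,j}$ in line~\ref{line:sub-sample4}, which happens with probability $p_{\text{samp},j} = p_{i'}$, and (ii) satisfying $H_{\mu_i,a,j,\ell}(\mathbf{x}_{i'}) = H_{\mu_i,a,j,\ell}(\mathbf{q})$ for at least one $\ell \in [K_{2,j}]$. Because $\|\mathbf{x}_{i'} - \mathbf{q}\| \leq r_j$ by definition of $L^{\mathbf{q}}_j$, Lemma~\ref{lem:collprob} gives per-$\ell$ collision probability at least $p_{\mathrm{near},j}^{k_j}$, and with $K_{2,j} = 200 \log n' \cdot p_{\mathrm{near},j}^{-k_j}$ a union-complement argument yields
\[
q_j \;\triangleq\; \Pr[\text{some $\ell$-hit} \mid \text{sampled}] \;\geq\; 1 - (1 - p_{\mathrm{near},j}^{k_j})^{K_{2,j}} \;\geq\; 1 - e^{-200\log n'} \;=\; 1 - n'^{-200},
\]
and trivially $q_j \leq 1$. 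Hence $\mathbb{E}[\mathbf{1}[\mathbf{x}_{i'} \in S]/p_{i'}] = q_j$ in both the $p_{\text{samp},j} < 1$ and $p_{\text{samp},j} = 1$ regimes. For $\mathbf{x}_{i'} \in L^{\mathbf{q}}_{J_{\mu_i}+1}$, the point enters $S$ iff it is sampled into $\wt{X}_{\mu_i,a}$ in line~\ref{line:beyondJsample}, which happens with probability exactly $1/(2n') = p_{i'}$, giving $\mathbb{E}[\mathbf{1}[\mathbf{x}_{i'} \in S]/p_{i'}] = 1$.

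Summing across levels,
\[
\mathbb{E}[Z_{\mathbf{q},a}] \;=\; \sum_{j=1}^{J_{\mu_i}} q_j \sum_{\mathbf{x} \in L^{\mathbf{q}}_j} k(\mathbf{x},\mathbf{q}) \;+\; \sum_{\mathbf{x} \in L^{\mathbf{q}}_{J_{\mu_i}+1}} k(\mathbf{x},\mathbf{q}).
\]
The upper bound $\mathbb{E}[Z_{\mathbf{q},a}] \leq \mu_\mathbf{q}$ is immediate from $q_j \leq 1$ and the fact that the levels partition $X$. For the lower bound, since $n' \geq n_1$ throughout (the initialisation snapshot $n'$ never decreases and is reset only to the current size upon doubling), $q_j \geq 1 - n'^{-200} \geq 1 - n_1^{-9}$, and therefore $\mathbb{E}[Z_{\mathbf{q},a}] \geq (1 - n_1^{-9})\mu_\mathbf{q}$.

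The only real subtlety is the bookkeeping around $p_{\text{samp},j} = 1$ versus $p_{\text{samp},j} < 1$, where one must check that the ratio $\mathbf{1}[\mathbf{x}_{i'} \in S]/p_{i'}$ has the same expectation $q_j$ either way, so that no spurious boundary term appears; everything else is forced by the calibration $K_{2,j} = 200 \log n' \cdot p_{\mathrm{near},j}^{-k_j}$, which is generously chosen so that the coverage failure $n'^{-200}$ absorbs comfortably into the target $n_1^{-9}$.
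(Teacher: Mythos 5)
Your proof is correct and follows essentially the same route as the paper: write $Z_{\mathbf{q},a}$ as an inverse-propensity-weighted sum of indicators, bound the per-point recovery probability conditional on sampling via Lemma~\ref{lem:collprob} and the choice of $K_{2,j}$, and conclude by linearity. If anything your version is cleaner than the paper's, which takes an unnecessary union bound over all sampled points before extracting the per-point bound $(1-n_1^{-9})p_i \leq \mathbb{E}[\chi_i] \leq p_i$; your direct computation of $q_j$ per level is the argument actually needed, and your handling of the $p_{\mathrm{samp},j}=1$ boundary and of $L^{\mathbf q}_{J_{\mu_i}+1}$ (no hash requirement, exact expectation) is tidier. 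One cosmetic slip: since the paper fixes $\log$ to base $2$, $e^{-200\log n'} = (n')^{-200/\ln 2}$ rather than $(n')^{-200}$; this only makes your bound more generous, so the conclusion $q_j \geq 1-n_1^{-9}$ is unaffected. The aside about $n'$ never decreasing is superfluous — Lemma~\ref{lem:unbias4} concerns a single snapshot, where $n'=n_1$ by definition.
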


\begin{proof}
We first fix   arbitrary $j=j^*$ and $a=a^*$, and sample some point $\mathbf{p} \in L^{\mathbf{q}}_{j^*}$. 
  By Lemma~\ref{lem:collprob} we have 
	\begin{align*}
\mathbb{P}_{H^*\sim\mathcal{H}^{k_j}}\left[ H^*(\mathbf{p})=H^*(\mathbf{q})\right] \ge p_{\mathrm{near},j}^{k_{j}}.
	\end{align*}
 Since  we repeat this process for $K_{2,j}=200\log n_1'\cdot p_{\mathrm{near},j}^{-k_j}$ times,  it holds with high probability that 
any sampled point   $\mathbf{p}$ from band $L^{\mathbf{q}}_{j^*}$ is recovered in at least one  phase. By applying the union bound, the probability that all the sampled points are recovered is at least $1-n_1^{-9}$.

We define $Z \triangleq Z_{\mathbf{q},a}$, and have that  \[
\mathbb{E}[Z] = \sum_{i=1}^{n_1}\frac{\mathbb{E}[\chi_{i}]}{p_{i}}\cdot w_{i},\]
where $\chi_i=1$ if point $\mathbf{x}_i$ is sampled and $\chi_i=0$ otherwise. Hence, it holds that $\left(1-n_1^{-9}\right) p_{i}\leq \mathbb{E}[\chi_{i}]\leq p_{i},$
which implies that 
	$
 \left(1-n_1^{-9}\right)\mu_{\mathbf{q}}\leq \mathbb{E}[Z] \leq \mu_{\mathbf{q}}$.
\end{proof}

\begin{remark}\label{rem:estimate_mu}
  Lemma~\ref{lem:unbias4} shows that the estimator  $Z_{\mathbf{q}, a}$ is unbiased (up to some small inverse polynomial error) for   any choice of $\mu_i \ge \mu_{\mathbf{q}}$.
 Therefore, when  $\mu_i \geq 4\mu_{\mathbf{q}}$,  by Markov's inequality  the probability that the estimator's returned value is larger than   $\mu_i$ is at most  $1/4$. By taking $O(\log n_1)$ independent estimates, one can conclude that $\mu_i$ is higher than $\mu_{\mathbf{q}}$ if the median of the estimated values is below $\mu_i$, and this estimate is correct with high probability. This is achieved on Lines~\ref{alg:query_point_start_median_compute}--\ref{alg:querypoint:end_compute_estimate} of Algorithm~\ref{alg:querypoints}. To ensure that we find a value of $\mu_i$ that satisfies $\mu_i/4< \mu_{\mathbf{q}}\le \mu_i$ with high probability, 
 on Lines~\ref{alg:query_point_start_median_compute}--\ref{alg:querypoint:used_query_estimate} the algorithm starts with $\mu_i = 2n_1$ and   repeatedly halves the estimate until  finding an estimate $\hat{\mu}_\mathbf{q} > \mu_i$; at this point the algorithm  returns the previous estimate based on $\mu_{i+1}$. 
\end{remark}

\begin{lemma}[\cite{charikarKDE}]\label{lemma:variance}
For every $\mathbf{q}\in \R^d$,   $\mu_\mathbf{q}\in (0,2n_1]$, $\epsilon\in \left(1 / n_1^{5},1\right)$,  
and $\mu_i$ satisfying $ \mu_i/4\le \mu_\mathbf{q}\le \mu_i$, the procedure  \textsc{QueryMuEstimate}($X,\mathbf{q},\epsilon,\mu_i$) (Algorithm~\ref{alg:querypoints} in the appendix) outputs a $(1\pm \epsilon)$-approximation to $\mu_\mathbf{q}$ with high probability.  
\end{lemma}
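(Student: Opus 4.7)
The plan is to establish the claim via a median-of-means framework. Throughout, set $\bar m = \lceil C/\epsilon^2\rceil$ and $\bar N = \lceil \log(2n_1)\rceil$ as on Line~\ref{alg:dynamic_kde_constant}, let $Z_{\mathbf{q},a}$ be the single-shot estimator defined on Line~\ref{line:est4}, let $\bar Z_{\mathbf{q},b}$ be the block mean computed on Line~\ref{alg:querypoint:emperical_mean_Z}, and let the final output be $\mathrm{Median}(\bar Z_{\mathbf{q},1},\ldots,\bar Z_{\mathbf{q},\bar N})$ as on Line~\ref{alg:querypoint:end_compute_estimate}. Lemma~\ref{lem:unbias4} already gives $\mathbb{E}[Z_{\mathbf{q},a}] = (1\pm n_1^{-9})\mu_{\mathbf{q}}$, so $\mathbb{E}[\bar Z_{\mathbf{q},b}] = (1\pm n_1^{-9})\mu_{\mathbf{q}}$ as well, and since $\epsilon > n_1^{-5}$ this bias is swallowed by the eventual $\epsilon\mu_{\mathbf{q}}$ slack.

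The core technical step is to show $\mathrm{Var}(Z_{\mathbf{q},a}) = O(\mu_{\mathbf{q}}^2)$. I would write $Z_{\mathbf{q},a} = \sum_{\mathbf{x}\in X}\chi_\mathbf{x}\cdot w_\mathbf{x}/p_\mathbf{x}$, where $\chi_\mathbf{x}\in\{0,1\}$ is the indicator that $\mathbf{x}$ is both sub-sampled (with rate $p_\mathbf{x} = 1/(2^{j+1}\mu_i)$ for the layer $j$ containing $\mathbf{x}$, or $1/(2n_1)$ for $\mathbf{x}\in L^{\mathbf{q}}_{J_{\mu_i}+1}$) and recovered by at least one of the $K_{2,j}$ hash buckets at that layer. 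Conditioning on all hash functions $\{H_{\mu_i,a,j,\ell}\}_{j,\ell}$, the sampling indicators become mutually independent across $\mathbf{x}$, so the conditional variance decomposes and is bounded by
\[
  \sum_{\mathbf{x}\in X}\frac{w_\mathbf{x}^2}{p_\mathbf{x}} \;\le\; \sum_{j=1}^{J_{\mu_i}}\sum_{\mathbf{x}\in L^{\mathbf{q}}_j} 2^{j+1}\mu_i\cdot w_\mathbf{x}^2 \;+\; \sum_{\mathbf{x}\in L^{\mathbf{q}}_{J_{\mu_i}+1}} 2n_1\cdot w_\mathbf{x}^2.
\]
Using $w_\mathbf{x}\le 2^{-j+1}$ for $\mathbf{x}\in L^{\mathbf{q}}_j$ gives $2^{j+1}\mu_i w_\mathbf{x}^2 \le 4\mu_i w_\mathbf{x}$; summing and invoking $\mu_i\le 4\mu_\mathbf{q}$ yields $4\mu_i\mu_\mathbf{q}=O(\mu_\mathbf{q}^2)$. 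The tail level $L^{\mathbf{q}}_{J_{\mu_i}+1}$ uses $w_\mathbf{x}\le \mu_\mathbf{q}/n_1$ and the sampling rate $1/(2n_1)$ to give a matching $O(\mu_\mathbf{q}^2)$ contribution. The variance of the conditional mean over the hash randomness (the cross-term in the law of total variance) is controlled by the per-bucket collision bound from Lemma~\ref{lem:collprob} applied layer by layer, exactly as in Lemma~\ref{lemma:number_in_query_bucket}; these bounds keep the contribution of accidental hash matches to $\widetilde{O}(1)$ points per bucket and therefore within the same $O(\mu_{\mathbf{q}}^2)$ budget.

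With the variance bound in hand, the remaining steps are routine. Averaging $\bar m = \lceil C/\epsilon^2\rceil$ independent copies gives $\mathrm{Var}(\bar Z_{\mathbf{q},b}) = O(\epsilon^2\mu_\mathbf{q}^2/C)$, so Chebyshev's inequality, together with $C$ on Line~\ref{alg:dynamic_kde_constant} chosen sufficiently large, yields
\[
  \mathbb{P}\!\left[\,|\bar Z_{\mathbf{q},b} - \mu_\mathbf{q}| > \epsilon\mu_\mathbf{q}\,\right]\;\le\;\tfrac14
\]
after absorbing the $n_1^{-9}$ bias from Step~1. The $\bar N$ block means are mutually independent because each uses a disjoint batch of repetitions, so for $Y_b = \mathbf{1}\{\,|\bar Z_{\mathbf{q},b} - \mu_\mathbf{q}|>\epsilon\mu_\mathbf{q}\,\}$ a Chernoff bound gives $\mathbb{P}[\sum_b Y_b\ge \bar N/2]\le e^{-\Omega(\bar N)} = n_1^{-\Omega(1)}$, and whenever fewer than half of the $Y_b$'s fire, the median lies in $[(1-\epsilon)\mu_\mathbf{q},(1+\epsilon)\mu_\mathbf{q}]$.

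\textbf{Main obstacle.} The delicate step is the variance bound, because the recovery indicators $\chi_\mathbf{x}$ for distinct points are coupled through the shared LSH functions $H_{\mu_i,a,j,\ell}$, so the elementary decomposition of variance into independent per-point Bernoulli contributions fails. Conditioning on the hashes handles the sampling randomness cleanly, but the residual variance of the conditional expectation needs the quantitative collision control of Lemma~\ref{lem:collprob}, together with the explicit layer filter $\mathbf{x}\in L^{\mathbf{q}}_j$ on Line~\ref{alg:querypoint:recovered_points_layer}, which is exactly what prevents distant points from inflating the second moment.
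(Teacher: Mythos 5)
Your proposal is correct and follows essentially the same route as the paper: near-unbiasedness from Lemma~\ref{lem:unbias4}, an $O(\mu_i^2)=O(\mu_{\mathbf{q}}^2)$ second-moment/variance bound for a single estimator $Z_{\mathbf{q},a}$, Chebyshev for the block means of size $\bar m=\lceil C/\epsilon^2\rceil$, and a Chernoff/median argument over the $\bar N$ blocks. The only substantive difference is in how the variance bound is organised: the paper expands $\mathbb{E}[Z_{\mathbf{q},a}^2]$ directly, bounding the off-diagonal terms by $\sum_{i\neq j}w_iw_j\le\mu_{\mathbf{q}}^2$ and the diagonal by $\max_i\{w_i/p_i\}\cdot\mu_{\mathbf{q}}\le 4\mu_i\mu_{\mathbf{q}}$ (using $\mathbb{E}[\chi_i\chi_j]\le p_ip_j$, since recovery can only shrink the indicators), whereas you condition on the LSH functions and apply the law of total variance; your per-layer bounds $w_{\mathbf{x}}\le 2^{-j+1}$, $1/p_{\mathbf{x}}\le 2^{j+1}\mu_i$ and the tail-level bound are the same calculation as the paper's diagonal term. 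One remark on your cross-term: invoking Lemma~\ref{lem:collprob} and Lemma~\ref{lemma:number_in_query_bucket} there is misplaced, since those bounds control bucket sizes (hence running time), while the layer filter on Line~\ref{alg:querypoint:recovered_points_layer} already excludes far colliding points from $Z_{\mathbf{q},a}$ altogether; the quantity $\mathrm{Var}_H(\mathbb{E}[Z_{\mathbf{q},a}\mid H])$ is trivially at most $\mu_{\mathbf{q}}^2$ because the conditional mean lies in $[0,\mu_{\mathbf{q}}]$ (and is in fact $O(n_1^{-9}\mu_{\mathbf{q}}^2)$ by the per-point recovery probability used in Lemma~\ref{lem:unbias4}), so this step is a one-line fix rather than a gap. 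Your conditioning trick buys a cleaner treatment of the dependence induced by the shared hash functions; the paper's direct expansion avoids the law of total variance at the cost of silently using monotonicity of the recovery indicators.
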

\begin{proof}
    	 Let $Z \triangleq Z_{\mathbf{q},a}$, and we have that 
	\begin{align}
	\mathbb{E}[Z^2]	&=\mathbb{E}\left[\left(\sum_{\mathbf{p}_i\in X} \chi_i\cdot  \frac{w_i}{p_i}\right)^2\right]  \nonumber\\
	&=\sum_{i\ne j} \mathbb{E}\left[\chi_i\chi_j\cdot  \frac{w_iw_j}{p_ip_j}\right]+\sum_{i\in [n_1]}\mathbb{E}\left[\chi_i \cdot \frac{w_i^2}{p_i^2}\right] \nonumber\\
	&\le \sum_{i\ne j}w_iw_j + \sum_{i\in [n_1]}\frac{w_i^2}{p_i}\cdot \mathbb{I}[p_i=1]+ \sum_{i\in [n_1]}\frac{w_i^2}{p_i}\cdot \mathbb{I}[p_i\ne1] \nonumber \\
	&\le \left(\sum_{i\in [n_1]}w_i\right)^2  +\max_{i\in [n_1]} \left\{\frac{w_i}{p_i}\cdot \mathbb{I}[p_i\ne 1]\right\}\sum_{i\in [n_1]}w_i\nonumber \\
	&\le 2\left(\mu_\mathbf{q}\right)^2+ \max_{j\in [J_{\mu_i}],\mathbf{p}_i\in L^\mathbf{q}_{j}}\left\{w_{i}\cdot 2^{j+1}\right\}\cdot \mu_i \cdot \mu_{\mathbf{q}} \nonumber\\
	&\le 8\mu_i^2, \label{eq:upper_bound_variance}
	\end{align}
        where the second inequality   follows from \[\frac{w_{i}^{2}}{p_{i}}\cdot  \mathbb{I}[p_{i}=1] \leq w_i^2\]and \[(\sum_{i} w_{i})^{2} = \sum_{i \neq j} w_{i} w_{j} + \sum_{i \in[n_1]} {w_{i}^{2}},
        \]
        and the third inequality follows from $(\mu_{\mathbf{q}})^2 = (\sum_{i} w_{i})^{2} $ and $p_j \geq 1/\left(2^{j+1} \mu_i \right)$.

	 Let \[\bar{Z} \triangleq \bar{Z}_{\mathbf{q},b} = \frac{1}{\bar{m}} \sum_{a=(b-1)\bar{m}+1}^{b\bar{m}} Z_{\mathbf{q},a}\] be the empirical mean of $\bar{m}$ such estimates, as computed on Line~\ref{alg:querypoint:emperical_mean_Z} of Algorithm~\ref{alg:querypoints}. We have that
	\begin{align*}
	\mathbb{P}\left[|\bar{Z}-\mu_{\mathbf{q}}|\geq \epsilon \mu_{\mathbf{q}}\right] &\le \mathbb{P}\left[|\bar{Z}-\mathbb{E}[Z]|\geq \epsilon \mu_{\mathbf{q}}-\left|\mathbb{E}[Z]-\mu_{\mathbf{q}}\right|\right]\\
	&\le \mathbb{P}\left[|\bar{Z}-\mathbb{E}[Z]|\geq (\epsilon-n_1^{-9})\mu_{\mathbf{q}}\right]\\
	&\le \frac{\mathbb{E}[\bar{Z}^{2}]}{\left(\epsilon-n_1^{-9}\right)^{2}(\mu_{\mathbf{q}})^{2}}\\
	&\le \frac{1}{\bar{m}}\frac{128(\mu_{\mathbf{q}})^{2}}{\left(\epsilon-n_1^{-9}\right)^{2}(\mu_{\mathbf{q}})^{2}},
	\end{align*}
	where the first inequality follows from $ | \bar{Z} - \mu_{\mathbf{q}} | \leq |\bar{Z} - \mathbb{E}[Z]| + | \mathbb{E}[Z] - \mu_{\mathbf{q}} |$, the second one follows from $\mathbb{E}[Z] \geq (1 - n_1^{-9}) \mu_{\mathbf{q}}$ (Lemma~\ref{lem:unbias4}), the third one follows from Chebyshev's inequality and the last one follows from $\mathbb{E}[\bar{Z}^2] \leq \mathbb{E}[Z^2]/\bar{m} \leq 8 \mu_i^2/\bar{m} $ and $\mu_i \leq 4 \mu_{\mathbf{q}}$. By setting  $\bar{m}=\frac{C}{\epsilon^{2}}$ for a large enough constant $C$ and taking the median of $O(\log(1/\delta))$ of these means we achieve a $(1\pm \epsilon)$-approximation with probability at least $1-\delta$ per query.
\end{proof}

\subsection{Analysis of the $\textsc{AddDataPoint}$ procedure}\label{sec:dynamic_kde_add_data}
We now analyse  $\textsc{AddDataPoint}(\mathbf{z})$  in Algorithm~\ref{alg:dynamic_kde}. 
If the number of data points has doubled, $\textsc{AddDataPoint}(\mathbf{z})$ calls the $\textsc{Initialise}(X, Q, \varepsilon)$ procedure. Otherwise,  $\textsc{AddDataPointAndUpdateQueries}(\mathbf{z}, Q, \varepsilon)$ is called, which we describe in Algorithm~\ref{alg:addpointandupdatequeries}.

\begin{algorithm}[H]
  \caption{DynamicKDE Update Procedures}
  \label{alg:addpointandupdatequeries}
  \begin{algorithmic}[1]

    \PROCEDURE{AddPointAndUpdateQueries($\mathbf{z}, Q, \varepsilon$)}{}
    \STATE \textbf{Input:} New data point $\mathbf{z}$, the set of query points $Q$, and the precision estimate $\varepsilon$.
    \FOR{$\mu_i \in M$} \label{alg:addupdate:line:mu_estimate}
      \FOR{$a = 1, 2, \dots, K_1$}
        \FOR{$j = 1, 2, \dots, J_{\mu_i}$}
          \STATE $p_{\text{sampling}} \gets \min \Bigl\{\frac{1}{2^{j+1}\mu_i}, 1\Bigr\}$
          \IF{$\mathbf{z}$ is sampled with probability $p_{\text{sampling}}$}
            \label{alg:addupdate:line:samplepoint_start}
            \STATE $K_{2,j} \gets 200 \log n' \cdot p_{\mathrm{near},j}^{-k_j}$
            \FOR{$\ell = 1, 2, \dots, K_{2,j}$}
              \label{alg:addupdate:line:start_resample_loop}
              \STATE Store $\mathbf{z}$ in bucket $B_{H_{\mu_i,a,j,\ell}}(\mathbf{z})$
                \label{alg:addupdate:line:hash_point}
              \STATE Recover $\mathbf{q} \in B^*_{H_{\mu_i,a,j,\ell}}(\mathbf{z})$ 
                such that $\mathbf{q} \in Q_{\mu_i} \setminus \bigcup_{j' < i} Q_{\mu_{j'}}$ 
                and $\mathbf{q} \in L^{\mathbf{z}}_j$
                \label{alg:addupdate:line:samplepoint_end}
            \ENDFOR
          \ENDIF
        \ENDFOR

        \STATE Sample $\mathbf{z}$ with probability $\frac{1}{2n'}$ 
          \label{alg:addupdate:outside_geometric_level_start}
        \IF{$\mathbf{z}$ is sampled}
          \STATE Add $\mathbf{z}$ to $\widetilde{X}_{\mu_i,a}$ 
            \label{alg:addupdate:outside_geometric_level_addpoint}
          \STATE Recover $\mathbf{q} \in Q$ such that 
            $\mathbf{q} \in Q_{\mu_i} \setminus \bigcup_{j' < i} Q_{\mu_{j'}}$ 
            and $\mathbf{q} \in L^{\mathbf{z}}_{J_{\mu_i}+1}$
            \label{alg:addupdate:q_recover_outside_geometriclevel}
        \ENDIF \label{alg:addupdate:outside_geometric_level_end}

        \STATE $S \gets \text{set of all recovered points from the full hash in this iteration}$
          \label{alg:addupdate:recovered_points}
        \FOR{$\mathbf{q} \in S$}
          \label{alg:addupdate:update_estimates_start}
          \STATE $w_\mathbf{q} \gets k(\mathbf{z}, \mathbf{q})$
          \IF{$\mathbf{z} \in L^{\mathbf{q}}_j$ for some $j \in [J_{\mu_i}]$}
            \STATE $p_{\mathbf{q}} \gets \min \Bigl\{\frac{1}{2^{j+1}\mu_i}, 1\Bigr\}$
          \ELSIF{$\mathbf{z} \in X \setminus \bigcup_{j \in [J_{\mu_i}]} L^{\mathbf{q}}_j$}
            \STATE $p_{\mathbf{q}} \gets \frac{1}{2n'}$
          \ENDIF
          \STATE $Z_{\mathbf{q}, a} \mathrel{+}= w_{\mathbf{q}} / p_{\mathbf{q}}$
            \label{alg:addupdate:line:update_estimates_end}
          \STATE $\bar{Z}_{\mathbf{q}, \lceil a/\bar{m} \rceil} \mathrel{+}= 
            w_{\mathbf{q}} / (\bar{m}  p_{\mathbf{q}})$
            \COMMENT{Update empirical mean (Algorithm~\ref{alg:querypoints}, 
            Line~\ref{alg:querypoint:emperical_mean_Z})}
          \STATE $\hat{\mu}_\mathbf{q} \gets \mathrm{Median}\Bigl(
            \bar{Z}_{\mathbf{q},1}, \bar{Z}_{\mathbf{q},2}, \ldots, \bar{Z}_{\mathbf{q},\bar{N}}
            \Bigr)$
            \COMMENT{Update median}
            \label{alg:addupdate:line:update_estimate_median}
          \IF{$\hat{\mu}_\mathbf{q} > \mu_i$}
            \label{alg:addupdate:line:overestimate}
            \STATE $Q_{\mu_i} \gets Q_{\mu_i} \setminus \{\mathbf{q}\}$
            \STATE Remove $\mathbf{q}$ from every 
              $B^*_{H_{\mu_i,a',j',\ell'}}(\mathbf{q})$ 
              for all $a' \in [K_1], j' \in [J_{\mu_i}], \ell' \in K_{2,j}$
              \label{alg:addupdate:remove_datapoint_full_hash}
            \STATE $\hat{\mu}_\mathbf{q} \gets \textsc{QueryPoint}(X, \mathbf{q}, \varepsilon)$
              \label{alg:addupdate:re-estimate_query}
          \ENDIF
          \label{alg:addupdate:update_estimates_end}
        \ENDFOR
      \ENDFOR
    \ENDFOR
    \ENDPROCEDURE

  \end{algorithmic}
\end{algorithm}

To analyse the performance of $\textsc{AddDataPoint}(\mathbf{z})$, the analysis first establishes that the \KDE~estimates are correctly updated after running the procedure (Lemma~\ref{lemma:correctness_update_incremental}), ensuring that $\hat{\mu}_\mathbf{q}$ remains a $(1 \pm \epsilon)$-approximation of $\mu_\mathbf{q}$ with high probability. Subsequently, it demonstrates that the total number of times any individual query point $\mathbf{q}$ is updated during a sequence of $T$ data point insertions is $\wt{O}(1)$ with high probability (Lemma~\ref{lemma:bound_number_updates}).

\begin{lemma}\label{lemma:correctness_update_incremental}
    After running the $\textsc{AddDataPoint}(\mathbf{z})$ procedure, it holds with high probability for every $\mathbf{q} \in Q$ that $\hat{\mu}_\mathbf{q}$ is a $(1 \pm \epsilon)$-approximation of $\mu_\mathbf{q}$. 
\end{lemma}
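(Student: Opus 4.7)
The plan is to show that, after $\textsc{AddDataPoint}(\mathbf{z})$ returns, the internal state of the data structure is distributionally identical to what would be obtained by running $\textsc{Initialise}(X \cup \mathbf{z}, Q, \varepsilon)$ from scratch, and then to invoke Lemma~\ref{lemma:variance} and Lemma~\ref{lem:unbias4} to conclude a $(1 \pm \varepsilon)$-approximation for every $\mathbf{q} \in Q$. The claim splits naturally into two cases: (i) the doubling branch on Line~\ref{line:dynamic_kde:recompute_start}, which literally calls $\textsc{Initialise}$ and is therefore covered by Lemma~\ref{lemma:variance} directly, and (ii) the incremental branch, which requires the bulk of the argument.

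For the incremental branch, I would first argue \emph{structural equivalence} of the hash data structure. The hash functions $H_{\mu_i, a, j, \ell}$ and the sample threshold $p_{\text{samp}} = \min\{1/(2^{j+1}\mu_i), 1\}$ used inside $\textsc{AddPointAndUpdateQueries}$ on Lines~\ref{alg:addupdate:line:samplepoint_start}--\ref{alg:addupdate:line:hash_point} are exactly the ones that $\textsc{PreProcess}(X \cup \mathbf{z}, \varepsilon)$ would use for $\mathbf{z}$; moreover the hashes are drawn independently of $\mathbf{z}$ at initialisation. Hence the updated buckets $B_{H_{\mu_i,a,j,\ell}}(\cdot)$ and the updated sets $\widetilde{X}_{\mu_i, a}$ have the same joint distribution as those produced by a fresh preprocessing on $X \cup \mathbf{z}$. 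The same reasoning applies to the query-hash buckets $B^*_{H_{\mu_i, a, j, \ell}}(\cdot)$, which were populated by $\textsc{AddFullHash}$ using the same hash functions.

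Next, I would verify \emph{estimator consistency}: for every $\mathbf{q} \in Q$ and $a \in [K_1]$, the value of $Z_{\mathbf{q}, a}$ stored after the call equals exactly the value that $\textsc{QueryMuEstimate}(X \cup \mathbf{z}, \mathbf{q}, \varepsilon, \mu_i)$ would produce on the updated buckets. The increment $w_{\mathbf{q}}/p_{\mathbf{q}}$ applied on Line~\ref{alg:addupdate:update_estimates_start} is precisely the contribution of $\mathbf{z}$ that the recovery loop in $\textsc{QueryMuEstimate}$ (Line~\ref{alg:querypoint:recovered_points_layer}) would have found, provided that $\mathbf{q}$ is recovered from the query-hash bucket $B^*_{H_{\mu_i, a, j, \ell}}(\mathbf{z})$ exactly when $\mathbf{z} \in L_j^{\mathbf{q}}$ and $\mathbf{q}$ belongs to the correct weight-class $Q_{\mu_i} \setminus \bigcup_{j' < i} Q_{\mu_{j'}}$, which is the condition checked on Line~\ref{alg:addupdate:line:samplepoint_end}. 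Taking the running empirical mean $\bar{Z}_{\mathbf{q}, b}$ and median on Line~\ref{alg:addupdate:line:update_estimate_median} then reproduces the same median-of-means estimator analysed in Lemma~\ref{lemma:variance}.

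The subtle step, which I expect to be the main obstacle, is handling the case where the new estimate $\hat{\mu}_{\mathbf{q}}$ crosses the level threshold $\mu_i$. The bucket structure assumes $\mu_{\mathbf{q}} \leq \mu_i$; if adding $\mathbf{z}$ pushes $\hat{\mu}_{\mathbf{q}}$ above $\mu_i$, then the $\mu_i$-level estimator is no longer the one that should be used (cf.~Remark~\ref{rem:estimate_mu}). The algorithm detects this on Line~\ref{alg:addupdate:line:overestimate} and recovers by removing $\mathbf{q}$ from $Q_{\mu_i}$ and the associated query-hash buckets, then calling $\textsc{QueryPoint}$ afresh on Line~\ref{alg:addupdate:re-estimate_query}. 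I would argue that the post-removal state is again distributionally equivalent to what $\textsc{PreProcessQueryPoints}(X \cup \mathbf{z}, Q, \varepsilon)$ would have produced for $\mathbf{q}$ at the correct level $\mu_{i+1}$, so that the re-query on $X \cup \mathbf{z}$ enjoys the guarantee of Lemma~\ref{lemma:variance} with the correct $\mu_i$. Finally, a union bound over at most $\mathrm{poly}(n)$ query points and over the high-probability events inside Lemmas~\ref{lem:unbias4} and~\ref{lemma:variance} yields the stated $(1 \pm \varepsilon)$-approximation simultaneously for every $\mathbf{q} \in Q$.
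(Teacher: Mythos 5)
Your proposal takes essentially the same approach as the paper: both arguments establish that running $\textsc{Initialise}(X, Q, \varepsilon)$ followed by $\textsc{AddDataPoint}(\mathbf{z})$ produces the same data structure and estimates as running $\textsc{Initialise}(X \cup \mathbf{z}, Q, \varepsilon)$ directly, and then invoke Lemma~\ref{lemma:variance} (and Lemma~\ref{lem:unbias4}) on the fresh-initialisation view. The only stylistic difference is that you phrase the equivalence distributionally, whereas the paper argues a pointwise equivalence (for a fixed draw of hash functions and sampling coins the two executions produce identical buckets and estimators), which is slightly stronger but yields the same conclusion; your treatment of the level-crossing case on Lines~\ref{alg:addupdate:line:overestimate}--\ref{alg:addupdate:re-estimate_query} and the concluding union bound also match the paper's reasoning.
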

\begin{proof} 
     We prove that running  the initialisation procedure $\textsc{Initialise}(X \cup \mathbf{z}, Q, \varepsilon)$ is the same as  running $\textsc{Initialise}(X, Q, \varepsilon)$,  and then running $\textsc{AddDataPoint}(\mathbf{z})$.
\begin{itemize}[leftmargin=1cm]
    \item  We first examine  the $\textsc{AddPointAndUpdateQueries}(\mathbf{z}, Q, \varepsilon)$ procedure (Algorithm~\ref{alg:addpointandupdatequeries}). On  Lines~\ref{alg:addupdate:line:samplepoint_start}--\ref{alg:addupdate:line:hash_point} and Lines~\ref{alg:addupdate:outside_geometric_level_start}--\ref{alg:addupdate:outside_geometric_level_addpoint}, the algorithm samples the point $\mathbf{z}$ with probability $\min\left\{\frac{1}{2^{j+1} \mu_i}, 1\right\}$ and $\frac{1}{2n'}$, respectively: If $\mathbf{z}$ is sampled in Lines~\ref{alg:addupdate:line:samplepoint_start}--\ref{alg:addupdate:line:hash_point},  the algorithm stores $\mathbf{z}$ in the bucket corresponding to hash value $H_{\mu_i, a, j, \ell}(\mathbf{z})$;  if $\mathbf{z}$ is sampled in  Lines~\ref{alg:addupdate:outside_geometric_level_start}--\ref{alg:addupdate:outside_geometric_level_addpoint}, the algorithm adds $\mathbf{z}$ to the set $\wt{X}_{\mu_i, a}$. This is the same as  sampling and storing  $\mathbf{z}$ in the $\textsc{Preprocess}(X \cup \mathbf{z}, \varepsilon)$ procedure (Algorithm~\ref{alg:preprocess_dataset}) on Lines~\ref{line:sub-sample4} and~\ref{line:beyondJsample}, which is called during $\textsc{Initialise}(X \cup \mathbf{z}, Q, \varepsilon)$.
    Hence, after running $\textsc{Initialise}(X, Q, \varepsilon)$ followed by $\textsc{AddPointAndUpdateQueries}(\mathbf{z}, Q, \varepsilon)$, the stored points in $H_{\mu_i, a, j, \ell}$ and $\wt{X}_{\mu_i, a}$ for all $\mu_i \in M$, $a\in [K_1]$, $j \in [J_{\mu_i}]$, and $\ell \in [K_{2,j}]$ are the same as the ones after running $\textsc{Initialise}(X \cup \mathbf{z}, Q, \varepsilon)$.

    \item   Next, we prove that the estimates $\hat{\mu}_\mathbf{q}$ are updated correctly for every $\mathbf{q} \in Q$.
Without loss of generality, let $\mathbf{q}$ be a query point such that $\mathbf{q} \in Q_{\mu_i} \setminus \bigcup_{j' < i} Q_{\mu_{j'}}$. 
    
    \begin{itemize}
        \item  We first note that, when running $\textsc{Initialise}(X, Q, \varepsilon)$ (Algorithm~\ref{alg:dynamic_kde}), the \textsf{KDE} estimate for $\mathbf{q}$ is returned by running $\textsc{QueryMuEstimate}(X, \mathbf{q}, \varepsilon, \mu_i)$ (Line~\ref{alg:querypoint:used_query_estimate} of Algorithm~\ref{alg:querypoints}).
        \item When running $\textsc{Initialise}(X \cup \mathbf{z}, Q, \varepsilon)$, if $\mathbf{z}$ is sampled on Line~\ref{line:sub-sample4} during $\textsc{Preprocess}(X \cup \mathbf{z}, \varepsilon)$ for any iteration $a\in [K_1]$, $j \in [J_{\mu_i}]$, then $\mathbf{z}$  is stored in the  bucket $B_{H_{\mu_i, a, j, \ell}} (\mathbf{z})$   for all $\ell \in [K_{2,j}]$. Moreover, if  $\mathbf{z}$ is sampled on Line~\ref{line:beyondJsample} for any iteration $a\in [K_1]$,  then $\mathbf{z}$ is stored in $\wt{X}_{\mu_i, a}$.
    In this case, if $H_{\mu_i, a, j, \ell}(\mathbf{q}) = H_{\mu_i, a, j, \ell}(\mathbf{z})$ and $\mathbf{z} \in L^\mathbf{q}_j$ for some $\ell \in [K_{2,j}]$, or if $\mathbf{z} \in L^{\mathbf{q}}_{J_{\mu_i}+1}$, then $\mathbf{z}$ would be included in the set of recovered points $S$ for the iteration $a\in [K_1]$, and consequently in the estimator $Z_{\mathbf{q}, a}$  when $\textsc{QueryMuEstimate}(X \cup \mathbf{z}, \mathbf{q}, \varepsilon, \mu_i)$ is called during $\textsc{PreProcessQueryPoints}(X \cup \mathbf{z}, Q, \varepsilon)$ (Algorithm~\ref{alg:full_hash_procedures}).
        
        \item  On the other hand, we notice that,  when running $\textsc{AddPointAndUpdateQueries}(\mathbf{z}, Q, \varepsilon)$, $\mathbf{q}$ is recovered if   (i) $H_{\mu_i,a,j,\ell}(\mathbf{q}) = H_{\mu_i,a,j,\ell}(\mathbf{z})$ and $\mathbf{q} \in L_{j}^\mathbf{z}$ (Line~\ref{alg:addupdate:line:samplepoint_end} of Algorithm~\ref{alg:addpointandupdatequeries})   or  (ii) $\mathbf{q} \in L_{J_{\mu_i} + 1}^\mathbf{z}$  (Line~\ref{alg:addupdate:q_recover_outside_geometriclevel} of Algorithm~\ref{alg:addpointandupdatequeries}). 
          Furthermore, (i) $\mathbf{q} \in L^\mathbf{z}_{j'}$ if and only if $\mathbf{z} \in L^\mathbf{q}_{j'}$ for any $j' \in [J_{\mu_i}+1]$, and (ii) the buckets $B^*_{H_{\mu_i,a,j,\ell}}(\mathbf{q})$ and $B_{H_{\mu_i,a,j,\ell}}(\mathbf{q})$ are populated using the same hash function (Line~\ref{alg:preprocess_dataset_copy_hash} of Algorithm~\ref{alg:full_hash_procedures}).
        Therefore, $\mathbf{q}$ is recovered at iteration $a \in [K_1]$ when running $\textsc{AddPointAndUpdateQueries}(\mathbf{z}, Q, \varepsilon)$ if and only if $\mathbf{z}$ is included in the estimator $Z_{\mathbf{q}, a}$ when $\textsc{QueryMuEstimate}(X \cup \mathbf{z}, \mathbf{q}, \varepsilon, \mu_i)$ is called   during $\textsc{PreProcessQueryPoints}(X \cup \mathbf{z}, Q, \varepsilon)$.
          Then, the estimator $Z_{\mathbf{q}, a}$ is updated accordingly by adding $\mathbf{z}$ through Lines~\ref{alg:addupdate:update_estimates_start}--\ref{alg:addupdate:line:update_estimate_median} of Algorithm~\ref{alg:addpointandupdatequeries} as it would be done through Lines~\ref{alg:querypoint:start_compute_estimate}--\ref{alg:querypoint:end_compute_estimate} of Algorithm~\ref{alg:querypoints}.
        \item  Finally, if  $\hat{\mu}_{\mathbf{q}} > \mu_i$, then we re-estimate the query point on Lines~\ref{alg:addupdate:line:overestimate}--\ref{alg:addupdate:re-estimate_query}, to ensure we have the correct estimate $\hat{\mu}_\mathbf{q}$. We also update the set $Q_{\mu_i}$ accordingly, and   remove $\mathbf{q}$ from every $B^*_{H_{\mu_i, a', j', \ell'}}(\mathbf{z})$ for all $a' \in [K_1]$, $j' \in [J_{\mu_i}]$, and $\ell' \in [K_{2,j}]$.
    \end{itemize}
\end{itemize}
Combining everything together, we have shown that performing the initialisation procedure $\textsc{Initialise}(X \cup \mathbf{z}, Q, \varepsilon)$ is the same as  running $\textsc{Initialise}(X, Q, \varepsilon)$, followed by $\textsc{AddDataPoint}(\mathbf{z})$.  
\end{proof}

Next, we prove how many times any individual query point $\mathbf{q}$ is updated as the data points are inserted using the $\textsc{AddDataPoint}(\mathbf{z})$ procedure. 
 Let 
$X^{\mathbf{q}}_1 \triangleq \{\mathbf{x}_1, \ldots, \mathbf{x}_{n_1}\}$ 
be the set of  points presented at the time when $\mathbf{q}$ is added, and let 
$Z_T \triangleq \{\mathbf{z}_1, \ldots, \mathbf{z}_T\}$
be the points added up until the query time $T$. We use $\mathbf{z}_t$ to denote the new point added at time $t$. Note that it holds   for the points $X_T$ at time $T \geq 1$ that
$X^{\mathbf{q}}_T = X^{\mathbf{q}}_1 \cup Z_T.$
 Next, we define the event $\mathcal{F}^{\mathbf{q}}_{t}$ that
\begin{equation}\label{eq:correct_estimate_time_t}
\widehat{\mu}_{\mathbf{q}, t} \in (1 \pm \varepsilon) \cdot k(\mathbf{q}, X_t), 
\end{equation}
where $\widehat{\mu}_{\mathbf{q}, t}$ is the maintained query estimate for $\mathbf{q}$ at time $t$ from Algorithm~\ref{alg:dynamic_kde}. By Lemma~\ref{lemma:variance}, we know that $\mathcal{F}^{\mathbf{q}}_{t}$ happens with high probability. Moreover, for a large enough constant $C$ on Line~\ref{alg:dynamic_kde_constant} of Algorithm~\ref{alg:dynamic_kde}, we can ensure that this happens with high probability at every time step $t$. Therefore in the following we assume $\mathcal{F}^{\mathbf{q}}_{t}$ happens.
 We also introduce the following notation.
\begin{definition}\label{def:last_mu_upperbound}
    We define $T_{\mu_i}^\mathbf{q}$ to be the time step such that $$k(\mathbf{q}, X^{\mathbf{q}}_{T_{\mu_i}^\mathbf{q}}) = k(\mathbf{q}, X^{\mathbf{q}}_1) + \sum_{t=1}^{T_{\mu_i}^\mathbf{q}}k(\mathbf{q}, \mathbf{z}_t) \leq \mu_i$$ and $$k(\mathbf{q}, X^{\mathbf{q}}_{T_{{\mu_i}}^\mathbf{q} + 1}) = k(\mathbf{q}, X^{\mathbf{q}}_1) + \sum_{t=1}^{T_{\mu_i}^{\mathbf{q}}+1}k(\mathbf{q}, \mathbf{z}_t) > \mu_i.$$ 
\end{definition}
 By definition, $T_{\mu_i}^\mathbf{q}$ is the last time step at which the \textsf{KDE} value of $\mathbf{q}$ is at most $\mu_i$. The next lemme analyses  the   number of times a query point $\mathbf{q}$ is updated.

\begin{lemma}\label{lemma:bound_number_updates}
    Let $\mathbf{q}$ be a maintained query point by our algorithm. Then the total number of updates $\mathcal{U}^{\mathbf{q}}_T$ during $T$ insertions is, with high probability,
$\mathcal{U}^{\mathbf{q}}_T = \wt{O}(1).$
\end{lemma}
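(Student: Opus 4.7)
The plan is to first bound $\mathbb{E}[\mathcal{U}^{\mathbf{q}}_T]$ and then lift the bound to a high-probability statement via Chernoff. I condition throughout on the event that $\mathcal{F}^{\mathbf{q}}_t$ from~\eqref{eq:correct_estimate_time_t} holds simultaneously for every $t\le T$, which by Lemma~\ref{lemma:variance} and a union bound occurs with high probability once the constant $C$ on Line~\ref{alg:dynamic_kde_constant} of Algorithm~\ref{alg:dynamic_kde} is chosen large enough. Under this event, the unique level $\mu_{i^*(t)}\in M$ with $\mathbf{q}\in Q_{\mu_{i^*(t)}}\setminus\bigcup_{j'<i^*(t)}Q_{\mu_{j'}}$ satisfies $\mu_{\mathbf{q},t}\le \mu_{i^*(t)}\le 4\mu_{\mathbf{q},t}$. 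Inspecting Algorithm~\ref{alg:addpointandupdatequeries} then shows that the only slots $(\mu_i,a,j)$ through which $\mathbf{z}_t$ can ever cause an increment of $Z_{\mathbf{q},a}$ are $\mu_i=\mu_{i^*(t)}$ together with $j=j^*(t)$, where $j^*(t)$ is the unique integer with $k(\mathbf{z}_t,\mathbf{q})\in(2^{-j^*(t)},2^{-j^*(t)+1}]$, plus the ``tail'' slot corresponding to $\widetilde{X}_{\mu_{i^*(t)},a}$.

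Using Definition~\ref{def:last_mu_upperbound} I partition the timeline into $O(\log n)$ epochs $\{T_{\mu_{i-1}}^{\mathbf{q}}+1,\dots,T_{\mu_i}^{\mathbf{q}}\}$, on which $\mu_{\mathbf{q},t}\in(\mu_{i-1},\mu_i]$ and hence $\mu_{i^*(t)}\ge \mu_{\mathbf{q},t}> \mu_{i-1}=\mu_i/2$. Fix one such epoch and $a\in[K_1]$. The indicator that $Z_{\mathbf{q},a}$ is incremented by $\mathbf{z}_t$ is dominated by the subsampling event for $\mathbf{z}_t$ at slot $(\mu_{i^*(t)},a,j^*(t))$, whose probability equals $1/(2^{j^*(t)+1}\mu_{i^*(t)})\le k(\mathbf{z}_t,\mathbf{q})/(2\mu_{i^*(t)})$ by the definition of $L^{\mathbf{q}}_{j^*(t)}$. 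Summing over $t$ in the epoch and using the telescoping identity $\sum_t k(\mathbf{z}_t,\mathbf{q})\le \mu_i-\mu_{i-1}=\mu_i/2$ therefore gives a per-$a$ expected contribution of at most a constant; the tail slot is handled identically via the $1/(2n')$ sampling rate and the fact that $n$ at most doubles within a reconstruction phase. Multiplying by $K_1=\widetilde{O}(\varepsilon^{-2})$, by the $O(\log n)$ epochs, and by the $O(\log n)$ reconstruction phases yields $\mathbb{E}[\mathcal{U}^{\mathbf{q}}_T]=\widetilde{O}(1)$.

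For the high-probability upgrade, I additionally condition on the hash functions (which are frozen at each initialisation/reconstruction) and on the ordered data sequence $(\mathbf{z}_t)_{t\le T}$. Every per-$(t,a)$ update indicator then factorises as $X_{t,a}=c_{t,a}\cdot B_{t,a}$, where $c_{t,a}\in\{0,1\}$ is deterministic under the conditioning and $B_{t,a}$ is the fresh subsampling Bernoulli with parameter $1/(2^{j^*(t)+1}\mu_{i^*(t)})$ at slot $(\mu_{i^*(t)},a,j^*(t))$. The $B_{t,a}$ are mutually independent across $t$ and $a$, so $\mathcal{U}^{\mathbf{q}}_T$ is a sum of independent bounded Bernoullis with mean $\widetilde{O}(1)$, and a standard Chernoff/Bernstein bound gives $\mathcal{U}^{\mathbf{q}}_T=\widetilde{O}(1)$ with high probability.

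The main obstacle I expect is verifying that the epoch decomposition is robust to $\mu_{i^*(t)}$ itself being a random function of the history: on $\mathcal{F}^{\mathbf{q}}_t$ one uniformly has $\mu_{i^*(t)}\in[\mu_{\mathbf{q},t},4\mu_{\mathbf{q},t}]$, so inside epoch $i$ every ratio $k(\mathbf{z}_t,\mathbf{q})/\mu_{i^*(t)}$ is comparable to $k(\mathbf{z}_t,\mathbf{q})/\mu_i$ up to a constant, and the telescoping step survives unchanged. A secondary check is that the occasional full re-query triggered on Line~\ref{alg:addupdate:re-estimate_query} happens at most $O(\log n)$ times per query point, since each firing strictly advances $\mathbf{q}$'s membership in the $Q_{\mu_i}$ hierarchy; this is already absorbed by the $O(\log n)$ epoch factor and therefore does not inflate $\mathcal{U}^{\mathbf{q}}_T$ beyond the claimed bound.
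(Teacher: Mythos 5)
Your expectation computation matches the paper's: partition the timeline into epochs using Definition~\ref{def:last_mu_upperbound}, note that the update event for $Z_{\mathbf{q},a}$ at time $t$ is contained in the subsampling event at the slot corresponding to $\mathbf{q}$'s current level and the band $j^*(t)$ with $k(\mathbf{q},\mathbf{z}_t)\in(2^{-j^*(t)},2^{-j^*(t)+1}]$, bound that probability by $O\!\left(k(\mathbf{q},\mathbf{z}_t)/\mu_{i'}\right)$, telescope $\sum_t k(\mathbf{q},\mathbf{z}_t)\le\mu_{i'}$ across the epoch, and multiply by $K_1$, by $|M|$ epochs, and by the $O(\log T)$ reconstruction phases. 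The paper does exactly this, only being more explicit about the three adjacent levels $\mu_{i'-1},\mu_{i'},\mu_{i'+1}$ the query can occupy at time $t$, which you instead fold into the $[\mu_{\mathbf{q},t},4\mu_{\mathbf{q},t}]$ bracket; your ``robustness'' remark covers the same concern.

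The concentration step, however, contains a gap, and it is not the one you flagged. You condition on the hash functions and the data sequence and claim the per-$(t,a)$ indicator factors as $c_{t,a}\cdot B_{t,a}$ with $c_{t,a}$ deterministic under this conditioning. That is not correct: the level $\mu_{i^*(t)}$ the query occupies at time $t$ (and hence which slot $\mathbf{z}_t$ can hit, and which hash function enters the collision test) depends on the earlier random subsampling outcomes and on whether re-estimation fired on Line~\ref{alg:addupdate:re-estimate_query}. So $c_{t,a}$ remains history-dependent even after fixing the hash functions and the data, and the summands are not a family of mutually independent Bernoullis; a plain Chernoff/Bernstein bound is not immediately justified. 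The natural repair is a martingale tail bound (Freedman's inequality), since the per-step conditional expectations and conditional variances are both controlled by the same $O(k(\mathbf{q},\mathbf{z}_t)/\mu_{i'})$ quantities. The paper instead computes both an upper and a lower bound on $\mathbb{E}[\mathcal{U}^{\mathbf{q}}_T]$ and applies a multiplicative Chernoff bound; this too is informal about independence, but it does not rest on the stronger determinism claim you make, and the lower bound ensures the exponent in the Chernoff bound is $\Omega(K_1|M|)$ rather than relying on the variance being small.
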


\begin{proof} 
Since the \textsf{KDE} data structure can be re-initialised at most $\log(T) = \widetilde{O}(1)$ times~(cf. Line~\ref{line:dynamic_kde:recompute_start} of Algorithm~\ref{alg:dynamic_kde}) through the sequence of $T$ updates,  it suffices for us to analyse the number of times $\mathbf{q}$ is updated between different re-initialisations; we assume this in the remaining part of the proof.
 To analyse the expected number of times that $\mathbf{q}$ (Line~\ref{alg:addupdate:line:update_estimates_end} of Algorithm~\ref{alg:addpointandupdatequeries}) is updated throughout the sequences of updates $Z_T$, we define the   random variable $Y^{\mathbf{q}}_{a, t}$ by
\[
    Y^{\mathbf{q}}_{a, t} \triangleq \twopartdefow{1}{\text{estimate } Z_{\mathbf{q},a} \text{ is updated at time } t}{0}{}
\]
 Let $\mathcal{E}^{\mathbf{q}}_{a, t}$ be the event that estimate $Z_{\mathbf{q},a}$ is updated at time $t$, and  we assume without loss of generality that  $T^{\mathbf{q}}_{\mu_{i'-1}} < t \leq T^{\mathbf{q}}_{\mu_{i'}}$ for some $\mu_{i'}$. First note that estimate $Z_{\mathbf{q},a}$ is updated if $\mathbf{q}\in S$  (Line~\ref{alg:addupdate:recovered_points} of Algorithm~\ref{alg:addpointandupdatequeries}). Furthermore, because $\widehat{\mu}_{\mathbf{q}, {t-1}} \in (1 \pm \varepsilon) \cdot k(\mathbf{q}, X_{t-1})$ by \eqref{eq:correct_estimate_time_t} and  $T^{\mathbf{q}}_{\mu_{i'-1}} < t \leq T^{\mathbf{q}}_{\mu_{i'}}$, one of the following holds: 
\begin{enumerate}[label=(\roman*), leftmargin=1cm]
    \item  $\mathbf{q} \in Q_{\mu_{i'}} \setminus \left( \bigcup_{j' < i'} Q_{\mu_{j'}}\right) $;
    \item $\mathbf{q} \in Q_{\mu_{i'-1}} \setminus \left( \bigcup_{j' < i'-1} Q_{\mu_{j'}}\right) $;
    \item $\mathbf{q} \in Q_{\mu_{i'+1}} \setminus  \left(\bigcup_{j' < i'+1} Q_{\mu_{j'}}\right)$.
\end{enumerate} 
Additionally, it   holds that $\mathbf{q} \in L^{\mathbf{z}_t}_{j'}$ for some $j' \in J_{\mu_{i'}}$.

 By these conditions, $\mathbf{q}$ is included in $S$ at time $t$ if and only if $\mathbf{z}_t$ is sampled at either the iteration for   $\mu_{i'} \in M$, $\mu_{i'-1} \in M$ or $\mu_{i'+1} \in M$  (Line~\ref{alg:addupdate:line:mu_estimate} of Algorithm~\ref{alg:addpointandupdatequeries}), and the corresponding iteration $j' \in J_{\mu_{i'+1}}$ on Line~\ref{alg:addupdate:line:samplepoint_start} of Algorithm~\ref{alg:addpointandupdatequeries}. Therefore, it holds for $t\in (T^{\mathbf{q}}_{\mu_{i'-1}}, T^{\mathbf{q}}_{\mu_{i'}}]$ that 
\begin{equation}\label{eq:upper_bound_update_probability}
\mathbb{P}\left[\mathcal{E}^{\mathbf{q}}_{a, t}\right] \leq \frac{1}{2^{j'+1} \cdot \mu_{i'-1}} \leq \frac{1}{2^{j'} \cdot \mu_{i'}} \leq \frac{2 k(\mathbf{q}, \mathbf{z}_t)}{\mu_{i'}}, 
\end{equation}
  where the last inequality uses the fact that    $\mathbf{q} \in L^{\mathbf{z}_t}_{j'}$ (Definition~\ref{def:geometric_weight_levels}). Similarly, we have that  
\[
\mathbb{P}\left[\mathcal{E}^{\mathbf{q}}_{a, t}\right] \geq \frac{ k(\mathbf{q}, \mathbf{z}_t)}{4 \mu_{i'}}.
\]
 Let
$\mathcal{U}^{\mathbf{q}}_T \triangleq \sum_{a=1}^{K_1} \sum_{t=1}^{T} Y^{\mathbf{q}}_{a, t}$
  be the total number of times that the query point $\mathbf{q}$ is updated, and we have that
 \begin{align*}
    \mathbb{E}\left[ \mathcal{U}^{\mathbf{q}}_T \right] &= \sum_{a=1}^{K_1} \sum_{t=1}^{T} \mathbb{E}\left[Y^{\mathbf{q}}_{a, t}\right] \\
    &= \sum_{a=1}^{K_1} \sum_{t=1}^{T} \mathbb{P}\left[\mathcal{E}^{\mathbf{q}}_{a, t}\right] \\
    &= \sum_{a=1}^{K_1} \sum_{\mu_{i'} \in M} \sum_{t=T^{\mathbf{q}}_{\mu_{i'-1}}}^{T^\mathbf{q}_{\mu_{i'}}} \mathbb{P}\left[\mathcal{E}^{\mathbf{q}}_{a, t}\right] \\
    &\leq \sum_{a=1}^{K_1} \sum_{\mu_{i'} \in M} \sum_{t=T^{\mathbf{q}}_{\mu_{i'-1}}}^{T^\mathbf{q}_{\mu_{i'}}} \frac{2 k(\mathbf{q}, \mathbf{z}_t)}{\mu_{i'}} \\
    &\leq \sum_{a=1}^{K_1} \sum_{\mu_{i'} \in M} \frac{2 \mu_{i'}}{\mu_{i'}} \\
    & = 2 \cdot K_1 \cdot |M|\\
    & =  \wt{O}(1),
\end{align*}
 where the first inequality follows by  \eqref{eq:upper_bound_update_probability} and the second one holds by the fact that  \[\sum_{t=T^{\mathbf{q}}_{\mu_{i'-1}}}^{T^\mathbf{q}_{\mu_{i'}}} k(\mathbf{q}, \mathbf{z}_t) \leq \mu_{i'}.\] Similarly, we have that
\[
 \mathbb{E}\left[ \mathcal{U}^{\mathbf{q}}_T \right] \geq \frac{1}{4} \cdot K_1 \cdot |M|   .\]
By the Chernoff bound,  it holds that
\begin{equation*}
    \mathbb{P}\left[ \mathcal{U}^{\mathbf{q}}_T \geq 10 \cdot \mathbb{E}\left[ \mathcal{U}^{\mathbf{q}}_T \right] \right] \leq \left( \frac{\mathrm{e}^9}{10^{10}} \right)^{K_1\cdot |M|/4}  \leq \exp\lp -K_1 \cdot |M|\rp = o(n^{-c})
\end{equation*}
for some constant $c$, and we have  with high probability that
$
    \mathcal{U}^{\mathbf{q}}_T \leq 20 \cdot K_1 \cdot |M| = \wt{O}(1)$, 
which proves the statement.
\end{proof}

\subsection{Proof of Theorem~\ref{thm:incremental_dynamic_kde}}\label{sec:dynamic_kde_proof_theorem}

\begin{proof}We start with proving the first statement. Notice that   $\textsc{PreProcess}(X, \varepsilon)$   goes through $M\cdot K_1\cdot J_{\mu_i}\cdot K_{2,j}$ iterations, where  $M=O(\log(n_1))$, $K_1 = O(\varepsilon^{-2}\cdot \log(n_1))$, $ J_{\mu_i} = O(\log(n_1))$ and $K_{2,j} = O(\log(n_1) \cdot \mathrm{cost}(k))$.  Since   $k_j=\tilde{O}(1)$ by definition, the algorithm   concatenates $\tilde{O}(1)$ LSH functions.  By Lemma~\ref{lem:andoni-indyk},  
 the evaluation time of $H^*(\mathbf{x})$ for any $H^*\in \mathcal{H}^{k_j}$ is $n_1^{o(1)}$,  and hashing all $n_1$ points yields the running time of $\epsilon^{-2}\cdot n_1^{1+o(1)}\cdot \mathrm{cost}(k)$ for $\textsc{PreProcess}(X, \varepsilon)$ in the worst case. Since  we start with an empty set of query points $Q=\emptyset$, the running time of $\textsc{PreProcessQueryPoints}(X, Q, \varepsilon)$ can be omitted. This proves the first statement.

The guarantees for $\textsc{AddQueryPoint}(\mathbf{q})$ in the second statement follow from Lemma~\ref{lem:query-time-gen} and Lemma~\ref{lemma:variance}. The running time for $\textsc{DeleteQueryPoint}(\mathbf{q})$ follows from the running time guarantee for $\textsc{AddQueryPoint}(\mathbf{q})$.

Now we prove the third statement. The correctness of the updated estimate of $\mu_\mathbf{q}$ follows from Lemma~\ref{lemma:correctness_update_incremental}. To prove the time complexity, we notice that, when running $\textsc{AddDataPoint}(\mathbf{z})$, the procedure goes through $M\cdot K_1\cdot J_{\mu_i}\cdot K_{2,j}$ iterations, where $M=O(\log n)$, $K_1 = O(\log(n)/\varepsilon^2)$, $ J_{\mu_i} = O(\log n)$, and $K_{2,j} = O(\log(n) \cdot \mathrm{cost}(k))$.   In the worst case, $\mathbf{z}$ is sampled in every iteration on Line~\ref{alg:addupdate:line:samplepoint_start} of Algorithm~\ref{alg:addpointandupdatequeries}, and needs to be stored in the bucket $B_{H_{\mu_i, a, j, \ell}}(\mathbf{z})$. Therefore, the running time of updating all the buckets $B_{H_{\mu_i, a, j, \ell}}(\mathbf{z})$ for a new  $\mathbf{z}$  is at most $\epsilon^{-2}\cdot n^{o(1)}\cdot \mathrm{cost}(k)$. To analyse Lines~\ref{alg:addupdate:update_estimates_start}--\ref{alg:addupdate:update_estimates_end} of Algorithm~\ref{alg:addpointandupdatequeries}, we perform an amortised analysis. By Lemma~\ref{lemma:bound_number_updates}, it holds   with high probability that every $\mathbf{q} \in Q$ is updated $\wt{O}(1)$ times throughout the sequence of data point updates. When $\mathbf{q} \in Q$ is updated, the total running time for Lines~\ref{alg:addupdate:update_estimates_start}--\ref{alg:addupdate:update_estimates_end} is   \[\wt{O}(\epsilon^{-2} \cdot K_{2,j} + \epsilon^{-2} \cdot \mathrm{cost}(k)) = \wt{O}(\epsilon^{-2} \cdot \mathrm{cost}(k)),\] due to Lines~\ref{alg:addupdate:remove_datapoint_full_hash} and~\ref{alg:addupdate:re-estimate_query}. Let $T$ be the total number of query and data point insertions at any point throughout the sequence of updates, and   $m\triangleq|Q|$. Then the amortised update time is
  \[
  \wt{O} \lp \frac{m \cdot \epsilon^{-2} \cdot \mathrm{cost}(k)}{T} \rp = \wt{O} \lp \frac{m \cdot \epsilon^{-2} \cdot \mathrm{cost}(k)}{m} \rp  = \wt{O}\lp\epsilon^{-2} \cdot \mathrm{cost}(k) \rp,
  \]
 where the second inequality follows from  $T \geq m$, as the algorithm started with an empty query set $Q=\emptyset$.
 Combining everything together proves the running time.
\end{proof}

\section{Omitted Detail from Section~\ref{sec:dynamic_cps}} \label{sec:appendix_dyn_cps}

 This section presents all the detail omitted from Section~\ref{sec:dynamic_cps}, and is organised as follow: Section~\ref{sec:app_init_cps} presents and analyses the algorithm for the initialisation step;  Section~\ref{sec:app_dyn_cps} presents and analyses the algorithm for the dynamic update step.

\subsection{The Initialisation Step\label{sec:app_init_cps}}

 In this subsection we present the algorithms used in the initialisation step, and analyse its correctness as well as complexity. We first introduce the tree data structure (Algorithm~\ref{alg:tree_data_structure}) and the initialisation procedures for constructing an approximate similarity graph (Algorithm~\ref{alg:initialise_CPS}). The analysis then establishes the probability that a sampling path passes through any internal node of the constructed tree (Lemma~\ref{lem:nodeprob}), which is crucial for the subsequent correctness and time complexity analysis that follow \citet{macgregor2024fastkde} to a large extent.
 
 The following tree data structure will be used in the design of our procedures.

\begin{algorithm}[H]
  \caption{Tree Data Structure}
  \label{alg:tree_data_structure}
  \begin{algorithmic}[1]

    \STATE \textsc{Leaf}($\mathbf{x}_i$)
    \STATE \hspace{1em} \textbf{Input:} data point $\mathbf{x}_i$
    \STATE \hspace{1em} $\textsf{data} \gets \mathbf{x}_i$
      \COMMENT{Stores the data point}
    \STATE \hspace{1em} $\textsf{paths} \gets \textsc{Nil}$
      \COMMENT{Stores the sampling paths ending at this leaf}

    \STATE \textsc{Node}($X$)
    \STATE \hspace{1em} \textbf{Input:} set of data points $X$
    \STATE \hspace{1em} $\textsf{data} \gets X$
      \COMMENT{Stores the data points in the subtree rooted at this node}
    \STATE \hspace{1em} $\textsf{size} \gets |X|$
      \COMMENT{Number of data points in the subtree rooted at this node}
    \STATE \hspace{1em} $\textsf{kde} \gets \textsc{Nil}$
      \COMMENT{Stores the DynamicKDE structure}
    \STATE \hspace{1em} $\textsf{left} \gets \textsc{Nil}$
      \COMMENT{Left child node}
    \STATE \hspace{1em} $\textsf{right} \gets \textsc{Nil}$
      \COMMENT{Right child node}
    \STATE \hspace{1em} $\textsf{parent} \gets \textsc{Nil}$
      \COMMENT{Parent node}
    \STATE \hspace{1em} $\textsf{paths} \gets \textsc{Nil}$
      \COMMENT{Stores the sampling paths passing through this node}

  \end{algorithmic}
\end{algorithm}

 Based on this data structure, the main procedures used for constructing an approximate similarity graph in the initialisation step are presented in Algorithm~\ref{alg:initialise_CPS}. We remark that, for any set $X$ of data points,   we always set $\epsilon=1/\log^3|X|$ when running the  $\textsc{DyanmicKDE.Initialise}$ procedure. Choosing a fixed value of $\epsilon$ in this section allows us to  simplify  the presentation of the analysis without loss of generality. 

\begin{algorithm}
  \caption{Initialisation Procedures for Constructing an Approximate Similarity Graph}
  \label{alg:initialise_CPS}
  \begin{algorithmic}[1]

    \PROCEDURE{InitialiseTree($X$)}{}
    \STATE \textbf{Input:} set $X$ of data points
    \IF{$|X| = 1$}
      \STATE \textbf{return} $\textsc{Leaf}(X)$ 
        \COMMENT{Leaves store individual data points $\mathbf{x}_i \in X$}
    \ELSE
      \STATE $\tree \gets \textsc{Node}(X)$
      \STATE $m \gets 2^{\lfloor \log (|X| / 2) \rfloor}$, 
      $\hat{n} \gets |X|$ 
        \COMMENT{Nearest power of 2 less than or equal to $\hat{n}/2$}
      \STATE $X_L \gets X[1 : \hat{n} - m]$, 
      $X_R \gets X[\hat{n} - m +1 : \hat{n}]$ 
        \COMMENT{Split the dataset into two}
      \STATE $\tree_L \gets \textsc{InitialiseTree}(X_L)$, 
      $\tree_R \gets \textsc{InitialiseTree}(X_R)$
      \STATE $\tree.\textsf{left} \gets \tree_L$, 
      $\tree.\textsf{right} \gets \tree_R$
      \STATE $\tree_L.\textsf{kde} \gets \textsc{DynamicKDE.Initialise}\bigl(X_L, \emptyset, \tfrac{1}{\log^3 n}\bigr)$
        \COMMENT{(Algorithm~\ref{alg:dynamic_kde})}
      \STATE $\tree_R.\textsf{kde} \gets \textsc{DynamicKDE.Initialise}\bigl(X_R, \emptyset, \tfrac{1}{\log^3 n}\bigr)$
      \STATE \textbf{return} $\tree$
    \ENDIF
    \ENDPROCEDURE

    \STATE   % Empty line separating procedures

    \PROCEDURE{Sample($S, \tree, \ell$)}{}
    \STATE \textbf{Input:} set $S$ of points $\mathbf{x}_i$, \textsf{KDE} tree $\tree$ representing data points $X$, sample index $\ell$
    \STATE \textbf{Output:} $E = \{(\mathbf{x}_i, \mathbf{x}_j)\ \text{for some $i,j$}\}$
    \FOR{$\mathbf{x}_i \in S$}
      \STATE $\mathcal{P}_{\mathbf{x}_i, \ell} \gets \mathcal{P}_{\mathbf{x}_i, \ell} \cup \tree$
      \STATE $\tree.\textsf{paths} \gets \tree.\textsf{paths} \cup \{\mathcal{P}_{\mathbf{x}_i, \ell}\}$
        \COMMENT{Update and store sample paths}
    \ENDFOR
    \IF{$\textsc{IsLeaf}(\tree)$}
      \STATE \textbf{return} $S \times \tree.\textsf{data}$
    \ELSE
      \STATE $\tree_L \gets \tree.\textsf{left}, X_L \gets \tree_L.\textsf{data}$
      \STATE $\tree_R \gets \tree.\textsf{right},  X_R \gets \tree_R.\textsf{data}$
      \FOR{$\mathbf{x}_i \in S$}
        \STATE $\tree_L.\hat{\mu}_{\mathbf{x}_i} \gets \tree_L.\textsf{kde}.\textsc{AddQueryPoint}(\mathbf{x}_i)$ 
          \label{alg:initialise_cps:sample_addquery}
          \COMMENT{(Algorithm~\ref{alg:dynamic_kde})}
        \STATE $\tree_R.\hat{\mu}_{\mathbf{x}_i} \gets \tree_R.\textsf{kde}.\textsc{AddQueryPoint}(\mathbf{x}_i)$
      \ENDFOR
      \STATE $S_L \gets \emptyset, \quad S_R \gets \emptyset$
      \FOR{$\mathbf{x}_i \in S$}
        \STATE $r \sim \mathrm{Unif}[0,1]$
        \IF{$r \le \frac{\tree_L.\hat{\mu}_{\mathbf{x}_i}}
                        {\tree_L.\hat{\mu}_{\mathbf{x}_i} + \tree_R.\hat{\mu}_{\mathbf{x}_i}}$}
          \label{alg:initialise_cps:sample_prob}
          \STATE $S_L \gets S_L \cup \{\mathbf{x}_i\}$
        \ELSE
          \STATE $S_R \gets S_R \cup \{\mathbf{x}_i\}$
        \ENDIF
      \ENDFOR
      \STATE \textbf{return} $\textsc{Sample}(S_L, \tree_L, \ell) \;\cup\; \textsc{Sample}(S_R, \tree_R, \ell)$
    \ENDIF
    \ENDPROCEDURE

    \STATE   % Empty line separating procedures

    \PROCEDURE{ConstructGraph($X$)}{}
    \STATE \textbf{Input:} set of data points $X$
    \STATE $\tree \gets \textsc{InitialiseTree}(X)$
    \STATE $\tree.\textsf{kde} \gets \textsc{DynamicKDE.Initialise}\bigl(X, X, \tfrac{1}{\log^3 |X|}\bigr)$
      \COMMENT{(Algorithm~\ref{alg:dynamic_kde})}
    \STATE $E \gets \emptyset$
    \FOR{$\ell \in [L]$} 
      \label{alg:line:construct_sparsifier:start_sampling}
      \STATE $E_\ell \gets \textsc{Sample}(X, \tree, \ell)$
      \STATE $E \gets E \cup E_\ell$
      \FOR{$(\mathbf{x}_i, \mathbf{x}_j) \in E_\ell$}
        \STATE $\widehat{w}(i,j) \gets L \cdot k(\mathbf{x}_i, \mathbf{x}_j)\big/
          \min\{\tree.\textsf{kde}.\hat{\mu}_{\mathbf{x}_i}, \tree.\textsf{kde}.\hat{\mu}_{\mathbf{x}_j}\}$
        \IF{$\min\{\tree.\textsf{kde}.\hat{\mu}_{\mathbf{x}_i}, \tree.\textsf{kde}.\hat{\mu}_{\mathbf{x}_j}\}
              = \tree.\textsf{kde}.\hat{\mu}_{\mathbf{x}_j}$}
          \STATE $\mathcal{B}_{\mathbf{x}_j} \gets \mathcal{B}_{\mathbf{x}_j} \cup \{\mathbf{x}_i\}$
            \label{alg:line:construct_sparsifier:add_min_neighbor}
            \COMMENT{Track neighbors with higher degree}
        \ENDIF
        \STATE $w_G(\mathbf{x}_i, \mathbf{x}_j) \mathrel{+}= 
          \frac{k(\mathbf{x}_i, \mathbf{x}_j)}{\widehat{w}(i,j)}$
          \label{alg:line:construct_sparsifier:reweight_factor}
      \ENDFOR 
      \label{alg:line:construct_sparsifier:end_sampling}
    \ENDFOR

    \STATE \textbf{return} $\tree,\; G \triangleq (X, E, w_G)$
    \ENDPROCEDURE

  \end{algorithmic}
\end{algorithm}

 To analyse the correctness and time complexity of the algorithm, we first 
 prove that, for any data point $\mathbf{x}_i \in X$, the probability that its sampling path $\mathcal{P}_{\mathbf{x}_i, \ell}$ passes through any internal node $\tree'$ depends on the \textsf{KDE} value of $\mathbf{x}_i$ with respect to $\tree'.\textsf{data}$.

\begin{lemma} \label{lem:nodeprob}
For any point $\mathbf{x}_i \in X$, tree $\tree$ constructed by $\textsc{ConstructGraph}$ (Algorithm~\ref{alg:initialise_CPS}), and sampling path $\mathcal{P}_{\mathbf{x}_i, \ell}$ (for any $\ell \in [L]$), the probability that $\mathcal{P}_{\mathbf{x}_i, \ell}$  passes through any internal node $\tree'$ of $\tree$ is given by
\[
    \frac{k(\mathbf{x}_i, \tree'.\textsf{data})}{2 k(\mathbf{x}_i, \tree.\textsf{data})} \leq  \mathbb{P}\left[\mathcal{P}_{\mathbf{x}_i, \ell} \in \tree'.\textsf{paths}\right]  \leq \frac{2 k(\mathbf{x}_i, \tree'.\textsf{data})}{k(\mathbf{x}_i, \tree.\textsf{data})}.
\]
\end{lemma}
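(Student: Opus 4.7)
The plan is to analyse the sampling path $\mathcal{P}_{\mathbf{x}_i,\ell}$ level by level, using the fact that the $\textsc{Sample}$ procedure at each internal node $\tree''$ with children $\tree''_L, \tree''_R$ sends the path to $\tree''_L$ with probability $\tree''_L.\hat{\mu}_{\mathbf{x}_i}/(\tree''_L.\hat{\mu}_{\mathbf{x}_i}+\tree''_R.\hat{\mu}_{\mathbf{x}_i})$, and analogously for $\tree''_R$ (see Line~\ref{alg:initialise_cps:sample_prob} of Algorithm~\ref{alg:initialise_CPS}). I would denote by $\tree = \tree_0, \tree_1, \ldots, \tree_h = \tree'$ the unique root-to-node path in $\tree$, so that $\tree_{s+1}$ is one of the two children of $\tree_s$ for every $s$. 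Then, by the chain rule of probability,
\[
\mathbb{P}\left[\mathcal{P}_{\mathbf{x}_i,\ell}\in\tree'.\textsf{paths}\right] \;=\; \prod_{s=0}^{h-1}\frac{\tree_{s+1}.\hat{\mu}_{\mathbf{x}_i}}{\tree_s.\textsf{left}.\hat{\mu}_{\mathbf{x}_i}+\tree_s.\textsf{right}.\hat{\mu}_{\mathbf{x}_i}}.
\]

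Next, I would apply the \textsf{KDE} approximation guarantee from Theorem~\ref{thm:incremental_dynamic_kde}: every estimate $\tree_s.\hat{\mu}_{\mathbf{x}_i}$ computed via $\textsc{DynamicKDE.AddQueryPoint}$ with parameter $\epsilon = 1/\log^3 n$ satisfies $\tree_s.\hat{\mu}_{\mathbf{x}_i}\in (1\pm\epsilon)\cdot k(\mathbf{x}_i,\tree_s.\textsf{data})$ with high probability. By a union bound over the $O(\log n)$ internal nodes along the path (and their children), all estimates are simultaneously $(1\pm\epsilon)$-accurate with high probability. Using exact values $k(\mathbf{x}_i,\tree_s.\textsf{left}.\textsf{data})+k(\mathbf{x}_i,\tree_s.\textsf{right}.\textsf{data})=k(\mathbf{x}_i,\tree_s.\textsf{data})$, each ratio in the product above lies in
\[
\left[\frac{1-\epsilon}{1+\epsilon}\cdot\frac{k(\mathbf{x}_i,\tree_{s+1}.\textsf{data})}{k(\mathbf{x}_i,\tree_s.\textsf{data})},\ \frac{1+\epsilon}{1-\epsilon}\cdot\frac{k(\mathbf{x}_i,\tree_{s+1}.\textsf{data})}{k(\mathbf{x}_i,\tree_s.\textsf{data})}\right].
\]

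Taking the product over the $h=O(\log n)$ levels, the ratios telescope:
\[
\prod_{s=0}^{h-1}\frac{k(\mathbf{x}_i,\tree_{s+1}.\textsf{data})}{k(\mathbf{x}_i,\tree_s.\textsf{data})} \;=\; \frac{k(\mathbf{x}_i,\tree'.\textsf{data})}{k(\mathbf{x}_i,\tree.\textsf{data})},
\]
and the accumulated multiplicative distortion is at most $\bigl((1+\epsilon)/(1-\epsilon)\bigr)^{h}$. Since $\tree$ is a (nearly) complete binary tree on $|X|=n$ leaves we have $h\leq\lceil\log n\rceil$, and with $\epsilon=1/\log^3 n$ one readily checks that $\bigl((1+\epsilon)/(1-\epsilon)\bigr)^{h}\le 1+O(1/\log^2 n)\le 2$ for all sufficiently large $n$; this yields the claimed two-sided bound.

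The main obstacle is simply bookkeeping: one must verify that the high-probability event (that every relevant \textsf{KDE} estimate is $(1\pm\epsilon)$-accurate) indeed holds simultaneously for all $O(\log n)$ nodes along the path, which follows by a union bound over the polynomially many query estimates produced during $\textsc{ConstructGraph}$. A minor subtlety is that the $\tree_s.\hat{\mu}_{\mathbf{x}_i}$ values are not independent across levels, but since we only need the bound to hold pointwise at each node, the union bound suffices; the telescoping argument above then closes the proof with no further probabilistic input.
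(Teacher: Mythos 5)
Your proposal is correct and takes essentially the same approach as the paper: it factors $\mathbb{P}[\mathcal{P}_{\mathbf{x}_i,\ell}\in\tree'.\textsf{paths}]$ via the chain rule over the root-to-$\tree'$ path, uses the $(1\pm\epsilon)$ KDE guarantee (together with the additivity $k(\mathbf{x}_i,X_L)+k(\mathbf{x}_i,X_R)=k(\mathbf{x}_i,X)$) to bound each per-level ratio, telescopes, and bounds the accumulated distortion $\bigl((1+\epsilon)/(1-\epsilon)\bigr)^{\lceil\log n\rceil}\le 2$ using $\epsilon\le 1/\log^3 n$. The paper writes the per-level denominator slightly informally as $\tree''.\hat{\mu}_{\mathbf{x}_i}$ where the algorithm actually uses $\tree''_L.\hat{\mu}_{\mathbf{x}_i}+\tree''_R.\hat{\mu}_{\mathbf{x}_i}$; your version makes this explicit, but the argument is otherwise identical.
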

\begin{proof}
 Let
$X = \{\mathbf{x}_1, \ldots, \mathbf{x}_{n_1}\}$
be the input data points for the $\textsc{ConstructGraph}(X, \varepsilon)$ procedure in Algorithm~\ref{alg:initialise_CPS}.
Then, in each recursive call (at some internal root $\tree''$) to $\textsc{Sample}$ (Algorithm~\ref{alg:initialise_CPS}) we are given the data points $X_L \triangleq \tree''.\textsf{left}.\textsf{data}$ and $X_R \triangleq \tree''.\textsf{left}.\textsf{data}$
as input and assign $\mathcal{P}_{\mathbf{x}_{i}, \ell}$ to either $\tree_L'' \triangleq \tree''.\textsf{left}$ or $\tree_R'' \triangleq \tree''.\textsf{right}$. 
By Line~\ref{alg:initialise_cps:sample_prob} of Algorithm~\ref{alg:initialise_CPS}, we have that the probability of assigning $\mathcal{P}_{\mathbf{x}_{i}, \ell}$ to
$\tree_L''.\textsf{paths}$ is
\[
    \mathbb{P}\left[ \mathcal{P}_{\mathbf{x}_{i}, \ell} \in \tree_L''.\textsf{paths} ~|~ \mathcal{P}_{\mathbf{x}_{i}, \ell} \in \tree''.\textsf{paths} \right] = \frac{\tree_L''.\hat{\mu}_{\mathbf{x}_i}}{\tree''.\hat{\mu}_{\mathbf{x}_i}}.
\]
By  the performance guarantee of the \textsf{KDE} algorithm (Theorem~\ref{thm:incremental_dynamic_kde}), we have that $\tree''.\hat{\mu}_{\mathbf{x}_i} \in (1 \pm \epsilon) \cdot k(\tree''.\textsf{data}, \mathbf{x}_i).$
This gives 
\begin{align} 
    \left(\frac{1 - \epsilon}{1 + \epsilon}\right) \frac{k(\mathbf{x}_i, \tree''_L.\textsf{data})}{k(\mathbf{x}_i, \tree''.\textsf{data})} & \leq \mathbb{P}\left[ \mathcal{P}_{\mathbf{x}_{i}, \ell} \in \tree_L''.\textsf{paths} ~|~ \mathcal{P}_{\mathbf{x}_{i}, \ell} \in \tree''.\textsf{paths} \right] \nonumber\\
    & \leq \left(\frac{1 + \epsilon}{1 - \epsilon}\right) \frac{k(\mathbf{x}_i, \tree''_L.\textsf{data})}{k(\mathbf{x}_i, \tree''.\textsf{data})}. \label{eq:oneprob}
\end{align}

Next, notice that it holds for  a sequence of internal nodes $\tree_1, \tree_2, \ldots, \tree_r$  with $\tree_i.\textsf{parent} = \tree_{i+1}$~($1 \leq i \leq r-1$) that
\[
\mathbb{P}\left[ \mathcal{P}_{\mathbf{x}_i, \ell} \in \tree_1.\textsf{paths}\right] = \prod_{1\leq j\leq r-1} \mathbb{P}\left[ \mathcal{P}_{\mathbf{x}_i, \ell} \in \tree_j.\textsf{paths} | \mathcal{P}_{\mathbf{x}_i, \ell} \in \tree_{j+1}.\textsf{paths} \right], 
\]
where each term in the right  hand side above corresponds to one level of recursion of the $\textsc{Sample}$ procedure in Algorithm~\ref{alg:initialise_CPS} and there are at most $\lceil \log_2(n_1) \rceil$ terms.
Then, by  setting  $\tree_r = \tree$, $\tree_1 = \tree'$, \eqref{eq:oneprob}, and the fact that the denominator and numerator of adjacent terms cancel out, we have
\begin{align*}
    \left(\frac{1 - \epsilon}{1 + \epsilon} \right)^{\lceil \log(n_1) \rceil} \frac{k(\mathbf{x}_i, \tree'.\textsf{data})}{k(\mathbf{x}_i, \tree.\textsf{data})}
    & \leq \mathbb{P}\left[\mathcal{P}_{\mathbf{x}_i, \ell} \in \tree'.\textsf{paths} \right]\\
    & \leq \left(\frac{1 + \epsilon}{1 - \epsilon}\right)^{\lceil \log(n_1)\rceil} \frac{k(\mathbf{x}_i, \tree'.\textsf{data})}{k(\mathbf{x}_i, \tree.\textsf{data})}.
\end{align*}

 For the lower bound, we have that
\begin{align*}
    \left(\frac{1 - \epsilon}{1 + \epsilon}\right)^{\lceil \log(n_1)\rceil} & \geq \left(1 - 2 \epsilon\right)^{\lceil \log(n_1)\rceil}  \geq 1 - 3 \epsilon \log(n_1)    \geq 1/2,
\end{align*}
where the final inequality follows by the condition of $\epsilon$ that $\epsilon \leq 1 / \log^3(n_1)$.

 For the upper bound, we similarly have
\begin{align*}
   \left(\frac{1 + \epsilon}{1 - \epsilon}\right)^{\lceil \log(n_1)\rceil} & \leq  \left(1 + 3 \epsilon\right)^{\lceil \log(n_1)\rceil}  \leq \exp \left( 3 \epsilon \lceil \log(n_1)\rceil  \right)  \leq \mathrm{e}^{2/3} \leq 2,
\end{align*}
where the first inequality follows since $\epsilon \leq 1 / \log^3(n_1)$. 
\end{proof}

  The remaining part of our analysis is very  similar to the proof presented  in~\cite{macgregor2024fastkde}. 

For each added edge,   $\textsc{ConstructGraph}(X)$ computes the   estimate defined by 
\[
    \widehat{w}(i,j) \triangleq 6 C\cdot  \frac{ \log n_1}{\lambda_{k+1}}\cdot \frac{k(\mathbf{x}_i, \mathbf{x}_j)}{\min\{\hat{\mu}_{\mathbf{x}_i}, \hat{\mu}_{\mathbf{x}_j}\}},
\]
where for the ease of notation we denote $\tree.\textsf{kde}.\hat{\mu}_{\mathbf{x}_i} \triangleq \hat{\mu}_{\mathbf{x}_i}$ and $\tree.\textsf{kde}.\hat{\mu}_{\mathbf{x}_j} \triangleq \hat{\mu}_{\mathbf{x}_j}$. If an edge $(\mathbf{x}_i, \mathbf{x}_j)$ is sampled, then the edge is included with weight $k(\mathbf{x}_i, \mathbf{x}_j) / \widehat{w}(i,j)$. The algorithm in \cite{macgregor2024fastkde} is almost the same  as   our algorithm,  with the only difference that in their case  the edge is included with weight $k(\mathbf{x}_i, \mathbf{x}_j) / \widehat{p}(i,j)$, where
\[
\widehat{p}(i,j) \triangleq 6 C\cdot  \frac{ k(\mathbf{x}_i, \mathbf{x}_j) \cdot \log n_1}{\lambda_{k+1}}\cdot \lp\frac{1}{\hat{\mu}_{\mathbf{x}_i}} + \frac{1}{\hat{\mu}_{\mathbf{x}_j}}\rp - \lp 6 C\cdot  \frac{ k(\mathbf{x}_i, \mathbf{x}_j) \cdot \log n_1}{\lambda_{k+1}} \rp^2 \cdot \frac{1}{\hat{\mu}_{\mathbf{x}_i} \cdot \hat{\mu}_{\mathbf{x}_j}}. 
\]
 Notice that
\[
\widehat{p}(i,j) \leq 6 C\cdot  \frac{ k(\mathbf{x}_i, \mathbf{x}_j) \cdot \log n_1}{\lambda_{k+1}}\cdot \lp\frac{1}{\hat{\mu}_{\mathbf{x}_i}} + \frac{1}{\hat{\mu}_{\mathbf{x}_j}}\rp \leq 2 \widehat{w}(i,j).
\]
Assuming without loss of generality that \[6 C\cdot  \frac{ k(\mathbf{x}_i, \mathbf{x}_j) \cdot \log n_1}{\lambda_{k+1}}\cdot \left(\frac{1}{\hat{\mu}_{\mathbf{x}_i}}+\frac{1}{\hat{\mu}_{\mathbf{x}_j}}\right) < 1,
\] we have that
\[
\widehat{p}(i,j) \geq 3 C\cdot  \frac{ k(\mathbf{x}_i, \mathbf{x}_j) \cdot \log n_1}{\lambda_{k+1}}\cdot \lp\frac{1}{\hat{\mu}_{\mathbf{x}_i}} + \frac{1}{\hat{\mu}_{\mathbf{x}_j}}\rp \geq \frac{\widehat{w}(i,j)}{2}.
\]
As such, the scaling factor $\widehat{w}(i,j)$ in our algorithm and  $\widehat{p}(i,j)$ in the algorithm of~\cite{macgregor2024fastkde} are within a constant factor of each other.  Therefore, to prove the first statement of Theorem~\ref{thm:dynamic_cps}, one can  follow the proof of Theorem~2 of~\cite{macgregor2024fastkde}, while replacing each $\widehat{p}(i,j)$ with $\widehat{w}(i,j)$ appropriately.

\subsection{The Dynamic Update Step\label{sec:app_dyn_cps}}

 In this subsection we present the algorithm used in the dynamic update step, and analyse its correctness as well as complexity.  Our main algorithm for dynamically updating an approximate similarity graph is described in Algorithm~\ref{alg:update_CPS}, and the $\textsc{Resample}$ procedure can be found in Algorithm~\ref{alg:update_CPS_tree}.

Our analysis for the dynamic update step  first covers the running time of the $\textsc{Sample}( \{\mathbf{x}\}, \tree, \ell )$  and $\textsc{Resample}(\tree, \mathcal{P}_{\mathbf{x}, \ell})$  procedures (Lemma~\ref{lem:running_time_sample_procedures}). It then proves a sequence of lemmas bounding   the total number of sampling paths that need to be resampled (Lemmas~\ref{lem:dynamic_cps_tree_update_total_paths}, \ref{lemma:changed_neighbors}, and \ref{lem:update_time_dynamic_CPS}). Finally, the correctness follows by the fact that   the update step is equivalent to a full reconstruction of the tree after an update (Lemma~\ref{lemma:correctnes_update_CPS}).

\begin{algorithm}[H]
\caption{Dynamic Update Algorithm for Constructing an Approximate Similarity Graph}
\label{alg:update_CPS}
\begin{algorithmic}[1]

\PROCEDURE{UpdateGraph($G=(X, E, w_G), \tree, \mathbf{z}$)}{}
\STATE \textbf{Input}: an approximate similarity graph $G$, \textsf{KDE} tree $\tree$, new data point $\mathbf{z}$

\STATE $\mathcal{A} \gets \textsc{AddDataPointTree}(\tree, \mathbf{z})$ 
\COMMENT{(Algorithm~\ref{alg:update_CPS_tree})} 
\label{alg:update_CPS:paths_to_resample}
\STATE $E_{\mathrm{new}} \gets \emptyset$

\FOR[Sample $L$ neighbours from the new vertex $\mathbf{z}$]{$\ell=1,2,\ldots,L$}  \label{alg:update_CPS:start_sample_new_vertex}
    \STATE $(\mathbf{z}, \mathbf{x}_j) \gets \textsc{Sample}(\{\mathbf{z}\}, \tree, \ell)$
    \STATE $E_{\mathrm{new}} \gets E_{\mathrm{new}} \cup \{(\mathbf{z}, \mathbf{x}_j)\}$
    \STATE $\widehat{w}(i,j) \gets 
      \dfrac{L \cdot k(\mathbf{z}, \mathbf{x}_j)}
      {\min\{\tree.\textsf{kde}.\hat{\mu}_{\mathbf{z}}, \tree.\textsf{kde}.\hat{\mu}_{\mathbf{x}_j}\}}$
    \STATE $w_G(\mathbf{z}, \mathbf{x}_j) \mathrel{+}= 
      \dfrac{k(\mathbf{z}, \mathbf{x}_j)}{\widehat{w}(i,j)}$
    \IF{$\min\{\tree.\textsf{kde}.\hat{\mu}_{\mathbf{z}}, \tree.\textsf{kde}.\hat{\mu}_{\mathbf{x}_j}\} 
        = \tree.\textsf{kde}.\hat{\mu}_{\mathbf{x}_j}$}
        \STATE $\mathcal{B}_{\mathbf{x}_j} \gets \mathcal{B}_{\mathbf{x}_j} \cup \{\mathbf{z}\}$ \label{alg:update_CPS:end_sample_new_vertex}
    \ENDIF
\ENDFOR 

\STATE $E \gets E \cup E_{\mathrm{new}}$

\FOR{$\mathbf{x}_i$ such that $\tree.\textsf{kde}.\hat{\mu}_{\mathbf{x}_i}$ has changed}
\label{alg:update_CPS:start_reweight_neighbors}
    \STATE Let $\deg_{\mathrm{old}}$ be the old estimate of $\tree.\textsf{kde}.\hat{\mu}_{\mathbf{x}_i}$
    \FOR{$\mathbf{x}_j \in \mathcal{B}_{\mathbf{x}_i}$}
        \STATE 
        $w_G(\mathbf{x}_i, \mathbf{x}_j) \gets w_G(\mathbf{x}_i, \mathbf{x}_j) 
        \cdot \dfrac{\deg_{\mathrm{old}}}{\min\{\tree.\textsf{kde}.\hat{\mu}_{\mathbf{x}_i}, 
        \tree.\textsf{kde}.\hat{\mu}_{\mathbf{x}_j}\}}$
        \COMMENT{Update scaling factor of adjacent edges}
        \IF{$\min\{\tree.\textsf{kde}.\hat{\mu}_{\mathbf{x}_i}, \tree.\textsf{kde}.\hat{\mu}_{\mathbf{x}_j}\} 
            = \tree.\textsf{kde}.\hat{\mu}_{\mathbf{x}_j}$}
            \STATE $\mathcal{B}_{\mathbf{x}_i} \gets \mathcal{B}_{\mathbf{x}_i} \setminus \{\mathbf{x}_j\}$ \label{alg:update_CPS:end_reweight_neighbors}
        \ENDIF
    \ENDFOR
\ENDFOR 

\FOR{$\mathcal{P}_{\mathbf{x}_i, \ell} \in \mathcal{A}$}
\label{alg:update_CPS:start_resample_edges}
    \STATE Let $\tree'$ be the parent of the highest internal node where 
    $\mathcal{P}_{\mathbf{x}_i, \ell}$ was fetched
    \FOR{$\tree^* \text{ below } \tree' \text{ such that } 
          \mathcal{P}_{\mathbf{x}_i, \ell} \in \tree^*.\textsf{paths}$}
    \label{alg:update_CPS:start_remove_edges}
        \STATE Remove $\mathcal{P}_{\mathbf{x}_i, \ell}$ from $\tree^*.\textsf{paths}$
        \STATE Remove $\mathbf{x}_i$ from the query set of 
               $\tree^*.\textsf{left}.\textsf{kde}$ and $\tree^*.\textsf{right}.\textsf{kde}$
        \STATE Remove $\tree^*$ from $\mathcal{P}_{\mathbf{x}_i, \ell}$
    \ENDFOR
    
    \STATE Let $\mathbf{x}_j$ be the previous sampled neighbour of $\mathbf{x}_i$ 
           (i.e., leaf in $\mathcal{P}_{\mathbf{x}_i, \ell}$)
    \STATE $w_G(\mathbf{x}_i, \mathbf{x}_j) \mathrel{-}= 
           \dfrac{k(\mathbf{x}_i, \mathbf{x}_j)}{\widehat{w}(i,j)}$
           \COMMENT{where $\widehat{w}(i,j)$ is the previous re-scaling factor}
    \IF{$w_G(\mathbf{x}_i, \mathbf{x}_j) = 0$}
        \STATE $E \gets E \setminus \{(\mathbf{x}_i, \mathbf{x}_j)\}$ \label{alg:update_CPS:end_remove_edges}
    \ENDIF 

    \STATE $(\mathbf{x}_i, \mathbf{x}^*_j) \gets 
           \textsc{Resample}(\tree', \mathcal{P}_{\mathbf{x}_i, \ell}, \varepsilon)$
    \COMMENT{Resample path (Algorithm~\ref{alg:update_CPS_tree})}
    \label{alg:update_CPS:start_resample_edge}
    
    \IF{$(\mathbf{x}_i, \mathbf{x}^*_j) \notin E$}
        \STATE $E \gets E \cup \{(\mathbf{x}_i, \mathbf{x}^*_j)\}$
    \ENDIF
    
    \STATE $\widehat{w}^*(i,j) \gets 
      \dfrac{L \cdot k(\mathbf{x}_i, \mathbf{x}^*_j)}
      {\min\{\tree.\textsf{kde}.\hat{\mu}_{\mathbf{x}_i}, \tree.\textsf{kde}.\hat{\mu}_{\mathbf{x}^*_j}\}}$
    
    \IF{$\min\{\tree.\textsf{kde}.\hat{\mu}_{\mathbf{x}_i}, \tree.\textsf{kde}.\hat{\mu}_{\mathbf{x}^*_j}\} 
        = \tree.\textsf{kde}.\hat{\mu}_{\mathbf{x}^*_j}$}
        \STATE $\mathcal{B}_{\mathbf{x}^*_j} \gets \mathcal{B}_{\mathbf{x}^*_j} \cup \{\mathbf{x}_i\}$
        \COMMENT{Update neighbours with higher degrees}
    \ENDIF
    
    \STATE $w_G(\mathbf{x}_i, \mathbf{x}_j) \mathrel{+}= 
           \dfrac{k(\mathbf{x}_i, \mathbf{x}_j)}{\widehat{w}^*(i,j)}$
\label{alg:update_CPS:end_resample_edges}
\ENDFOR

\ENDPROCEDURE

\end{algorithmic}
\end{algorithm}

\begin{algorithm}[H]
\caption{Tree Update Procedures for Constructing an Approximate Similarity Graph}
\label{alg:update_CPS_tree}
\begin{algorithmic}[1]

\PROCEDURE{AddDataPointTree($\tree, \mathbf{z}$)}{}
  \STATE \textbf{Input}: \textsf{KDE} tree/node $\tree$, new data point $\mathbf{z}$
  \IF{$\textsc{IsLeaf}(\tree)$}
    \STATE $\mathbf{x} \gets \tree.\textsf{data}$
    \STATE $\mathcal{A} \gets \tree.\textsf{parent}.\textsf{paths}$ 
    \COMMENT{Store paths that need to be resampled}
    \STATE $\tree_{\mathrm{new}} \gets \textsc{Node}(\{\mathbf{x}, \mathbf{z}\})$
    \STATE $\tree_{\mathrm{new}}.\textsf{left} \gets \textsc{Leaf}(\mathbf{x})$
    \STATE $\tree_{\mathrm{new}}.\textsf{right} \gets \textsc{Leaf}(\mathbf{z})$
    \STATE $\tree_{\mathrm{new}}.\textsf{kde} \gets \textsc{DynamicKDE.Initialise}
                  (\{\mathbf{x}, \mathbf{z}\}, \emptyset, \varepsilon)$ 
    \COMMENT{Initialise new \textsf{KDE} data structure}
    \STATE Replace the leaf $\tree$ with node $\tree_{\mathrm{new}}$
    \STATE \textbf{return} $\mathcal{A}$
  \ELSE
    \STATE $\tree.\textsf{kde}.\textsc{AddDataPoint}(\mathbf{z})$ 
           \COMMENT{(Algorithm~\ref{alg:dynamic_kde})}
           \label{alg:update_CPS_tree:update_kde}
    \STATE Let $\tilde{A}$ be the set of points $\mathbf{x}_i \in \tree.\textsf{kde}.Q$ such that 
           $\tree.\textsf{kde}.\hat{\mu}_{\mathbf{x}_i}$ changes after adding $\mathbf{z}$.
    \STATE $\mathcal{A} \gets \{\mathcal{P}_{\mathbf{x}_i, \ell} \in \tree.\textsf{parent}.\textsf{paths} 
                    \mid \mathbf{x}_i \in \tilde{A}\}$
    \IF{$\tree.\textsf{left}.\textsf{size} \leq \tree.\textsf{right}.\textsf{size}$}
    \label{alg:update_CPS_tree:add_in_tree}
      \STATE \textbf{return} 
             $\mathcal{A} \cup \textsc{AddDataPointTree}(\tree.\textsf{left}, \mathbf{z})$
    \ELSE
      \STATE \textbf{return} 
             $\mathcal{A} \cup \textsc{AddDataPointTree}(\tree.\textsf{right}, \mathbf{z})$
    \ENDIF
  \ENDIF
\ENDPROCEDURE
\STATE
\PROCEDURE{Resample($\tree, \mathcal{P}_{\mathbf{x}_i, \ell}$)}{}
  \STATE \textbf{Input}: \textsf{KDE} tree/node $\tree$, and sampling path $\mathcal{P}_{\mathbf{x}_i, \ell}$
  \STATE $\mathcal{P}_{\mathbf{x}_i, \ell} \gets \mathcal{P}_{\mathbf{x}_i, \ell} \cup \{\tree\}$
  \STATE $\mathcal{T}.\textsf{paths} \gets \mathcal{T}.\textsf{paths} 
                      \cup \{\mathcal{P}_{\mathbf{x}_i, \ell}\}$
  \COMMENT{Update and store sample paths}
  \IF{$\textsc{IsLeaf}(\tree)$}
    \STATE \textbf{return} $\mathbf{x}_i \times \tree.\textsf{data}$
  \ELSE
    \STATE $\tree_L \gets \tree.\textsf{left}, \quad X_L \gets \tree.\textsf{left}.\textsf{data}$
    \STATE $\tree_R \gets \tree.\textsf{right}, \quad X_R \gets \tree.\textsf{right}.\textsf{data}$
    \STATE $\tree_L.\textsf{kde}.\textsc{AddQueryPoint}(\mathbf{x}_i)$ 
           \textbf{if} $\mathbf{x}_i \notin \tree_L.\textsf{kde}.Q$ 
           \COMMENT{(Algorithm~\ref{alg:dynamic_kde})}
           \label{alg:update_CPS_tree:resample_addquery}
    \STATE $\tree_R.\textsf{kde}.\textsc{AddQueryPoint}(\mathbf{x}_i)$ 
           \textbf{if} $\mathbf{x}_i \notin \tree_R.\textsf{kde}.Q$
    \STATE $r \sim \mathrm{Unif}[0, 1]$
    \IF{$r \leq 
         \dfrac{\tree_L.\hat{\mu}_{\mathbf{x}_i}}{\tree_L.\hat{\mu}_{\mathbf{x}_i} 
               + \tree_R.\hat{\mu}_{\mathbf{x}_i}}$}
      \STATE \textbf{return} $\textsc{Resample}(\tree_L, \mathcal{P}_{\mathbf{x}_i, \ell})$
    \ELSE
      \STATE \textbf{return} $\textsc{Resample}(\tree_R, \mathcal{P}_{\mathbf{x}_i, \ell})$
    \ENDIF
  \ENDIF
\ENDPROCEDURE

\end{algorithmic}
\end{algorithm}

\subsubsection{Running Time Analysis}
% We first prove Lemmas~\ref{lem:running_time_sample_procedures} and~\ref{lem:prob_go_to_subtree} 
 
 \begin{proof}[Proof of Lemma~\ref{lem:running_time_sample_procedures}]
 The running time of the two  procedures is dominated by the recursive calls to $\textsc{AddQueryPoint}(\mathbf{x})$. By Theorem~\ref{thm:incremental_dynamic_kde}, the running time of adding a query point is $\epsilon^{-2}\cdot n^{o(1)}\cdot \mathrm{cost}(k)$.
 Since the depth of the tree $\tree$ is at most $\lceil \log n\rceil$,  there are at most $\lceil \log n\rceil$ recursive calls to $\textsc{Sample}$ and $\textsc{Resample}$. Hence, the total running time of $\textsc{Sample}$ and $\textsc{Resample}$ is $\epsilon^{-2}\cdot n^{o(1)}\cdot \mathrm{cost}(k)$.
\end{proof}

\begin{proof}[Proof of Lemma~\ref{lem:prob_go_to_subtree}] 
    By the tree construction, we have that 
    $
    \mathbb{P}[\mathbf{q} \in \tree'.\textsf{kde}.Q] = \mathbb{P}[\exists \ell' \text{ such that }\mathcal{P}_{\mathbf{q}, \ell'} \in \tree'.\textsf{paths}]$,
    and 
    \begin{align*}
        \mathbb{P}[\exists \ell' \text{ such that }\mathcal{P}_{\mathbf{q}, \ell'} \in \tree'.\textsf{paths}] &\leq L \cdot \mathbb{P}[\mathcal{P}_{\mathbf{q}, \ell} \in \tree'.\textsf{paths}] \leq L \cdot \frac{2 k(\mathbf{q}, \tree'.\textsf{data})}{k(\mathbf{q}, \tree.\textsf{data})} \\
        &= L \cdot \frac{2 k(\mathbf{q}, \tree'.\textsf{kde}.X)}{k(\mathbf{q}, \tree.\textsf{kde}.X)}  = \wt{O}\lp\frac{\mu_j}{\mu_i}\rp,
    \end{align*}
    where the first inequality holds by the union bound, the second inequality follows by Lemma~\ref{lem:nodeprob}, and the last line holds by the definition of $Q_{\mu_i \rightarrow \mu_j}(\tree')$ and   $L = \wt{O}(1)$.
\end{proof}

Next, we state Lemma~\ref{lem:expected_collisions_subtree_informal} more precisely and provide its proof.

\begin{lemma}\label{lem:expected_collisions_subtree}
     Let $\mathbf{z}$ be the data point that is added to   $\tree$ through the $\textsc{AddDataPointTree}(\tree, \mathbf{z})$ procedure in Algorithm~\ref{alg:update_CPS_tree}, and   $\tree'$ be any internal node that lies on the path from the new leaf $\textsc{Leaf}(\mathbf{z})$ to the root of $\tree$. Then it holds for any  $i \in \left[\lceil\log(2\cdot \tree.\textsf{kde}.n') \rceil\right]$, $a \in \tree'.\textsf{kde}.K_1$, $j \in [J_{\mu_i}]$ and $\ell$  that
    $$\mathbb{E}_{H_{\mu_i, a, j, \ell}}\left[\left|\{\mathbf{q} \in \tree'.\textsf{kde}.Q_{\mu_i} \mid \tree'.\textsf{kde}.H_{\mu_i, a, j, \ell}(\mathbf{z}) = \tree'.\textsf{kde}.H_{\mu_i, a, j, \ell}(\mathbf{q}) \} \right|\right] = \wt{O}\lp \mu_{i} \cdot 2^{j+1} \rp.$$
\end{lemma}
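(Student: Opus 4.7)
The plan is to bound the expected number of collisions by linearity of expectation, decompose the sum by the root-level weight bucket of each query, use Lemma~\ref{lem:prob_go_to_subtree} to convert the factor $\mu_{i'}$ into $\mu_i$, and close the analysis with an LSH collision analysis similar to the one used in Lemma~\ref{lemma:number_in_query_bucket}.

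First, I would observe that the hash function $\tree'.\textsf{kde}.H_{\mu_i, a, j, \ell}$ is drawn at the time $\tree'.\textsf{kde}$ is initialised and is independent of the random choices (at ancestor nodes) that determine which sampling paths reach $\tree'$. Hence by linearity of expectation,
\[
\mathbb{E}_{H}\!\left[|\text{coll}|\right] = \sum_{\mathbf{q} \in \tree.\textsf{kde}.Q} \mathbb{P}\!\left[\mathbf{q} \in \tree'.\textsf{kde}.Q_{\mu_i}\right] \cdot \mathbb{P}_H\!\left[H(\mathbf{z}) = H(\mathbf{q})\right].
\]
Next, I would partition the $\mathbf{q}$'s according to the root-level weight bucket: for each $\mathbf{q}$, there is a unique $i'$ with $k(\mathbf{q}, \tree.\textsf{kde}.X) \in [\mu_{i'}, 2\mu_{i'})$. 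Since $\tree'.\textsf{kde}.X \subseteq \tree.\textsf{kde}.X$, any $\mathbf{q} \in \tree'.\textsf{kde}.Q_{\mu_i}$ lies in $Q_{\mu_{i'} \rightarrow \mu_i}(\tree')$ for its corresponding $i'$ (when $i' < i$ the upper-bound condition at $\tree'$ is automatic from containment of datasets).

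The tree factor is then controlled by Lemma~\ref{lem:prob_go_to_subtree}: for $i' \geq i$, $\mathbb{P}[\mathbf{q} \in \tree'.\textsf{kde}.Q] = \wt{O}(\mu_i/\mu_{i'})$, while for $i' < i$ we use the trivial bound $\leq 1$. For the hash factor, I would mimic the analysis in the proof of Lemma~\ref{lemma:number_in_query_bucket}, partitioning $Q_{\mu_{i'}}$ by the distance level $j''$ between $\mathbf{z}$ and $\mathbf{q}$: by Lemma~\ref{lem:collprob} and the choice of $k_j$ in Definition~\ref{def:cost_kernel}, $\mathbb{P}_H[H(\mathbf{z}) = H(\mathbf{q})] \leq 2^{-(j''-j)}$ when $j'' > j$, and $\leq 1$ when $j'' \leq j$. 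Combined with a bound of the form $|L^\mathbf{z}_{j''} \cap Q_{\mu_{i'}}| = \wt{O}(2^{j''} \mu_{i'})$, a geometric summation over $j''$ yields $\sum_{\mathbf{q} \in Q_{\mu_{i'}}} \mathbb{P}_H[H(\mathbf{z})=H(\mathbf{q})] = \wt{O}(2^{j+1}\mu_{i'})$.

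Putting the pieces together, the contribution from each $i' \geq i$ is $\wt{O}(\mu_i/\mu_{i'}) \cdot \wt{O}(2^{j+1}\mu_{i'}) = \wt{O}(2^{j+1}\mu_i)$, and the contribution from each $i' < i$ is at most $\wt{O}(2^{j+1}\mu_{i'}) \leq \wt{O}(2^{j+1}\mu_i)$; summing over the $O(\log n)$ buckets of $i'$ only adds polylogarithmic factors absorbed by $\wt{O}(\cdot)$.

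The main obstacle I expect is rigorously justifying the size bound $|L^\mathbf{z}_{j''} \cap Q_{\mu_{i'}}| = \wt{O}(2^{j''}\mu_{i'})$: Lemma~\ref{lem:sizeL} is stated from a query's perspective against a data set, whereas here we need the dual bound on the number of queries in a weight level of $\mathbf{z}$, and $\mathbf{z}$ is a brand-new point not yet in $\tree.\textsf{kde}.X$. The way around it is to exploit the fact that in the similarity graph setting $\tree.\textsf{kde}.Q = X$, so Lemma~\ref{lem:sizeL} applies symmetrically to $\mathbf{z}$ viewed as a query against $Q_{\mu_{i'}}$ after accounting for $\mathbf{z}$'s insertion, together with a double-counting step $\sum_{\mathbf{q} \in Q_{\mu_{i'}}} k(\mathbf{z},\mathbf{q}) \cdot 2^{j''} \geq |L^\mathbf{z}_{j''} \cap Q_{\mu_{i'}}|$. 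A secondary (but routine) check is that the hash at $\tree'$ is genuinely independent of the path-sampling events used to invoke Lemma~\ref{lem:prob_go_to_subtree}, so that the two probabilities factor.
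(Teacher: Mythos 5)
Your proposal follows essentially the same route as the paper's proof: decompose the queries colliding with $\mathbf{z}$ at $\tree'$ according to their root-level bucket $\mu_{i'}$, bound the probability that such a query is maintained at $\tree'$ by $\wt{O}\lp\mu_i/\mu_{i'}\rp$ via Lemma~\ref{lem:prob_go_to_subtree}, bound the expected number of hash collisions contributed by each root-level bucket by $\wt{O}\lp 2^{j+1}\mu_{i'}\rp$, and sum over the $O(\log n)$ values of $i'$, absorbing the extra factor into $\wt{O}(\cdot)$. Your explicit treatment of the buckets with $i' < i$ (trivial bound $1$ on the membership probability) is a harmless addition; the paper's decomposition is phrased only over $i' \ge i$ through the sets $Q_{\mu_{i'} \rightarrow \mu_i}(\tree')$.

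The one substantive divergence is how the per-bucket collision bound is obtained, and there your patch has a gap. The paper gets this bound by invoking the second statement of Lemma~\ref{lemma:number_in_query_bucket} with the roles of the new data point and the queries interchanged, i.e.\ it reads off $\mathbb{E}_{H}[|\{\mathbf{q} \in \tree.\textsf{kde}.Q_{\mu_{i'}} \mid H(\mathbf{z}) = H(\mathbf{q})\}|] = O(2^{j+1}\mu_{i'})$ directly from that lemma and then multiplies by the factor from Lemma~\ref{lem:prob_go_to_subtree}. You are right that this role-swap is the delicate step: one needs the dual level-size bound $|L^{\mathbf{z}}_{j''} \cap Q_{\mu_{i'}}| = \wt{O}(2^{j''}\mu_{i'})$, which Lemma~\ref{lem:sizeL} does not literally provide since it bounds levels of a query against the data set. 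However, the fix you sketch does not deliver this: the double-counting inequality gives $|L^{\mathbf{z}}_{j''} \cap Q_{\mu_{i'}}| \le 2^{j''}\cdot k(\mathbf{z}, Q_{\mu_{i'}})$, and $k(\mathbf{z}, Q_{\mu_{i'}})$ — the density of the newly inserted point with respect to the low-density queries — is not controlled by $\mu_{i'}$; membership in $Q_{\mu_{i'}}$ constrains each query's density with respect to $X$, not $\mathbf{z}$'s density with respect to the query set. As written, your geometric summation therefore ends with a factor $k(\mathbf{z}, Q_{\mu_{i'}})$ in place of the factor $\mu_{i'}$ that must cancel against the $\mu_i/\mu_{i'}$ from Lemma~\ref{lem:prob_go_to_subtree}, so the stated conclusion $\wt{O}(\mu_i \cdot 2^{j+1})$ does not follow from your argument. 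To align with the paper you should formulate and use the symmetric version of Lemma~\ref{lemma:number_in_query_bucket} (expected number of queries in a given $Q_{\mu_{i'}}$ bucket colliding with a fixed point under $H_{\mu_i,a,j,\ell}$), which is exactly the form the paper's proof relies on, rather than trying to recover the level-size bound from $k(\mathbf{z},\cdot)$.
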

\begin{proof}
     We first remark that, except for the dynamic \textsf{KDE} structure stored at the root $\tree.\textsf{kde}$, it does not necessarily hold that $ \tree'.\textsf{kde}.Q = \tree'.\textsf{kde}.X$; this is because that the query points stored at internal nodes are the ones whose sample paths passed through this node, and the data points are the leaves of the subtree $\tree'$. Hence, to analyse the expected number of colliding points in the bucket $\tree'.\textsf{kde}.B_{H_{\mu_i, a, j, \ell}}(\mathbf{z})$, we need to  separately analyse the contributions from $\mathbf{q} \in Q_{\mu_{i'} \rightarrow \mu_i}(\tree')$ for $i' \geq i$.
  To achieve this, we apply  Lemma~\ref{lemma:number_in_query_bucket} and have for $i' \geq i$ that
\begin{equation}\label{eq:bound_full_hash_all_data}
        \mathbb{E}_{H_{\mu_i, a, j, \ell}}\left[\left|\left\{\mathbf{q} \in \tree.\textsf{kde}.Q_{\mu_{i'}} \mid \tree'.\textsf{kde}.H_{\mu_i, a, j, \ell}(\mathbf{z}) = \tree'.\textsf{kde}.H_{\mu_i, a, j, \ell}(\mathbf{q})\right\}\right|\right] = O(2^{j+1} \cdot \mu_{i'}).
    \end{equation}
   Therefore, it holds that 
    \begin{align}
      &\mathbb{E}_{H_{\mu_i, a, j, \ell}}\left[\left|\{  \mathbf{q} \in \tree'.\textsf{kde}.Q_{\mu_i} \mid \tree'.\textsf{kde}.H_{\mu_i, a, j, \ell}(\mathbf{z}) = \tree'.\textsf{kde}.H_{\mu_i, a, j, \ell}(\mathbf{q})\} \right|\right] \nonumber \\
      &  = \mathbb{E}_{H_{\mu_i, a, j, \ell}} \nonumber \\
     & \quad \left[ \sum_{i' \geq i} \left|\{  \mathbf{q} \in \tree'.\textsf{kde}.Q_{\mu_i} 
 \mid \tree'.\textsf{kde}.H_{\mu_i, a, j, \ell}(\mathbf{z}) = \tree'.\textsf{kde}.H_{\mu_i, a, j, \ell}(\mathbf{q})  \text{ and } \mathbf{q} \in Q_{\mu_{i'} \rightarrow \mu_i}(\tree')\} \right| \right] \nonumber \\
     &  = \sum_{i' \geq i} \mathbb{E}_{H_{\mu_i, a, j, \ell}}\nonumber \\
     & \qquad \left[ \left|\{ \mathbf{q} \in \tree'.\textsf{kde}.Q_{\mu_i} 
 \mid \tree'.\textsf{kde}.H_{\mu_i, a, j, \ell}(\mathbf{z}) = \tree'.\textsf{kde}.H_{\mu_i, a, j, \ell}(\mathbf{q}) \text{ and }  \mathbf{q} \in Q_{\mu_{i'} \rightarrow \mu_i}(\tree')  \} \right| \right] \nonumber\\
     &   =  \sum_{i' \geq i} \wt{O}\lp \frac{\mu_{i}}{\mu_{i'}} \rp\cdot \mathbb{E}_{H_{\mu_i, a, j, \ell}}\left[ \left|\{   \mathbf{q} \in \tree.\textsf{kde}.Q_{\mu_{i'}} \mid \tree'.\textsf{kde}.H_{\mu_i, a, j, \ell}(\mathbf{z}) = \tree'.\textsf{kde}.H_{\mu_i, a, j, \ell}(\mathbf{q}) \} \right| \right] \label{eq:subtree_bucket_bound:line1} \\
      &   =  \sum_{i' \geq i} \wt{O}\lp \frac{\mu_{i}}{\mu_{i'}} \cdot 2^{j+1} \mu_{i'} \rp \label{eq:subtree_bucket_bound:line2} \\
        &   =   \wt{O}\lp \mu_{i} \cdot 2^{j+1} \rp, 
    \end{align}
    where \eqref{eq:subtree_bucket_bound:line1} follows by  Lemma~\ref{lem:prob_go_to_subtree}, and~\eqref{eq:subtree_bucket_bound:line2} holds by \eqref{eq:bound_full_hash_all_data}. 
\end{proof}

\begin{lemma}\label{lem:dynamic_cps_tree_update_total_paths} 
     The expected total running time for  $\tree'.\textsf{kde}.\textsc{AddDataPoint}(\mathbf{z})$ (Line~\ref{alg:update_CPS_tree:update_kde} of Algorithm~\ref{alg:update_CPS_tree}) over all internal nodes $\tree'$ along the path  from the new leaf $\textsc{Leaf}(\mathbf{z})$ to the root of $\tree$ is
    $
      n^{o(1)}\cdot \mathrm{cost}(k)$.
    Moreover, the expected number of paths $\mathcal{A}$ (Line~\ref{alg:update_CPS:paths_to_resample}, Algorithm~\ref{alg:update_CPS}) that need to be resampled satisfies that 
    $\mathbb{E}[|\mathcal{A}|] = \wt{O}(1)$.
\end{lemma}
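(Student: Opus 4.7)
The plan is to establish the two assertions of the lemma in sequence. Since the path from $\textsc{Leaf}(\mathbf{z})$ to the root of $\tree$ has depth at most $\lceil \log n \rceil = \wt O(1)$, both claims reduce to per-node estimates, so throughout the plan I fix one internal node $\tree'$ on this path and aim for an $n^{o(1)} \cdot \mathrm{cost}(k)$ running-time bound and an $\wt O(1)$ bound on the number of queries updated at $\tree'$.

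For the running-time bound, the work performed by $\tree'.\textsf{kde}.\textsc{AddDataPoint}(\mathbf{z})$ splits into (i) the bucket-updating loop of Algorithm~\ref{alg:addpointandupdatequeries}, whose worst-case cost is $|M| \cdot K_1 \cdot J_{\mu_i} \cdot K_{2,j} = n^{o(1)} \cdot \mathrm{cost}(k)$ once we observe that $\varepsilon = 1/\log^3 n$ gives $\varepsilon^{-2} = n^{o(1)}$, and (ii) the query-update loop on Lines~\ref{alg:addupdate:update_estimates_start}--\ref{alg:addupdate:update_estimates_end}. For part (ii), Lemma~\ref{lem:expected_collisions_subtree} bounds the expected bucket size by $\wt O(\mu_i \cdot 2^{j+1})$; multiplying by the sampling probability $1/(2^{j+1}\mu_i)$ shows that the expected number of query updates per $(i, a, j, \ell)$-iteration is $\wt O(1)$, each taking constant time. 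The remaining source of work is the re-estimation call on Line~\ref{alg:addupdate:re-estimate_query}, for which I will replay the amortisation of Lemma~\ref{lemma:bound_number_updates} at $\tree'$ to show that each maintained query triggers such a re-estimation $\wt O(1)$ times over the full insertion sequence, yielding an amortised $n^{o(1)} \cdot \mathrm{cost}(k)$ contribution per insertion. Summing over the $\wt O(1)$ internal nodes on the path preserves the overall bound.

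For the bound on $\mathbb{E}[|\mathcal{A}|]$, I observe that $|\mathcal{A}| \leq L \cdot \sum_{\tree'} |\tilde A_{\tree'}|$, where $\tilde A_{\tree'}$ denotes the set of query points $\mathbf{q}$ whose stored estimate $\tree'.\textsf{kde}.\hat\mu_\mathbf{q}$ changes after inserting $\mathbf{z}$, and $L = \wt O(1)$ accounts for each data point being attached to at most $L$ sample paths. The task reduces to showing $\mathbb{E}[|\tilde A_{\tree'}|] = \wt O(1)$ at every internal node on the path. I will again start from the $\wt O(\mu_i \cdot 2^{j+1})$ bound in Lemma~\ref{lem:expected_collisions_subtree}, multiply by the sampling probability $1/(2^{j+1}\mu_i)$ to obtain $\wt O(1)$ expected affected queries per $(i, a, j, \ell)$-iteration, and then aggregate carefully by grouping queries according to the level sets $Q_{\mu_{i'} \to \mu_i}(\tree')$ and invoking Lemma~\ref{lem:prob_go_to_subtree} to control $|\tree'.\textsf{kde}.Q \cap Q_{\mu_{i'} \to \mu_i}|$ for each $i' \geq i$.

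The main obstacle is the second bound, specifically avoiding spurious factors of $K_{2,j} = \wt O(\mathrm{cost}(k))$ that arise from a naïve union bound over $\ell$. The resolution must exploit that a given query typically collides with $\mathbf{z}$ in many of the $K_{2,j}$ hash buckets simultaneously but still contributes to $\tilde A_{\tree'}$ at most once, so the correct quantity to estimate is the expected number of distinct queries in $L^\mathbf{z}_j \cap \tree'.\textsf{kde}.Q_{\mu_i}$ that share any of the $K_{2,j}$ buckets with $\mathbf{z}$, rather than a sum of collision counts over buckets. Combining this compression with the further damping due to any change in $\hat\mu_\mathbf{q}$ needing to propagate through the median of the $\bar N$ empirical means of Algorithm~\ref{alg:querypoints} is what will make the $\wt O(1)$ bound possible.
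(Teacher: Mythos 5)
Your decomposition (per-node analysis, $\wt{O}(1)$ tree depth, Lemma~\ref{lem:expected_collisions_subtree} weighted by the $1/(2^{j+1}\mu_i)$ sampling probability) matches the paper at a high level, but there are two substantive places where you depart from the paper's argument, and in both cases the departure is a weakness.

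First, for the re-estimation cost on Line~\ref{alg:addupdate:re-estimate_query} you propose to ``replay the amortisation of Lemma~\ref{lemma:bound_number_updates} at $\tree'$.'' The paper does not amortise here at all: it bounds the \emph{expected} number of recovered points per $(\mu_i,a)$ iteration, namely $\mathbb{E}[|S|]=\wt{O}(1)$, and then multiplies by the worst-case per-point cost of Lines~\ref{alg:addupdate:update_estimates_start}--\ref{alg:addupdate:update_estimates_end}, which already absorbs the re-estimation call. This gives a clean expected bound per insertion, which is what the lemma asserts. Your amortisation route is actually problematic: Lemma~\ref{lemma:bound_number_updates} bounds updates for a fixed, persistently maintained query, whereas at an internal node $\tree'$ the query set $\tree'.\textsf{kde}.Q$ is itself dynamic --- queries are inserted by \textsc{Sample}/\textsc{Resample} and deleted when paths are re-routed. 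To convert a per-query count into a per-insertion amortised bound you would need to know the ratio of query insertions to data insertions at $\tree'$, but the number of query insertions at $\tree'$ is driven precisely by $|\mathcal{A}|$, the quantity you are trying to bound, so the argument is circular as stated.

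Second, for $\mathbb{E}[|\mathcal{A}|]$ you propose to ``aggregate carefully by grouping queries according to the level sets $Q_{\mu_{i'}\to\mu_i}(\tree')$ and invoking Lemma~\ref{lem:prob_go_to_subtree}.'' That grouping is exactly what the proof of Lemma~\ref{lem:expected_collisions_subtree} already performs, so redoing it adds nothing. You do correctly flag the $K_{2,j}=\wt{O}(\mathrm{cost}(k))$ danger from a naive union over $\ell$, but the resolution you sketch --- counting distinct near queries across the $K_{2,j}$ buckets and then appealing to ``damping through the median of the $\bar{N}$ empirical means'' --- is not what the paper does and is not worked out: a near query is recovered with probability close to one over all $K_{2,j}$ repetitions regardless of whether the median moves, so the median step does not compress the recovered set $S$. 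The paper instead observes that the set $\tilde A$ of queries whose stored estimate changes is a subset of the recovered set $S$, uses $\mathbb{E}[|S|]=\wt{O}(1)$ per $(\mu_i,a)$ pair, sums over the $\wt{O}(1)$ pairs $(\mu_i,a)$ and the $\wt{O}(1)$ internal nodes on the leaf-to-root path, and absorbs the factor $L$ from $|\mathcal{A}|\le L\cdot\sum_{\tree'}|\tilde A_{\tree'}|$ into the $\wt{O}(\cdot)$. Your explicit mention of the factor $L$ is correct, but the mechanism you propose for removing the $K_{2,j}$ factor does not land on the paper's argument.
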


\begin{proof}
     We first study the update time and the total number of paths that need to be updated at a single internal node $\tree'$. Notice that,  when $\tree'.\textsf{kde}.\textsc{AddDataPoint}(\mathbf{z})$ (Line~\ref{alg:update_CPS_tree:update_kde} of Algorithm~\ref{alg:update_CPS_tree}) is called, the procedure $\textsc{AddPointAndUpdateQueries}$ in Algorithm~\ref{alg:addpointandupdatequeries} is called in the dynamic \textsf{KDE} data structure $\tree'.\textsf{kde}$. Hence, we analyse the expected running time of $\textsc{AddPointAndUpdateQueries}$ in Algorithm~\ref{alg:addpointandupdatequeries}.

 First, we have that executing Lines~\ref{alg:addupdate:line:samplepoint_start}--\ref{alg:addupdate:line:samplepoint_end} of in Algorithm~\ref{alg:addpointandupdatequeries} takes
    $
     K_{2,j} \cdot |\tree'.\textsf{kde}.B^*_{H_{\mu_i,a,j,\ell}}(\mathbf{z})| \cdot {n'}^{o(1)}$
    time, where $n' = |\tree'.\textsf{data}|$ is the number of data points stored at $\tree'$, and these five lines are executed with probability at most $1/({2^{j+1} \mu_i})$.
    Since we   only consider the collisions with points in $\tree'.\textsf{kde}.Q_{\mu_i}$, it holds by Lemma~\ref{lem:expected_collisions_subtree} that \[\mathbb{E}_{H_{\mu_i,a,j,\ell}}\left[|\tree'.\textsf{kde}.B^*_{H_{\mu_i,a,j,\ell}}(\mathbf{z})|\right] = \widetilde{O}\left(2^{j+1} \mu_i  \right).\] Hence, by our choice of $K_{2,j} = O(\log(n')\cdot \mathrm{cost}(k))$,   the expected total running time over all $\mu_i$, $a$, and $j$ of Lines~\ref{alg:addupdate:line:samplepoint_start}--\ref{alg:addupdate:line:samplepoint_end} of Algorithm~\ref{alg:addpointandupdatequeries} is $ \varepsilon^{-2} \cdot \mathrm{cost}(k) \cdot n'^{o(1)} $.  The same analysis can also be applied for Lines~\ref{alg:addupdate:outside_geometric_level_start}--\ref{alg:addupdate:q_recover_outside_geometriclevel} of Algorithm~\ref{alg:addpointandupdatequeries}. Moreover, the expected number of recovered points in $S$  (Line~\ref{alg:addupdate:recovered_points} of Algorithm~\ref{alg:addpointandupdatequeries}) is $\wt{O}(1)$, as the expected number of collisions we consider is \[\mathbb{E}_{H_{\mu_i,a,j,\ell}}\left[|\tree'.\textsf{kde}.B^*_{H_{\mu_i,a,j,\ell}}(\mathbf{z})|\right] = \widetilde{O}(2^{j+1} \mu_i),\] and these points are only considered with probability at most $1/({2^{j+1} \mu_i})$.
    
    Next, we analyse the running time of Lines~\ref{alg:addupdate:update_estimates_start}--\ref{alg:addupdate:update_estimates_end} of Algorithm~\ref{alg:addpointandupdatequeries}. For every $\mathbf{q} \in S$, the total running time for Lines~\ref{alg:addupdate:update_estimates_start}--\ref{alg:addupdate:update_estimates_end} is $\wt{O}(\varepsilon^{-2} \cdot K_{2,j} + \epsilon^{-2} \cdot \mathrm{cost}(k)) = \wt{O}(\epsilon^{-2} \cdot \mathrm{cost}(k))$, due to Lines~\ref{alg:addupdate:remove_datapoint_full_hash} and~\ref{alg:addupdate:re-estimate_query}.
    
  Hence, the expected total running time for running $\tree'.\textsf{kde}.\textsc{AddDataPoint}(\mathbf{z})$ at a single $\tree'$ is $\wt{O}\left(\epsilon^{-2}\cdot n'^{o(1)}\cdot \mathrm{cost}(k)\right)$. As there are at most $\lceil \log n\rceil$  nodes $\tree'$ that are updated when $\mathbf{z}$ is added and   $n' \leq n$, the running time guarantee of the lemma follows. 

  It remains to prove that $\mathbb{E}[|\mathcal{A}|] = \wt{O}(1)$.  Notice that, the number of points $\mathbf{q} \in \tree'.\textsf{kde}.Q$ whose \textsf{KDE} estimate changes at $\tree'$ is the number of recovered points in $S$ (Line~\ref{alg:addupdate:recovered_points} of Algorithm~\ref{alg:addpointandupdatequeries}). From the \textsc{AddPointAndUpdateQueries}  procedure (Algorithm~\ref{alg:addpointandupdatequeries}), it holds for every $\mu_i$ and $a$   that $\mathbb{E}[|S|] = \wt{O}(1)$; as such  for every $\tree'$ the expected number of \textsf{KDE} estimates that change -- and therefore the number of paths that need to be resampled -- is   $\wt{O}(1)$.  As there are at most $\lceil \log  n \rceil$ trees $\tree'$ that are updated when $\mathbf{z}$ is added, it holds that  $\mathbb{E}[|\mathcal{A}|] = \wt{O}(1)$.  
\end{proof}

Next  we   bound the size of the set $\mathcal{B}_{\mathbf{x}_i}$ that keeps track of the neighbours $\mathbf{x}_j$ of $\mathbf{x}_i$ in the approximate similarity graph $G$ that have higher degree.

\begin{lemma}\label{lemma:changed_neighbors}
     It holds with high probability for all $\mathbf{x}_i \in X$ that
    $|\mathcal{B}_{\mathbf{x}_i}| \leq 14\cdot L$.
\end{lemma}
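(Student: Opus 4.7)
The plan is to fix an arbitrary $\mathbf{x}^* \in X$, upper bound $|\mathcal{B}_{\mathbf{x}^*}|$ by a sum of independent Bernoulli indicators whose expectation is $O(L)$, apply a multiplicative Chernoff bound, and close with a union bound over $\mathbf{x}^* \in X$. The first step is a combinatorial characterisation of how $\mathcal{B}_{\mathbf{x}^*}$ gets populated. Inspecting the conditional updates to $\mathcal{B}$ in $\textsc{ConstructGraph}$ (Line~\ref{alg:line:construct_sparsifier:add_min_neighbor} of Algorithm~\ref{alg:initialise_CPS}) and in $\textsc{UpdateGraph}$ (Algorithm~\ref{alg:update_CPS}), a point $\mathbf{x}_s$ enters $\mathcal{B}_{\mathbf{x}^*}$ only when some sampling path originating at $\mathbf{x}_s$ terminates at $\textsc{Leaf}(\mathbf{x}^*)$ \emph{while} $\hat{\mu}_{\mathbf{x}_s} \geq \hat{\mu}_{\mathbf{x}^*}$. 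Writing $Y_{s,\ell}$ for the indicator that the $\ell$-th sampling path from $\mathbf{x}_s$ terminates at $\textsc{Leaf}(\mathbf{x}^*)$, this yields the (loose) overcount
\[
    |\mathcal{B}_{\mathbf{x}^*}| \;\leq\; \sum_{\mathbf{x}_s \,:\, \hat{\mu}_{\mathbf{x}_s} \geq \hat{\mu}_{\mathbf{x}^*}} \sum_{\ell=1}^{L} Y_{s,\ell}.
\]

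Next I would bound the expectation of this sum. Conditional on the high-probability event $\mathcal{E}$ that every $\hat{\mu}$-estimate stored at the root of $\tree$ is a $(1\pm\epsilon)$-approximation of the true kernel density (Lemma~\ref{lemma:variance}, with $\epsilon = 1/\log^{3} n$), Lemma~\ref{lem:nodeprob} applied with $\tree'$ taken to be the leaf containing $\mathbf{x}^*$ gives $\mathbb{P}[Y_{s,\ell} = 1] \leq 2\,k(\mathbf{x}_s, \mathbf{x}^*)/k(\mathbf{x}_s, X)$. I would then bound $k(\mathbf{x}_s, X) \geq (1-\epsilon)\hat{\mu}_{\mathbf{x}_s} \geq (1-\epsilon)\hat{\mu}_{\mathbf{x}^*}$ (using both $\mathcal{E}$ and the summation condition), after which $\sum_{\mathbf{x}_s} k(\mathbf{x}_s, \mathbf{x}^*) = k(\mathbf{x}^*, X) \leq (1+\epsilon)\hat{\mu}_{\mathbf{x}^*}$ telescopes the sum to
\[
    \mathbb{E}\lp|\mathcal{B}_{\mathbf{x}^*}|\rp \;\leq\; \frac{2L(1+\epsilon)}{1-\epsilon} \;\leq\; 3L.
\]

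The third step is concentration. Because the internal-node \textsf{KDE} estimates are fixed before any sampling path is drawn, the random coin flips driving different $(\mathbf{x}_s,\ell)$ paths in $\textsc{Sample}$ and $\textsc{Resample}$ are mutually independent, so the $Y_{s,\ell}$'s are independent Bernoulli variables. A multiplicative Chernoff bound with mean $\leq 3L$ and target $14L$ (so $\delta = 11/3$) delivers $\mathbb{P}\lp|\mathcal{B}_{\mathbf{x}^*}| > 14L\rp \leq \exp(-\Omega(L))$, and since $L = \tilde\Omega(\log n)$ this is $n^{-\omega(1)}$. Combining with $\mathcal{E}$ and union-bounding over the $n$ choices of $\mathbf{x}^* \in X$ finishes the proof.

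The main obstacle is purely bookkeeping: the same $\hat{\mu}$-values appear both as a filter on the summation and inside the Lemma~\ref{lem:nodeprob} path probabilities, and they must be treated consistently. The cleanest way out is to condition on $\mathcal{E}$ up front, which makes $\hat{\mu}_{\mathbf{x}_s}$ and $k(\mathbf{x}_s, X)$ interchangeable up to a $(1 \pm \epsilon)$ factor and makes the conditional independence of the $Y_{s,\ell}$'s immediate. A secondary point worth confirming is that the snapshot argument extends across the entire sequence of updates, since the same bound holds verbatim after each $\textsc{UpdateGraph}$ call (the updated $\hat{\mu}$'s still satisfy $\mathcal{E}$ by Theorem~\ref{thm:incremental_dynamic_kde}); but this observation does not require new ideas.
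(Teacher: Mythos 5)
Your proposal is correct and follows essentially the same approach as the paper: fix a vertex, over-count $|\mathcal{B}_{\mathbf{x}^*}|$ by a sum of indicator variables for sampling paths landing on $\textsc{Leaf}(\mathbf{x}^*)$ from higher-degree vertices, bound the expectation via Lemma~\ref{lem:nodeprob} to obtain $O(L)$, concentrate, and union-bound. The only cosmetic differences are that the paper applies Bernstein's inequality where you use multiplicative Chernoff (both legitimate, both require the same conditional independence you state), and the paper's intermediate expectation bound is $4L$ rather than your $3L$ owing to a slightly different way of comparing $\deg_G(\mathbf{x}_i)$ and $\deg_G(\mathbf{x}_j)$.
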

\begin{proof}
   We first notice that  
    $
    \mathcal{B}_{\mathbf{x}_i} = \left\{\mathbf{x}_j \in X \mid \tree.\textsf{kde}.\hat{\mu}_{\mathbf{x}_j} > \tree.\textsf{kde}.\hat{\mu}_{\mathbf{x}_i} \text{ and } i \in Y_{\mathbf{x}_j}\right\},
    $
    where $Y_{\mathbf{x}_j} \triangleq \{y_{j,1}, \ldots y_{j,L}\}$ are the indices corresponding to the sampled neighbours of $\mathbf{x}_j$.
 For every pair of indices $i, j$, and   every $1 \leq \ell \leq L$, we define the random variable $Z_{i, j, \ell}$ to be 1 if $j$ is the neighbour sampled from $i$ at iteration $\ell$, and $0$ otherwise, i.e.,
    \[
    Z_{i, j, \ell} \triangleq \twopartdefow{1}{y_{i, \ell} = j}{0}
    \]
    We fix an arbitrary $\mathbf{x}_i$, and notice that
    \begin{equation}\label{eq:upper_bound_high_deg_neighbors}
    |\mathcal{B}_{\mathbf{x}_i}| = \sum_{\ell=1}^{L}\sum_{\substack{j=1 \\ \hat{\mu}_{\mathbf{x}_j} > \hat{\mu}_{\mathbf{x}_i}}}^n Z_{j, i, \ell},
    \end{equation}
    where for ease of notation we set $\hat{\mu}_{\mathbf{x}_i} \triangleq \tree.\textsf{kde}.\hat{\mu}_{\mathbf{x}_i}$ and $\hat{\mu}_{\mathbf{x}_j} \triangleq \tree.\textsf{kde}.\hat{\mu}_{\mathbf{x}_j}$   to be the \textsf{KDE} estimates at the root $\tree$.
    We have that 
    \begin{align}
\mathbb{E}\left[\sum_{\ell=1}^{L}\sum_{\substack{j=1 \\ \hat{\mu}_{\mathbf{x}_j} > \hat{\mu}_{\mathbf{x}_i}}}^n Z_{j, i, \ell}\right] &= \sum_{\ell=1}^{L}\sum_{\substack{j=1 \nonumber\\ \hat{\mu}_{\mathbf{x}_j} > \hat{\mu}_{\mathbf{x}_i}}}^n \mathbb{E}[Z_{j, i, \ell}] \nonumber\\
    &= \sum_{\ell=1}^{L}\sum_{\substack{j=1 \\ \hat{\mu}_{\mathbf{x}_j} > \hat{\mu}_{\mathbf{x}_i}}}^n \mathbb{P}[y_{j,\ell} = i] \nonumber\\
    &\leq \sum_{\ell=1}^{L}\sum_{\substack{j=1 \\ \hat{\mu}_{\mathbf{x}_j} > \hat{\mu}_{\mathbf{x}_i}}}^n \frac{2k(\mathbf{x}_i, \mathbf{x}_j)}{\deg_\graphk(\mathbf{x}_j)} \nonumber\\
    &< \sum_{\ell=1}^{L}\sum_{\substack{j=1 \\ \hat{\mu}_{\mathbf{x}_j} > \hat{\mu}_{\mathbf{x}_i}}}^n \frac{4k(\mathbf{x}_i, \mathbf{x}_j)}{\deg_\graphk(\mathbf{x}_i)}. \nonumber\\
    & \leq 4 \cdot L. \label{eq:upper_bound_high_deg_neighbors_expectation}
    \end{align}
    Here, the second last   inequality holds by the fact that 
    \[
    \deg_\graphk(\mathbf{x}_j) \geq \frac{\hat{\mu}_{\mathbf{x}_j}}{1+\varepsilon} > \frac{\hat{\mu}_{\mathbf{x}_i}}{1+\varepsilon} \geq \frac{(1-\varepsilon)\deg_\graphk(\mathbf{x}_i)}{1+\varepsilon} \geq \frac{\deg_\graphk(\mathbf{x}_i)}{2},
    \]
    where the last inequality follows by our choice of  $\varepsilon \leq 1/6$. Employing the same analysis, we have   that 
    \begin{align*}
        R &= \sum_{\ell=1}^{L}\sum_{\substack{j=1 \\ \hat{\mu}_{\mathbf{x}_j} > \hat{\mu}_{\mathbf{x}_i}}}^n \mathbb{E}\left[Z_{j, i, \ell}^2\right] = \sum_{\ell=1}^{L}\sum_{\substack{j=1 \\ \hat{\mu}_{\mathbf{x}_j} > \hat{\mu}_{\mathbf{x}_i}}}^n \mathbb{P}[y_{j,\ell} = i]  \leq 4 \cdot L.
    \end{align*}
    We apply the Bernstein's inequality, and  have that
\begin{align*}
    \mathbb{P}\left[ \left|\sum_{\ell=1}^{L}\sum_{\substack{j=1 \\ \hat{\mu}_{\mathbf{x}_j} > \hat{\mu}_{\mathbf{x}_i}}}^n Z_{j, i, \ell} - \mathbb{E}\left[\sum_{\ell=1}^{L}\sum_{\substack{j=1 \\ \hat{\mu}_{\mathbf{x}_j} > \hat{\mu}_{\mathbf{x}_i}}}^n Z_{j, i, \ell}\right]\right| \geq 10 L\right] 
    & \leq 2 \exp\left(- \frac{100L^2 / 2}{4L + 10\cdot  L/3}  \right) \\
    & = 2 \exp\left( - \frac{75L}{22}  \right)   \\
    & = o(1/n).
\end{align*}
 Hence,  by   the union bound, it holds with high probability for all $\mathbf{x}_i \in X$ that 
\[
\left|
\sum_{\ell=1}^{L}\sum_{\substack{j=1 \\ \hat{\mu}_{\mathbf{x}_j} > \hat{\mu}_{\mathbf{x}_i}}}^n Z_{j, i, \ell} - \mathbb{E}\left[\sum_{\ell=1}^{L}\sum_{\substack{j=1 \\ \hat{\mu}_{\mathbf{x}_j} > \hat{\mu}_{\mathbf{x}_i}}}^n Z_{j, i, \ell}\right]\right| < 10 L;
\]
combining this with  \eqref{eq:upper_bound_high_deg_neighbors} and~\eqref{eq:upper_bound_high_deg_neighbors_expectation}, we have  with high probability that
$
\left| |\mathcal{B}_{\mathbf{x}_i}| - 4L \right| < 10L$, 
which implies that $|\mathcal{B}_{\mathbf{x}_i}| < 14L$. 
\end{proof}

We are now ready to prove the running time guarantee of the update step.

\begin{lemma}\label{lem:update_time_dynamic_CPS}
The  expected running time of $\textsc{UpdateGraph}(G,\mathcal{T},\mathbf{z})$ is $n^{o(1)}\cdot \mathrm{cost}(k)$.
\end{lemma}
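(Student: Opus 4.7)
The plan is to decompose the running time of $\textsc{UpdateGraph}(G,\mathcal{T},\mathbf{z})$ into the four main blocks of Algorithm~\ref{alg:update_CPS}: (i) the call to $\textsc{AddDataPointTree}(\mathcal{T}, \mathbf{z})$ on Line~\ref{alg:update_CPS:paths_to_resample}; (ii) the loop on Lines~\ref{alg:update_CPS:start_sample_new_vertex}--\ref{alg:update_CPS:end_sample_new_vertex} that samples $L$ neighbours of the new vertex $\mathbf{z}$; (iii) the reweighting loop on Lines~\ref{alg:update_CPS:start_reweight_neighbors}--\ref{alg:update_CPS:end_reweight_neighbors} that updates edge weights for points whose root-level $\textsf{KDE}$ estimate changed; and (iv) the resampling loop on Lines~\ref{alg:update_CPS:start_resample_edges}--\ref{alg:update_CPS:end_resample_edges} that processes every path in $\mathcal{A}$. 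Once each block is shown to cost $n^{o(1)}\cdot\mathrm{cost}(k)$ in expectation, linearity of expectation finishes the proof.

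For block (i), Lemma~\ref{lem:dynamic_cps_tree_update_total_paths} directly gives an expected cost of $n^{o(1)}\cdot\mathrm{cost}(k)$ for the chain of $\mathcal{T}'.\textsf{kde}.\textsc{AddDataPoint}(\mathbf{z})$ calls along the insertion path, together with the crucial bound $\mathbb{E}[|\mathcal{A}|]=\widetilde{O}(1)$ that will be reused in block (iv). For block (ii), each of the $L=\widetilde{O}(1)$ invocations of $\textsc{Sample}(\{\mathbf{z}\},\mathcal{T},\ell)$ costs $n^{o(1)}\cdot\mathrm{cost}(k)$ by Lemma~\ref{lem:running_time_sample_procedures}, and the $O(1)$ bookkeeping per sampled edge is absorbed into this factor.

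For block (iii), I would argue that the outer loop iterates over those $\mathbf{x}_i$ whose estimate at the root $\mathcal{T}$ changed when $\mathbf{z}$ was inserted; since at the root all data points are query points, this count equals the number of query-point updates triggered by $\textsc{AddPointAndUpdateQueries}$ at the root node, which the proof of Lemma~\ref{lem:dynamic_cps_tree_update_total_paths} bounds in expectation by $\widetilde{O}(1)$. For each such $\mathbf{x}_i$, the inner loop runs over $\mathcal{B}_{\mathbf{x}_i}$, whose size is at most $14L=\widetilde{O}(1)$ with high probability by Lemma~\ref{lemma:changed_neighbors}, and the per-iteration work is $O(1)$; hence this block contributes $\widetilde{O}(1)$ in expectation after conditioning on the high-probability event of Lemma~\ref{lemma:changed_neighbors}.

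For block (iv), I would handle each path $\mathcal{P}_{\mathbf{x}_i,\ell}\in\mathcal{A}$ in two parts. The preparatory work (descending from $\mathcal{T}'$, removing $\mathcal{P}_{\mathbf{x}_i,\ell}$ from at most $\lceil\log n\rceil$ nodes, and calling $\textsc{DeleteQueryPoint}(\mathbf{x}_i)$ in the corresponding subtrees) is dominated by $O(\log n)$ query-point deletions, each of cost $\epsilon^{-2}\cdot n^{o(1)}\cdot\mathrm{cost}(k)=n^{o(1)}\cdot\mathrm{cost}(k)$ by Theorem~\ref{thm:incremental_dynamic_kde} with our choice $\epsilon=1/\log^3 n$. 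The subsequent $\textsc{Resample}(\mathcal{T}',\mathcal{P}_{\mathbf{x}_i,\ell})$ call costs $n^{o(1)}\cdot\mathrm{cost}(k)$ by Lemma~\ref{lem:running_time_sample_procedures}, and the remaining edge-weight updates are $O(1)$. Multiplying by $\mathbb{E}[|\mathcal{A}|]=\widetilde{O}(1)$ from Lemma~\ref{lem:dynamic_cps_tree_update_total_paths} and using linearity of expectation yields an expected contribution of $n^{o(1)}\cdot\mathrm{cost}(k)$. Summing the four contributions gives the stated bound; the main subtlety is combining the high-probability bound on $|\mathcal{B}_{\mathbf{x}_i}|$ with the expectation bounds on $|\mathcal{A}|$ and on the number of root-level changed estimates, which is handled by a standard conditioning argument absorbed into the $n^{o(1)}$ factor.
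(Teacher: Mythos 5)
Your proposal decomposes $\textsc{UpdateGraph}$ into exactly the same four blocks (tree update, sampling $L$ neighbours for $\mathbf{z}$, reweighting $\mathcal{B}_{\mathbf{x}_i}$, and resampling paths in $\mathcal{A}$) and cites the same supporting lemmas (Lemmas~\ref{lem:dynamic_cps_tree_update_total_paths},~\ref{lem:running_time_sample_procedures},~\ref{lemma:changed_neighbors}, and Theorem~\ref{thm:incremental_dynamic_kde}) as the paper's proof. It is correct and matches the paper's argument; your remark that $\epsilon^{-2}=\log^6 n$ is absorbed into the $n^{o(1)}$ factor is the only step the paper leaves fully implicit.
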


\begin{proof} 
    We   analyse  the running time of $\textsc{UpdateGraph}(G,\mathcal{T},\mathbf{z})$  step by step. 
    \begin{itemize}[leftmargin=1cm]
        \item  The \textsc{AddDataPointTree} procedure is dominated by the call to the \textsc{AddDataPoint} procedure on Line~\ref{alg:update_CPS_tree:update_kde} of Algorithm~\ref{alg:update_CPS_tree}, which takes $\epsilon^{-2}\cdot n^{o(1)}\cdot \mathrm{cost}(k)$ time  by Lemma~\ref{lem:dynamic_cps_tree_update_total_paths}.  
        \item   Next, we analyse the running time of sampling $L$ new neighbours of the new data point $\mathbf{z}$ (Lines~\ref{alg:update_CPS:start_sample_new_vertex}--\ref{alg:update_CPS:end_sample_new_vertex}). The algorithm samples a neighbour $\mathbf{x}_j$ using the $\textsc{Sample}$ procedure, which takes $\epsilon^{-2}\cdot n^{o(1)}\cdot \mathrm{cost}(k)$ time (Lemma~\ref{lem:running_time_sample_procedures}). To add the edge $(\mathbf{z}, \mathbf{x}_j)$, the algorithm computes the \textsf{KDE} estimate $\tree.\textsf{kde}.\hat{\mu}_{\mathbf{z}}$, which takes $\epsilon^{-2}\cdot n^{o(1)}\cdot \mathrm{cost}(k)$ time, and the weight value $k(\mathbf{z}, \mathbf{x}_j)$ which takes $O(d) = \wt{O}(1)$ time. Since $L  = \wt{O}(1)$, the total running time of Lines~\ref{alg:update_CPS:start_sample_new_vertex}--\ref{alg:update_CPS:end_sample_new_vertex} is $\epsilon^{-2}\cdot n^{o(1)}\cdot \mathrm{cost}(k)$.
        \item  For Lines~\ref{alg:update_CPS:start_reweight_neighbors}--\ref{alg:update_CPS:end_reweight_neighbors}, first note that the expected number of paths that need to be resampled is $\mathbb{E}[|\mathcal{A}|] = \wt{O}(1)$ (Lemma~\ref{lem:dynamic_cps_tree_update_total_paths}), and the expected number of points $\mathbf{x}_i$ such that $\tree.\textsf{kde}.\hat{\mu}_{\mathbf{x}_i}$ has changed is  $\wt{O}(1)$. Since  by Lemma~\ref{lemma:changed_neighbors} it holds with high probability that $|\mathcal{B}_{\mathbf{x}_i}| \leq 4\cdot L = \wt{O}(1)$, the total expected running time of Lines~\ref{alg:update_CPS:start_reweight_neighbors}--\ref{alg:update_CPS:end_reweight_neighbors} is $\wt{O}(1)$.
        \item  Finally, we analyse the running time of Lines~\ref{alg:update_CPS:start_resample_edges}--\ref{alg:update_CPS:end_resample_edges}. The running time of removing all the stored data about the path $\mathcal{P}_{\mathbf{x}_i, \ell}$ that needs to be resampled (Lines~\ref{alg:update_CPS:start_remove_edges}--\ref{alg:update_CPS:end_remove_edges}) is dominated by the time needed for removing all the stored information about $\mathbf{x}_i$ in $\tree^*.\textsf{left}.\textsf{kde}$ and $\tree^*.\textsf{right}.\textsf{kde}$ for every $\tree^*$ (Line~\ref{alg:update_CPS:start_remove_edges}). Doing this for all $\tree^*$ takes $\epsilon^{-2}\cdot n^{o(1)}\cdot \mathrm{cost}(k)$ time, since there are $O(\log n )$ such trees $\tree^*$ and in the data structures $\tree^*.\textsf{left}.\textsf{kde}$ and $\tree^*.\textsf{right}.\textsf{kde}$, $\mathbf{x}_i$ is removed from all buckets $B^*_{H_{\mu_i, a, j, \ell}}(\mathbf{x}_i)$, and there are $\epsilon^{-2}\cdot n^{o(1)}\cdot \mathrm{cost}(k)$ such buckets.
        The running time of the rest of the loop (Lines~\ref{alg:update_CPS:start_resample_edge}--\ref{alg:update_CPS:end_resample_edges}) is dominated by the running time for resampling a path $\mathcal{P}_{\mathbf{x}_i, \ell}$, which is $\epsilon^{-2}\cdot n^{o(1)}\cdot \mathrm{cost}(k)$ (Lemma~\ref{lem:running_time_sample_procedures}). Therefore, by the fact that $\mathbb{E}[|\mathcal{A}|] = \wt{O}(1)$ (Lemma~\ref{lem:dynamic_cps_tree_update_total_paths}), the total expected running time of Lines~\ref{alg:update_CPS:start_resample_edges}--\ref{alg:update_CPS:end_resample_edges} is $\epsilon^{-2}\cdot n^{o(1)}\cdot \mathrm{cost}(k)$.
    \end{itemize}
  Combining everything together proves the lemma.
\end{proof}

\subsubsection{Proof of   Correctness}

\begin{lemma}\label{lemma:correctnes_update_CPS}
   Let $G' = (X \cup \mathbf{z}, E', w_{G'})$ be the updated graph after running $\textsc{UpdateGraph}(G, \mathcal{T}, \mathbf{z})$    for the new arriving  $\mathbf{z}$. Then,  it holds with probability at least $9/10$ that $G'$ is an approximate similarity graph on   $X \cup \mathbf{z}$.
\end{lemma}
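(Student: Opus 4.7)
The plan is to prove the lemma by showing that the state of $(G', \tree)$ after running $\textsc{ConstructGraph}(X)$ followed by $\textsc{UpdateGraph}(G, \tree, \mathbf{z})$ has the same joint distribution as the output of a fresh call to $\textsc{ConstructGraph}(X \cup \mathbf{z})$. Once this distributional equivalence is established, the lemma will follow immediately from the correctness guarantee for $\textsc{ConstructGraph}$ on $X \cup \mathbf{z}$ proven in Section~\ref{sec:app_init_cps} (which adapts Theorem~2 of \citet{macgregor2024fastkde} to the weighting $\widehat{w}$ used here).

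The verification splits into three parts. First, I would argue that the \emph{tree structure and KDE data} are correct: the $\textsc{AddDataPointTree}$ routine (Algorithm~\ref{alg:update_CPS_tree}) inserts a new leaf on the lighter side of the tree so that $\tree$ remains a complete binary tree whose leaves are exactly $X \cup \mathbf{z}$; moreover, at each internal node $\tree'$ on the path from $\textsc{Leaf}(\mathbf{z})$ to the root, the call $\tree'.\textsf{kde}.\textsc{AddDataPoint}(\mathbf{z})$ reproduces, by Lemma~\ref{lemma:correctness_update_incremental}, exactly the hash contents and query estimates that $\textsc{DynamicKDE.Initialise}$ on the updated dataset would produce; at all other nodes the subtree data is unchanged, so nothing needs to be updated. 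Second, I would show that the \emph{sample paths} have the correct joint distribution. The $L$ new paths from $\mathbf{z}$ are generated by explicit calls $\textsc{Sample}(\{\mathbf{z}\}, \tree, \ell)$ after the KDE data is updated, so they have exactly the distribution from Lemma~\ref{lem:nodeprob} applied to $X \cup \mathbf{z}$. For an existing point $\mathbf{x}_i$ and path index $\ell$, the path $\mathcal{P}_{\mathbf{x}_i, \ell}$ is left untouched unless $\mathcal{P}_{\mathbf{x}_i, \ell} \in \mathcal{A}$; since $\mathcal{A}$ contains precisely those paths along which some KDE estimate has changed, every internal choice on an untouched path still uses the updated branching probabilities and hence remains a valid sample from the new distribution. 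For paths in $\mathcal{A}$, the algorithm identifies the parent $\tree'$ of the highest node where the estimate changed, erases the suffix, and redraws it via $\textsc{Resample}$ using the \emph{updated} KDE values, so the new suffix is a fresh sample from the correct conditional distribution given the (still valid) prefix.

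Third, I would verify the \emph{edge weights}. Recall $w_G(\mathbf{x}_i,\mathbf{x}_j)$ is incremented by $k(\mathbf{x}_i,\mathbf{x}_j)/\widehat{w}(i,j)$ with $\widehat{w}(i,j) = L\cdot k(\mathbf{x}_i,\mathbf{x}_j)/\min\{\tree.\textsf{kde}.\hat\mu_{\mathbf{x}_i},\tree.\textsf{kde}.\hat\mu_{\mathbf{x}_j}\}$. For each resampled path, $\textsc{UpdateGraph}$ subtracts the old contribution (using the previous $\widehat{w}$) and adds a new contribution with the current $\widehat{w}^*$, and the bucket $\mathcal{B}_{\mathbf{x}_j}$ is updated to track which endpoint is the min. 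The loop over points $\mathbf{x}_i$ with changed $\tree.\textsf{kde}.\hat\mu_{\mathbf{x}_i}$ then rescales every incident edge in $\mathcal{B}_{\mathbf{x}_i}$, which are exactly the edges whose weight depends on the changed value. After all these updates the resulting $w_{G'}$ is, edge by edge, the same linear combination of $k(\mathbf{x}_i,\mathbf{x}_j)/\widehat{w}(i,j)$ terms that $\textsc{ConstructGraph}(X \cup \mathbf{z})$ would produce on the same sample paths.

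The hard part will be the partial-resampling argument in the second paragraph: I need a clean invariant, maintained before and after $\textsc{UpdateGraph}$, of the form ``conditioned on the internal randomness used at nodes above the cut point $\tree'$, the rest of the path is a fresh sample from the correct subtree distribution.'' Once such an invariant is formulated, verifying it reduces to checking that the per-node branching probabilities match on both sides of the cut, which is exactly guaranteed by the updated $\tree.\textsf{kde}$ data together with Lemma~\ref{lemma:correctness_update_incremental} and the fact that, above $\tree'$, no KDE estimate relevant to $\mathbf{x}_i$ has changed. The remaining bookkeeping around $\mathcal{B}_{\mathbf{x}_i}$ and the edge reweighting is routine, and the conclusion follows by invoking the initialisation analysis on $X \cup \mathbf{z}$.
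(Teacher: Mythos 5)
Your proposal is correct and follows essentially the same route as the paper's own proof: it establishes that running $\textsc{ConstructGraph}(X)$ followed by $\textsc{UpdateGraph}(G,\mathcal{T},\mathbf{z})$ is equivalent to running $\textsc{ConstructGraph}(X\cup\mathbf{z})$ — matching the tree structure and KDE state via Lemma~\ref{lemma:correctness_update_incremental}, arguing that untouched paths and resampled suffixes use the correct (unchanged or updated) branching probabilities, and checking the $\widehat{w}$ reweighting through $\mathcal{B}_{\mathbf{x}_j}$ — and then invokes the initialisation guarantee on $X\cup\mathbf{z}$. The "hard part" you flag is handled in the paper by exactly the observation you anticipate: a path avoids resampling only if no relevant KDE estimate changed at any node it visits, so its old branching choices coincide with those under the updated estimates.
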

\begin{proof}
    We   prove this statement by showing that  running $\textsc{ConstructGraph}(X)$     followed by $\textsc{UpdateGraph}(G, \mathcal{T}, \mathbf{z})$      is equivalent to running $\textsc{ConstructGraph}(X\cup\mathbf{z})$. 
    \begin{itemize}[leftmargin=1cm]
        \item  First, we prove that the structure of the tree $\tree$ is the same in both settings: when running $\textsc{ConstructGraph}(X)$, we ensure that the tree $\tree$ is a complete binary tree. Then, when inserting a data point $\mathbf{z}$ using the $\textsc{AddDataPoint}(\mathbf{z})$ procedure on Line~\ref{alg:update_CPS:paths_to_resample} of Algorithm~\ref{alg:update_CPS}, $\mathbf{z}$ is inserted appropriately (by the condition on Line~\ref{alg:update_CPS_tree:add_in_tree} of Algorithm~\ref{alg:update_CPS_tree}) such that the updated tree is also a complete binary tree. Therefore, the structure of the tree $\tree$ is identical in both settings. 
        \item  Next, on Line~\ref{alg:update_CPS_tree:update_kde}  of Algorithm~\ref{alg:update_CPS},  $\mathbf{z}$ is added to the relevant $\tree'.\textsf{kde}$ dynamic \textsf{KDE} data structures using the $\textsc{AddDataPoint}(\mathbf{z})$ procedure of Algorithm~\ref{alg:dynamic_kde}. This ensures that the stored data points $\tree'.\textsf{kde}.X$ at every internal node $\tree'$ are identical in both settings and, by the guarantees of the dynamic \textsf{KDE} data structures (Theorem~\ref{thm:incremental_dynamic_kde}), the query estimates $\tree'.\textsf{kde}.\hat{\mu}_{\mathbf{q}}$ for every internal node $\tree'$ and any $\mathbf{q} \in \tree'.\textsf{kde}.Q$ are the same in both settings.
        \item  For the new data point $\mathbf{z}$, we sample $L$ new neigbours (Lines~\ref{alg:update_CPS:start_sample_new_vertex}--\ref{alg:update_CPS:end_sample_new_vertex} of Algorithm~\ref{alg:update_CPS}). By the previous points, it holds that the tree $\tree$ is identical in both settings, and therefore the sampling procedure on Lines~\ref{alg:update_CPS:start_sample_new_vertex}--\ref{alg:update_CPS:end_sample_new_vertex} in Algorithm~\ref{alg:update_CPS} for the new data point $\mathbf{z}$ is equivalent to the sampling procedure on Lines~\ref{alg:line:construct_sparsifier:start_sampling}--\ref{alg:line:construct_sparsifier:reweight_factor} of Algorithm~\ref{alg:initialise_CPS} for the point $\mathbf{z}$ when executing $\textsc{Initialise}(X \cup \mathbf{z}, \varepsilon)$.
        \item  Then, for any data point $\mathbf{x}_i \in X$, let $(\mathbf{x}_i, \mathbf{x}_j) \in E$ be one of its sampled neighbours edge after running $\textsc{ConstructGraph}$. It holds that the scaling factor for the edge weight $w_G(\mathbf{x}_i, \mathbf{x}_j)$ is \[\widehat{w}(i,j) = \frac{L \cdot k(\mathbf{x}_i, \mathbf{x}_j) }{ \min\{\tree.\textsf{kde}.\hat{\mu}_{\mathbf{x}_i}, \tree.\textsf{kde}.\hat{\mu}_{\mathbf{x}_j}\}}.\] Notice that after running $\textsc{UpdateG}(\mathbf{z})$, the scaling factor $w_G(\mathbf{x}_i, \mathbf{x}_j)$ can change due to a change in $\min\{\tree.\textsf{kde}.\hat{\mu}_{\mathbf{x}_i}, \tree.\textsf{kde}.\hat{\mu}_{\mathbf{x}_j}\}$.
        Without loss of generality, let $\min\{\tree.\textsf{kde}.\hat{\mu}_{\mathbf{x}_i}, \tree.\textsf{kde}.\hat{\mu}_{\mathbf{x}_j}\} = \tree.\textsf{kde}.\hat{\mu}_{\mathbf{x}_j}$. By Line~\ref{alg:line:construct_sparsifier:add_min_neighbor} of Algorithm~\ref{alg:initialise_CPS},  in this case we have $\mathbf{x}_i \in \mathcal{B}_{\mathbf{x}_j}$.  We further distinguish between the  two cases:
        \begin{enumerate}
            \item  If $\tree.\textsf{kde}.\hat{\mu}_{\mathbf{x}_i}$ changes after running $\textsc{UpdateGraph}(\mathbf{z})$, then by the $\textsc{AddDataPointTree}(\tree, \mathbf{z})$ procedure  all the paths $\mathcal{P}_{\mathbf{x}_i, \ell}$ for $1 \leq \ell \leq L$ will be resampled and updated on Lines~\ref{alg:update_CPS:start_remove_edges}--\ref{alg:update_CPS:end_resample_edges}.
            \item  On the other hand, if $\tree.\textsf{kde}.\hat{\mu}_{\mathbf{x}_j}$ changes and $\tree.\textsf{kde}.\hat{\mu}_{\mathbf{x}_i}$ does not, then the paths $\mathcal{P}_{\mathbf{x}_j, \ell'}$ ending  at the leaf corresponding to  $\mathbf{x}_i$ are not necessarily resampled. In this case, the scaling factor is updated on Lines~\ref{alg:update_CPS:start_reweight_neighbors}--\ref{alg:update_CPS:end_reweight_neighbors}, and therefore $w_G(\mathbf{x}_i, \mathbf{x}_j)$ is appropriately rescaled.
        \end{enumerate}
        \item  Let ${\mathcal{P}^*_{\mathbf{x}_i, \ell}} \in \tree.\textsf{paths}$ be any sampling path that is not resampled, i.e., ${\mathcal{P}^*_{\mathbf{x}_i, \ell}} \notin \mathcal{A}$. This implies that the \textsf{KDE} estimate of $\tree'.\textsf{kde}.\hat{\mu}_{\mathbf{x}_i}$ does not change at any internal $\tree'$ where $\mathcal{P}_{\mathbf{x}_i, \ell}$ is stored, and therefore the sampling procedure for $\mathcal{P}^*_{\mathbf{x}_i, \ell}$ is identical in both settings.
        \item  Finally, let $\mathcal{P}_{\mathbf{x}_i, \ell} \in \mathcal{A}$ be a sampling path that is resampled, and   $(\mathbf{x}_i, \mathbf{x}_j)$ be the sampled edge (contribution) corresponding to $\mathcal{P}_{\mathbf{x}_i, \ell}$. Before resampling the path $\mathcal{P}_{\mathbf{x}_i, \ell}$ starting from $\tree'$, on Lines~\ref{alg:update_CPS:start_remove_edges}--\ref{alg:update_CPS:end_remove_edges} the algorithm removes the stored paths $\mathcal{P}_{\mathbf{x}_i, \ell}$ and query points $\mathbf{x}_i$ from every internal node $\tree^*$ below $\tree'$, and removes the weight contribution to $w_G(\mathbf{x}_i, \mathbf{x}_j)$ from $\mathcal{P}_{\mathbf{x}_i, \ell}$. Then, on Lines~\ref{alg:update_CPS:start_resample_edge}--\ref{alg:update_CPS:end_resample_edges}, we resample a new edge $(\mathbf{x}_i, \mathbf{x}^*_j)$, in an equivalent manner as sampling a new edge when running Lines~\ref{alg:line:construct_sparsifier:start_sampling}--\ref{alg:line:construct_sparsifier:reweight_factor} of Algorithm~\ref{alg:initialise_CPS}.  Therefore, the resampling procedure for the path $\mathcal{P}_{\mathbf{x}_i, \ell}$ is identical to the sampling procedure for $\mathcal{P}_{\mathbf{x}_i, \ell}$ when running $\textsc{Initialise}(X \cup \mathbf{z}, \varepsilon)$, because the resampling procedure uses the updated \textsf{KDE} estimates at each internal node $\tree'$, which are identical to the \textsf{KDE} estimates that would be computed in $\textsc{Initialise}(X \cup \mathbf{z}, \varepsilon)$.
    \end{itemize}
     Combining everything together proves the lemma. 
\end{proof}

Finally, we are ready to prove the second statement of Theorem~\ref{thm:dynamic_cps}.

\begin{proof}[Proof of the Second Statement of Theorem~\ref{thm:dynamic_cps}]
Lemma~\ref{lem:update_time_dynamic_CPS} shows the time complexity of $\textsc{UpdateGraph}(G,\mathcal{T},\mathbf{z})$, and Lemma~\ref{lemma:correctnes_update_CPS} shows the correctness of our updated procedures. Combining these two facts together proves the second statement of Theorem~\ref{thm:dynamic_cps}.
\end{proof}

\section{Additional Experimental Results} \label{app:experiments}

In this section, we provide some more details about our experimental setup and give some additional experimental results.
Table~\ref{tab:full-dataset-info} provides additional information about all of the datasets used in our experiments.
\begin{table}[h]
  \caption{Datasets used for experimental evaluation. $n$ is the number of data points, $d$ is the dimension, and $\sigma$ is the parameter we use in the Gaussian kernel. \newline}
  \label{tab:full-dataset-info}
  \centering
  \resizebox{\textwidth}{!}{\begin{tabular}{lllllll}
    \toprule 
    \textbf{Dataset} & $\mathbf{n}$ & $\mathbf{d}$ & $\sigma$ & \textbf{License} & \textbf{Reference} & \textbf{Description} \\
    \midrule
    blobs &  20,000 &  10 & 0.01 & BSD & \citep{scikit-learn} & \begin{tabular}{@{}c@{}}Synthetic clusters from a  mixture \\ of Gaussian distributions.\end{tabular}\\
\midrule 
    cifar10 &  50,000 &  2,048 & 0.0001 & - & \citep{resnet,cifar} & \begin{tabular}{@{}c@{}}ResNet-50 embeddings of images.\end{tabular}\\
\midrule 
    mnist &  70,000 &  728 & 0.000001 & CC BY-SA 3.0 & \citep{lecun_mnist_1998} & \begin{tabular}{@{}c@{}}Images of handwritten digits.\end{tabular}\\
\midrule 
    shuttle &  58,000 &  9 & 0.01 & CC BY 4.0 & \citep{shuttle_dataset} & \begin{tabular}{@{}c@{}}Numerical data from NASA \\ space shuttle sensors.\end{tabular}\\
\midrule 
    aloi &  108,000 &  128 & 0.01 & - & \citep{aloi_dataset} & \begin{tabular}{@{}c@{}}Images of objects under a variety \\ of lighting conditions.\end{tabular}\\
\midrule 
    msd &  515,345 &  90 & 0.000001 & CC BY 4.0 & \citep{msd-main-ref} & \begin{tabular}{@{}c@{}}Numerical and categorical \\ features of songs.\end{tabular}\\
\midrule 
    covtype &  581,012 & 54 &  0.000005 & CC BY 4.0 & \citep{covtype_dataset} & \begin{tabular}{@{}c@{}}Cartographic features used to predict \\ forest cover type.\end{tabular}\\
\midrule 
    glove &  1,193,514 &  100 & 0.1 & PDDL 1.0 & \citep{glove_dataset} & \begin{tabular}{@{}c@{}}Word embedding vectors.\end{tabular}\\
\midrule 
    census &  2,458,285 &  68 & 0.01 & CC BY 4.0 & \citep{us_census_data} & \begin{tabular}{@{}c@{}}Categorical and numerical data from \\ the 1990 US census.\end{tabular}\\
    \bottomrule
  \end{tabular}}
\end{table}

\subsection{Dynamic \textsf{KDE} Experiments}
Tables~\ref{tab:dynamic_kde_app}~and~\ref{tab:dynamic_kde_exact} show the experimental evaluation of the dynamic \textsf{KDE} algorithms on several additional datasets.
The results demonstrate that our algorithm scales better to larger datasets than the baseline algorithms. Figures~\ref{fig:kde_all_errors}~and~\ref{fig:kde_all_times} show the relative errors and running times for all iterations, datasets, and algorithms for the dynamic \textsf{KDE} experiments.

\newcommand{\kdeappcaption}{Experimental results for dynamic \textsf{KDE}. For each dataset, the shaded results correspond to the algorithm with the lowest total running time.}
\begin{table}[h]
  \caption{\kdeappcaption \newline\label{tab:dynamic_kde_app}}
  \centering
  
  \begin{tabular}{ccccccc}
    \toprule
    & \multicolumn{2}{c}{\textsc{CKNS}} & \multicolumn{2}{c}{\textsc{DynamicRS}} & \multicolumn{2}{c}{\textsc{Our Algorithm}}\\ 
    \cmidrule(lr){2-3} \cmidrule(lr){4-5} \cmidrule(lr){6-7}
    dataset & Time (s) & Err & Time (s) & Err & Time (s) & Err\\ 
    \midrule
    shuttle & $ 32.9{\scriptstyle \pm  2.1}$ & $ 0.146{\scriptstyle \pm  0.002}$ & \cellcolor{gray!25}$ 0.8{\scriptstyle \pm  0.0}$ & \cellcolor{gray!25}$ 0.078{\scriptstyle \pm  0.005}$ & $ 10.9{\scriptstyle \pm  0.3}$ & $ 0.159{\scriptstyle \pm  0.024}$ \\
    aloi & $ 619.0{\scriptstyle \pm  10.7}$ & $ 0.050{\scriptstyle \pm  0.006}$ & \cellcolor{gray!25}$ 19.7{\scriptstyle \pm  0.3}$ & \cellcolor{gray!25}$ 0.010{\scriptstyle \pm  0.003}$ & $ 46.9{\scriptstyle \pm  0.7}$ & $ 0.060{\scriptstyle \pm  0.021}$ \\
    msd & $ 14,360.0{\scriptstyle \pm  0.0}$ & $ 0.385{\scriptstyle \pm  0.000}$ & $ 1,887.8{\scriptstyle \pm  0.0}$ & $ 5.430{\scriptstyle \pm  0.000}$ & \cellcolor{gray!25}$ 306.4{\scriptstyle \pm  0.0}$ & \cellcolor{gray!25}$ 0.388{\scriptstyle \pm  0.000}$ \\
    covtype & $ 5,650.3{\scriptstyle \pm  109.0}$ & $ 0.159{\scriptstyle \pm  0.002}$ & $ 309.2{\scriptstyle \pm  2.4}$ & $ 0.018{\scriptstyle \pm  0.003}$ & \cellcolor{gray!25}$ 151.7{\scriptstyle \pm  4.5}$ & \cellcolor{gray!25}$ 0.196{\scriptstyle \pm  0.017}$ \\
    glove & $ 2,640.8{\scriptstyle \pm  1677.7}$ & $ 0.221{\scriptstyle \pm  0.229}$ & $ 1,038.6{\scriptstyle \pm  26.5}$ & $ 0.004{\scriptstyle \pm  0.005}$ & \cellcolor{gray!25}$ 445.6{\scriptstyle \pm  214.6}$ & \cellcolor{gray!25}$ 0.296{\scriptstyle \pm  0.469}$ \\
    census & $ 10,471.5{\scriptstyle \pm  160.6}$ & $ 0.080{\scriptstyle \pm  0.000}$ & $ 3,424.8{\scriptstyle \pm  5.2}$ & $ 0.005{\scriptstyle \pm  0.001}$ & \cellcolor{gray!25}$ 836.5{\scriptstyle \pm  44.6}$ & \cellcolor{gray!25}$ 0.102{\scriptstyle \pm  0.021}$ \\
    \bottomrule
  \end{tabular}
\end{table}

\newcommand{\kdeexactcaption}{Running time for dynamic \textsf{KDE} with the exact algorithm.}
\begin{table}[t]
  \caption{\kdeexactcaption \newline \label{tab:dynamic_kde_exact}}
  \centering
  \begin{tabular}{ll}\toprule
    Dataset & Running Time \\
 \midrule 
    shuttle & $ 4.1{\scriptstyle \pm  0.1}$ \\
    aloi & $ 164.5{\scriptstyle \pm  13.6}$ \\
    msd & $ 2,715.6{\scriptstyle \pm  0.0}$ \\
    covtype & $ 2,349.8{\scriptstyle \pm  101.2}$ \\
    glove & $ 5,251.7{\scriptstyle \pm  0.0}$ \\
    census & $ 16,202.6{\scriptstyle \pm  154.6}$ \\
    \bottomrule
  \end{tabular}
\end{table}

\begin{figure}[ht]
    \centering
    \begin{subfigure}[b]{0.38\textwidth}
        \begin{center}\includegraphics[width=\textwidth]{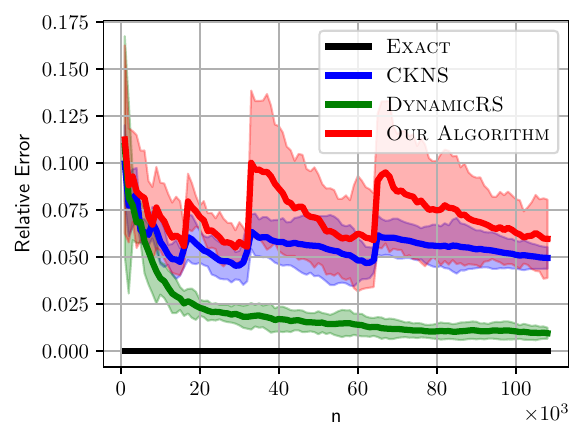}
        \caption{ALOI Relative Error}
        \end{center}
    \end{subfigure}
    \hspace{2cm}
    \begin{subfigure}[b]{0.38\textwidth}
        \begin{center}\includegraphics[width=\textwidth]{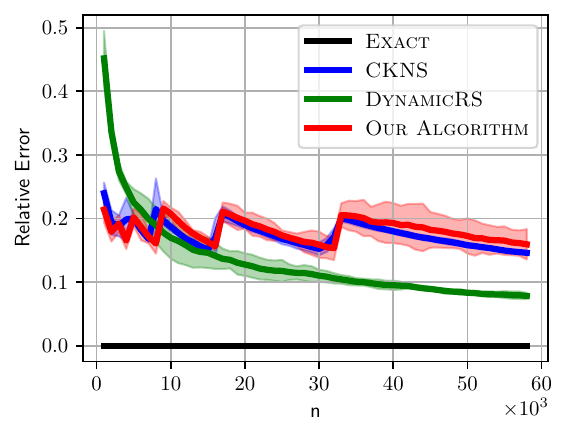}
        \caption{SHUTTLE Relative Error}
        \end{center}
    \end{subfigure}
    \begin{subfigure}[b]{0.38\textwidth}
\begin{center}\includegraphics[width=\textwidth]{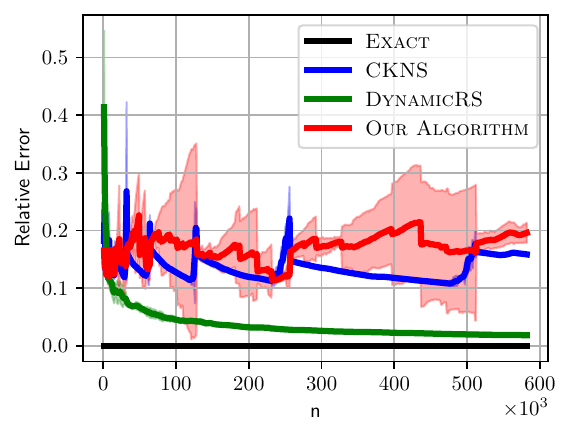}
        \caption{COVTYPE Relative Error}
        \end{center}
    \end{subfigure}
    \hspace{2cm}
    \begin{subfigure}[b]{0.38\textwidth}
\begin{center}\includegraphics[width=\textwidth]{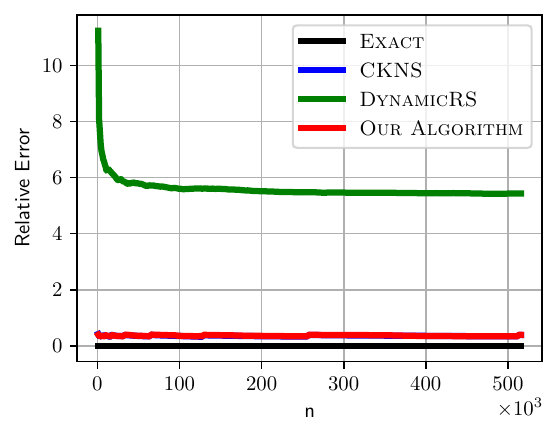}
        \caption{MSD Relative Error}
        \end{center}
    \end{subfigure}
    \begin{subfigure}[b]{0.38\textwidth}
        \begin{center}\includegraphics[width=\textwidth]{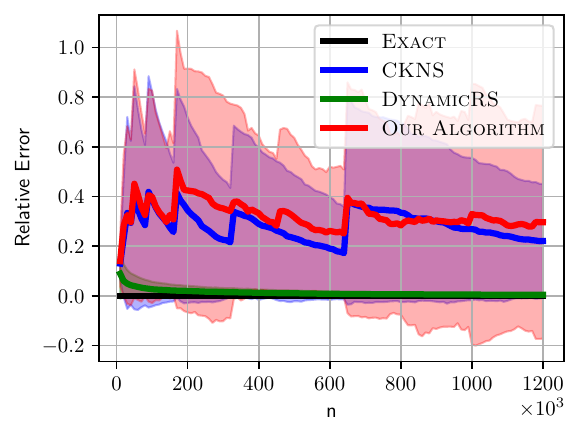}
        \caption{GLOVE Relative Error}
        \end{center}
    \end{subfigure}
    \hspace{2cm}
    \begin{subfigure}[b]{0.38\textwidth}
        \begin{center}\includegraphics[width=\textwidth]{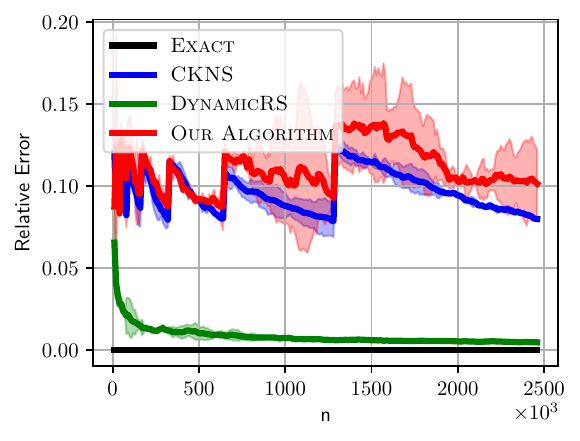}
        \caption{CENSUS Relative Error}
        \end{center}
    \end{subfigure}
    \caption{Relative errors for all datasets.
    \label{fig:kde_all_errors}}
\end{figure}

\begin{figure}[ht]
    \centering
    \begin{subfigure}[b]{0.38\textwidth}
        \begin{center}\includegraphics[width=\textwidth]{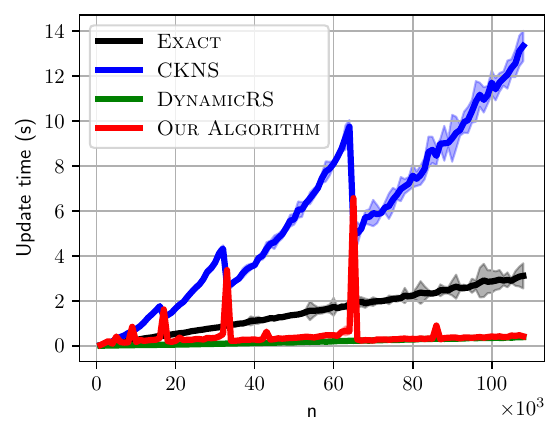}
        \caption{ALOI Update Time}
        \end{center}
    \end{subfigure}
    \hspace{2cm}
    \begin{subfigure}[b]{0.38\textwidth}
        \begin{center}\includegraphics[width=\textwidth]{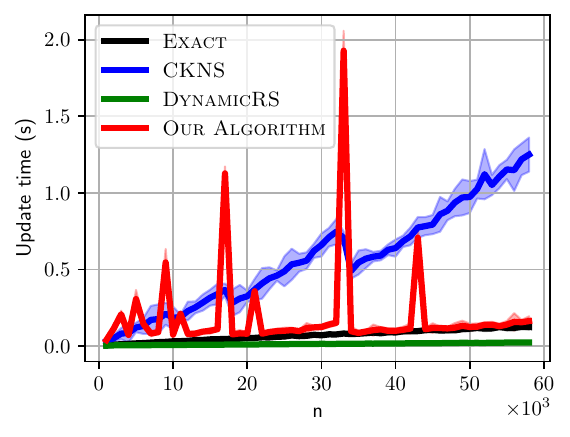}
        \caption{SHUTTLE Update Time}
        \end{center}
    \end{subfigure}
    \begin{subfigure}[b]{0.38\textwidth}
        \begin{center}\includegraphics[width=\textwidth]{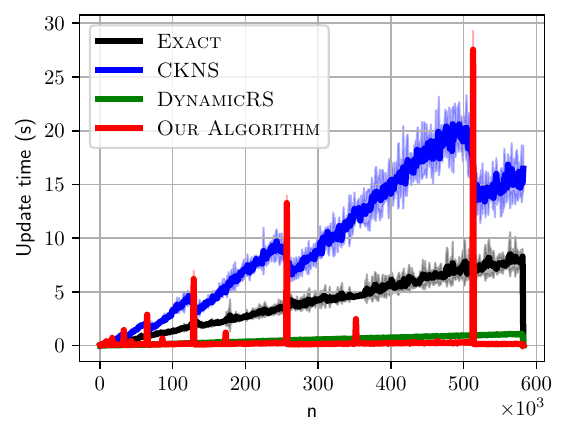}
        \caption{COVTYPE Update Time}
        \end{center}
    \end{subfigure}
    \hspace{2cm}
    \begin{subfigure}[b]{0.38\textwidth}
        \begin{center}\includegraphics[width=\textwidth]{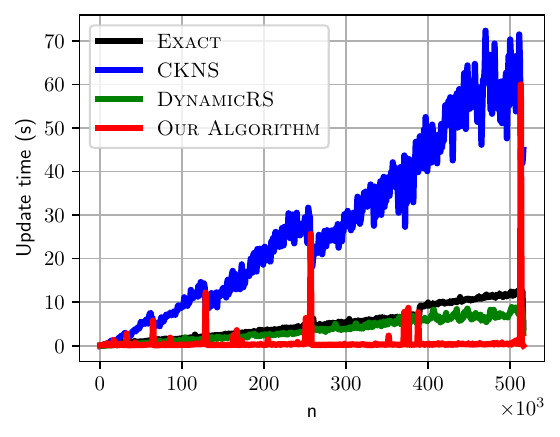}
        \caption{MSD Update Time}
        \end{center}
    \end{subfigure}
    \begin{subfigure}[b]{0.38\textwidth}
        \begin{center}\includegraphics[width=\textwidth]{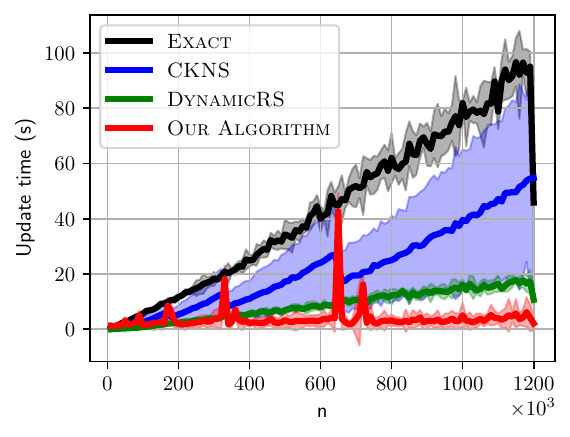}
        \caption{GLOVE Update Time}
        \end{center}
    \end{subfigure}
    \hspace{2cm}
    \begin{subfigure}[b]{0.38\textwidth}
        \begin{center}\includegraphics[width=\textwidth]{figures/census_update_time.pdf}
        \caption{CENSUS Update Time}
        \end{center}
    \end{subfigure}
    \caption{Running times for all data sets.
    \label{fig:kde_all_times}}
\end{figure}

\FloatBarrier
\subsection{Plots for Dynamic Similarity Graph Experiments}
Table~\ref{tab:dynamic_sg_app} shows the results of the experiments for the dynamic similarity graph, evaluated with the Adjusted Rand Index~(ARI)~\citep{rand1971objective}.

\newcommand{\simcaptionapp}{ARI values for the dynamic similarity graph experiments.}
\begin{table}[h]
  \caption{\simcaptionapp \newline\label{tab:dynamic_sg_app}}
  \centering
  \resizebox{0.8\textwidth}{!}{
  \begin{tabular}{lllllll}
    \toprule
    & \multicolumn{2}{c}{\textsc{FullyConnected}} & \multicolumn{2}{c}{\textsc{kNN}} & \multicolumn{2}{c}{\textsc{Our Algorithm}}\\ 
    \cmidrule(lr){2-3} \cmidrule(lr){4-5} \cmidrule(lr){6-7}
    dataset & Time (s) & ARI & Time (s) & ARI & Time (s) & ARI\\ 
    \midrule
    blobs & $ 72.8{\scriptstyle \pm  2.2}$ & $ 1.000{\scriptstyle \pm  0.000}$ & $ 383.6{\scriptstyle \pm  3.9}$ & $ 0.797{\scriptstyle \pm  0.287}$ & \cellcolor{gray!25}$ 21.2{\scriptstyle \pm  0.8}$ & \cellcolor{gray!25}$ 1.000{\scriptstyle \pm  0.000}$ \\
    cifar10 & $ 19,158.2{\scriptstyle \pm  231.6}$ & $ 0.000{\scriptstyle \pm  0.000}$ & $ 3,503.0{\scriptstyle \pm  490.6}$ & $ 0.098{\scriptstyle \pm  0.001}$ & \cellcolor{gray!25}$ 1,403.5{\scriptstyle \pm  152.4}$ & \cellcolor{gray!25}$ 0.221{\scriptstyle \pm  0.013}$ \\
    mnist & \cellcolor{gray!25}$ 1,328.3{\scriptstyle \pm  159.5}$ & \cellcolor{gray!25}$ 0.149{\scriptstyle \pm  0.000}$ & $ 5,796.3{\scriptstyle \pm  234.3}$ & $ 0.673{\scriptstyle \pm  0.001}$ & $ 1,470.3{\scriptstyle \pm  77.9}$ & $ 0.238{\scriptstyle \pm  0.011}$ \\
    \bottomrule
  \end{tabular}}
\end{table}

\end{document}